\documentclass[11pt]{article}
\usepackage{amsmath,amssymb,amsfonts,amsthm,latexsym,epsfig,hyperref,array,enumerate,geometry,graphicx,url,tikz}
\usepackage[all]{xy}
\newtheorem{theorem}{\bf Theorem}
\newtheorem{lemma}[theorem]{\bf Lemma}
\newtheorem{proposition}[theorem]{\bf Proposition}
\newtheorem{corollary}[theorem]{\bf Corollary}
\newtheorem{remark}[theorem]{\bf Remark}
\newtheorem{definition}[theorem]{\bf Definition}
\newtheorem{example}[theorem]{\bf Example}
\numberwithin{equation}{section}

\usepackage [english]{babel}
\usepackage [autostyle, english = american]{csquotes}
\MakeOuterQuote{"}

\setlength{\parindent}{0pt}
\setlength{\parskip}{5pt plus 2pt minus 1 pt}
\topmargin  -20mm
\evensidemargin 8mm
\oddsidemargin  2mm
\textwidth  150mm
\textheight 240mm

\begin{document}
\title{On the linear structures of Balanced functions and quadratic APN functions}

\author{A. Musukwa and  M. Sala}

\date{}
\maketitle  
\begin{center}{University of Trento, Via Sommarive, 14, 38123 Povo, Trento, Italy\\{\{augustinemusukwa, maxsalacodes\}@gmail.com}}\end{center}            
\begin{abstract}
The set of linear structures of most known balanced Boolean functions is non-trivial. In this paper, some balanced Boolean functions whose set of linear structures is trivial are constructed. We show that any APN function in even dimension must have a component whose set of linear structures is trivial. We determine a general form for the number of bent components in quadratic APN functions in even dimension and some bounds on the number are produced. We also count bent components in any quadratic power functions.
\end{abstract}

\noindent {\bf Keywords:} Boolean functions; linear space; APN functions; bent functions\\[.2cm]
\noindent {\bf MSC 2010:} 06E30, 94A60, 14G50

\section{Introduction}
Balancedness is an important property which is sometimes required in Boolean functions since it is often desirable for cryptographic primitives to be unbiased in output. By recognising such importance, a lot of papers have been written on construction of balanced functions with cryptographic properties (see for example \cite{Chak,Cus1,Kho,Seb}). One cryptographic property which is mostly considered in constructing such functions is nonlinearity. However, in this study we are interested in something different. We would like to consider the set of linear structures of balanced functions. In this paper, the set of linear structures of a Boolean function is called linear space. We believe that most known balanced functions do have non-trivial linear space. A typical example of a balanced function with non-trivial linear space is $g(x_1,...,x_{n-1})+x_n$, where $n$ is positive integer. It is a well-known balanced function and its linear space clearly includes the nonzero vector $(0,...,0,1)$. In this paper we construct some balanced functions whose linear spaces are trivial and in some cases we give a lower bound on their nonlinearities. 

The nonlinearity and differential uniformity of a vectorial Boolean function (a mapping from $\mathbb{F}_2^n$ to $\mathbb{F}_2^n$)  are properties which are used to measure the resistance of a function towards linear and differential attacks, respectively. APN and AB functions provide optimal resistance against the said attacks. This gives a justification as to why there are many studies regarding APN and also AB functions. In this paper, we show that the linear spaces of some components of an APN function in even dimension must be trivial. In particular, we show that the dimension of the linear space of any component in APN permutation is at most 1. Some results on properties of quadratic APN functions are studied. It is well-known that in any quadratic APN functions there are some bent components. So we provide a general form of the number of bent components in quadratic APN. From \cite{Pott}, we know that there are at most $2^n-2^{n/2}$ bent components in any function from $\mathbb{F}_2^n$ to itself. This motivated the authors to count bent components for any quadratic power function and so a comparison with quadratic power APN is made. 

This paper is organised as follows. In Section \ref{preliminaries}, known results are reported. In Section \ref{balanced-section}, some balanced functions are constructed and in Section \ref{linear structures-section} we provide conditions which help to determine whether the balanced functions constructed in Section \ref{balanced-section} have trivial linear space. In Section \ref{APN functions-section}, we show that there is component in any APN function in even dimension whose linear space is trivial and we also present a general form for the number of bent components in any quadratic APN functions. In Section \ref{power functions-section}, we count bent components in any quadratic power functions.  

\section{Preliminaries}\label{preliminaries}
In this section, some definitions and well-known results are reported  and for details, the reader is referred to \cite{Ber,Car1,Chee,Mac,Wu}.

The field of two elements, $0$ and $1$, is denoted by $\mathbb{F}$.  A vector in the vector space $\mathbb{F}^n$ is denoted by $v$. A vector whose $i$th coordinate is $1$ and $0$ elsewhere is denoted by $e_i$. We use ordinary addition $+$ instead of XOR $\oplus$. For any set $A$, its size is denoted by $|A|$.

A {\em Boolean function (B.f.)} is any function $f$ from $\mathbb{F}^n$ to $\mathbb{F}$ and a {\em vectorial Boolean function (v.B.f.)} is any function $F$ from $\mathbb{F}^n$ to $\mathbb{F}^m$, with $n,m\in\mathbb{N}$. However, in this paper we consider v.B.f.'s from $\mathbb{F}^n$ to $\mathbb{F}^n$. The B.f.'s in algebraic normal form, which is the $n$-variable polynomial representation over $\mathbb{F}$, is given by \(f(x_1,...,x_n)=\sum_{I\subseteq \mathcal{P}}a_I\left(\prod_{i\in I}x_i\right),\) where $\mathcal{P}=\{1,...,n\}$ and $a_I\in \mathbb{F}$. The {\em algebraic degree} or simply {\em degree} of $f$ (denoted by $\deg(f)$) is  $\max_{I\subseteq \mathcal{P}}\{|I|\mid a_I\ne 0\}$. The set of all B.f.'s on $n$ variables is denoted by $B_n$.

A B.f. $f$ is {\em linear} if $\deg(f)=1$ and $f(0)=0$, {\em affine} if $\deg(f)\le 1$, {\em quadratic} if $\deg(f)=2$ and {\em cubic} if $\deg(f)=3$. The set of all affine functions is denoted by $A_n$. Given a v.B.f. $F=(f_1,...,f_n)$, the functions  $f_1,...,f_n$ are called {\em coordinate functions} and the functions $\lambda\cdot F$, with $\lambda\neq 0\in \mathbb{F}^n$ and "$\cdot$" denoting dot product, are called a {\em components} of $F$ and we denote $\lambda\cdot F$ by $F_\lambda$. A v.B.f. $F$ is said to be a {\em permutation} if and only if all its components are balanced. The degree of a v.B.f. $F$ is given by $\deg(F)=\max_{\lambda\neq 0\in \mathbb{F}^n}\{\deg(F_\lambda)\}$. If all components of a v.B.f. $F$ are quadratic, we say that $F$ is {\em pure quadratic}. 

 The {\em Hamming weight} of $f$ is given by $\mathrm{w}(f)=|\{x\in \mathbb{F}^n\mid f(x)=1\}|$. A function $f$ is {\em balanced} if $\mathrm{w}(f)=2^{n-1}$. The {\em distance} between $f$ and $g$ is $d(f,g)=\mathrm{w}(f+g)$ and the {\em nonlinearity} of $f$  is $\mathcal{N}(f)=\min_{\omega \in A_n}d(f,\omega)$.
 
 For $m<n$, if $f$ is in $B_n$ but depends only on $m$ variables, then its restriction to these $m$ variables is denoted by $f_{\restriction\mathbb{F}^m}$. Clearly, $f_{\restriction\mathbb{F}^m}\in B_m$. 
 
The next result can be found in \cite{Mac} on page 372. 
 \begin{proposition}\label{balanced-splitting}
 	If $g(x_1,...,x_{n-1})$ is an arbitrary B.f.on $n-1$ variables, with a positive integer $n>1$, then \(f=g(x_1,...,x_{n-1})+x_n\) is balanced.
 \end{proposition}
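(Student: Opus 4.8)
The plan is to count directly the number of inputs on which $f$ takes the value $1$, exploiting the fact that the last variable $x_n$ appears linearly and affects the output only by a complementation. First I would partition the domain $\mathbb{F}^n$ into $2^{n-1}$ pairs, grouping together the two points $(v,0)$ and $(v,1)$ that share the same first $n-1$ coordinates $v=(x_1,\ldots,x_{n-1})\in\mathbb{F}^{n-1}$; these pairs are disjoint and exhaust $\mathbb{F}^n$, so the weight $\mathrm{w}(f)$ is the sum of the contributions of the individual pairs.

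The key step is to observe that within each such pair the two values of $f$ are complementary. Since $g$ does not depend on $x_n$, we have $f(v,0)=g(v)$ and $f(v,1)=g(v)+1$, and exactly one of the two field elements $g(v)$ and $g(v)+1$ equals $1$, whatever the value of $g(v)$ is. Hence each of the $2^{n-1}$ pairs contributes exactly one input to the set $\{x\in\mathbb{F}^n\mid f(x)=1\}$.

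Summing over all pairs then yields $\mathrm{w}(f)=2^{n-1}$, which is precisely the definition of balancedness. I do not expect any genuine obstacle here: the argument is an elementary partition-and-count, and the only point that actually does the work is the independence of $g$ from $x_n$, which forces the two members of each pair to differ exactly by the additive constant contributed by $x_n$. Alternatively, one could record the same fact via the Walsh transform, noting that balancedness is equivalent to $\sum_{x\in\mathbb{F}^n}(-1)^{f(x)}=0$ and factoring the inner sum over $x_n\in\mathbb{F}$ as $\sum_{x_n}(-1)^{g(v)+x_n}=0$ for every $v$; but the direct pairing argument is shorter and entirely self-contained, so that is the route I would take.
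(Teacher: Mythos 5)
Your proof is correct. The paper does not actually supply a proof of this proposition---it is stated as a known fact with a reference to MacWilliams and Sloane---but your pairing argument (grouping $(v,0)$ with $(v,1)$ and observing that $f$ takes complementary values on each pair because $g$ is independent of $x_n$) is precisely the standard elementary proof, and the Walsh-transform variant you sketch is an equally valid alternative consistent with the identity $\mathcal{F}(f)=0$ used throughout the paper.
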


The {\em Walsh transform} of a B.f. $f$ is defined as the function $W_f$ from $\mathbb{F}^n$ to $\mathbb{Z}$: \[W_f(a)=\sum_{x\in\mathbb{F}^n}(-1)^{f(x)+a\cdot x}\,,\] for all $a \in \mathbb{F}^n$. The set $\{W_f(a)\mid a\in\mathbb{F}^n\}$ is called the {\em Walsh spectrum} of a B.f. $f$. The {\em Walsh spectrum} of a v.B.f $F$ is given by $\{W_{F_\lambda}(a)\mid a\in\mathbb{F}^n,\lambda\neq 0\in\mathbb{F}^n\}$.
Define $\mathcal{F}(f)$ as
\[\mathcal{F}(f)=W_f(0)=\sum_{x\in \mathbb{F}^n}(-1)^{f(x)}=2^n-2\mathrm{w}(f).\]
Note that $f$ is balanced if and only if $\mathcal{F}(f)=0$.

The nonlinearity of a function $f$ can be written in terms of Walsh transform as \[\mathcal{N}(f)=2^{n-1}-\frac{1}{2}\max\limits_{a\in \mathbb{F}^n}|W_f(a)|.\] The nonlinearity of a v.B.f $F$ is defined as \[\mathcal{N}(F)=\min_{\lambda\neq 0\in\mathbb{F}^n}\mathcal{N}(F_\lambda).\]  It well-known that for every B.f. $f\in B_n$, with $n$ even, \(\mathcal{N}(f)\leq 2^{n-1}-2^{\frac{n}{2}-1}.\) A function $f\in B_n$ is said to be {\em bent} if $\mathcal{N}(f)=2^{n-1}-2^{\frac{n}{2}-1}$ and this can happen only in even dimension. Note that the lowest possible value for $W_f(a)$, with $a\in\mathbb{F}^n$, is $2^{\frac{n}{2}}$ and this bound is achieved only for bent functions. 

For $n$ odd, a B.f. $f$ is called {\em semi-bent} if $\mathcal{N}(f)=2^{n-1}-2^{\frac{n-1}{2}}$. In other words, $f$ is semi-bent if, for all $a\in\mathbb{F}^n$, $W_f(a)\in\{0,\pm 2^{\frac{n+1}{2}}\}$. Semi-bent functions are sometimes defined in even dimension. For $n$ even, we say a function $f$ is semi-bent if, for all $a\in\mathbb{F}^n$, $W_f(a)\in\{0,\pm 2^{\frac{n+2}{2}}\}$. A v.B.f. $F$ in odd dimension is {\em almost-bent (AB)} if all its components are  semi-bent.

A B.f. is called {\em plateaued} if its Walsh transform takes at most three values: $0$ and $\pm\mu$ where $\mu$ is some positive integer, called the {\em amplitude} of the plateaued function. So clearly bent and semi-bent functions are plateaued. 

We define the {\em (first-order) derivative} of $f$ at $a$ by $D_af(x)=f(x+a)+f(x)$. A derivative of $f$ at $0$ is the trivial derivative and at any other point, $a\neq 0\in\mathbb{F}^n$, we simply say a  derivative. An element $a\in\mathbb{F}^n$ is called a {\em linear structure} of $f$ if $D_af$ is constant, and we denote the set of all linear structures of $f$ by $V(f)$. The set $V(f)$ is called the {\em linear space} of $f$. We say the linear space  is {\em trivial} if it contains zero vector only and {\em non-trial} otherwise.

\begin{theorem}\label{bent-thm}
	A B.f. $f$ on $n$ variables is bent if and only if $D_af$ is balanced for any nonzero $a\in\mathbb{F}^n$.
\end{theorem}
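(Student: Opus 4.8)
The plan is to translate both the hypothesis and the conclusion into statements about the quantity $\mathcal{F}(D_af)=\sum_{x\in\mathbb{F}^n}(-1)^{f(x+a)+f(x)}$, and then to link this autocorrelation to the Walsh transform. Since $\mathcal{F}(g)=0$ exactly when $g$ is balanced, the derivative $D_af$ is balanced if and only if $\mathcal{F}(D_af)=0$. On the other side, the definition of bentness given above says that $f$ is bent precisely when $\mathcal{N}(f)=2^{n-1}-2^{n/2-1}$, i.e. when $\max_{b}|W_f(b)|=2^{n/2}$; combined with Parseval's identity $\sum_{b\in\mathbb{F}^n}W_f(b)^2=2^{2n}$ this forces $W_f(b)^2=2^n$ for \emph{every} $b$, since $2^n$ summands each at most $2^n$ can sum to $2^{2n}$ only if each equals $2^n$. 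Thus I would first record the equivalence: $f$ is bent $\iff W_f(b)^2=2^n$ for all $b\in\mathbb{F}^n$.

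The heart of the argument is a single identity obtained by squaring the Walsh transform. Expanding
\[
W_f(b)^2=\sum_{x,y\in\mathbb{F}^n}(-1)^{f(x)+f(y)+b\cdot(x+y)}
\]
and substituting $y=x+a$ collects the terms into
\[
W_f(b)^2=\sum_{a\in\mathbb{F}^n}(-1)^{b\cdot a}\,\mathcal{F}(D_af).
\]
This is a Walsh transform (in the variable $b$) of the autocorrelation function $a\mapsto\mathcal{F}(D_af)$, so it can be inverted: multiplying by $(-1)^{b\cdot c}$, summing over $b$, and using $\sum_{b}(-1)^{b\cdot(a+c)}=2^n$ if $a=c$ and $0$ otherwise, gives the dual relation
\[
2^n\,\mathcal{F}(D_cf)=\sum_{b\in\mathbb{F}^n}(-1)^{b\cdot c}\,W_f(b)^2.
\]

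With these two identities the theorem is almost immediate. For the ``if'' direction, assume $D_af$ is balanced for every nonzero $a$; then $\mathcal{F}(D_af)=0$ for $a\neq0$ while $\mathcal{F}(D_0f)=2^n$, so the first identity collapses to $W_f(b)^2=2^n$ for all $b$, and by the equivalence recorded above $f$ is bent (in particular $n$ is forced to be even, as $W_f(b)$ is an integer). For the ``only if'' direction, assume $f$ is bent, so $W_f(b)^2=2^n$ for all $b$; plugging this into the dual relation yields $2^n\,\mathcal{F}(D_cf)=2^n\sum_{b}(-1)^{b\cdot c}$, and the right-hand sum vanishes for every nonzero $c$, whence $\mathcal{F}(D_cf)=0$ and $D_cf$ is balanced.

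The only genuinely non-routine step is the equivalence in the first paragraph, namely upgrading the definitional statement ``$\max_b|W_f(b)|=2^{n/2}$'' to ``$|W_f(b)|=2^{n/2}$ for all $b$''. This is exactly where Parseval's identity does the work, turning a bound on the maximum into a rigidity statement forcing every Walsh coefficient to the extreme value; once this is in hand, the combinatorial expansion of $W_f(b)^2$ and its inversion are purely formal.
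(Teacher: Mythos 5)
Your proof is correct. Note that the paper does not prove this statement at all: Theorem \ref{bent-thm} appears in the Preliminaries section as one of the ``well-known results'' reported without proof (with the reader referred to the cited references), so there is no in-paper argument to compare against. What you have written is the classical proof: the identity $W_f(b)^2=\sum_{a}(-1)^{b\cdot a}\mathcal{F}(D_af)$, its inversion, and Parseval's identity to upgrade $\max_b|W_f(b)|=2^{n/2}$ to $|W_f(b)|=2^{n/2}$ for all $b$. All three steps check out, and you correctly identify the Parseval step as the only non-formal ingredient (it is the one fact you use that the paper's preliminaries do not state explicitly, though it is standard). The remark that the ``if'' direction forces $n$ to be even is a nice consistency check with the paper's observation that bent functions exist only in even dimension.
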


Two B.f.'s $f,g:\mathbb{F}^n\rightarrow \mathbb{F}$ are said to be {\em affine equivalent} if there exist an affinity $\varphi:\mathbb{F}^n\rightarrow \mathbb{F}^n$ such that $f=g\circ \varphi$. This relation is denoted by $\sim_A $ and written as $f\sim_A g$. Observe that the relation $\sim_A$ is equivalence relation. For $i \in \{1,...,n\}$ and $l\in A_{n-1}$, a {\em basic affinity} of $\mathbb{F}^n$ maps $x_i\mapsto x_i+l(x_1,...,x_{i-1},x_{i+1},...,x_n)$ and fixes all other coordinates.  
	
\begin{proposition}\label{equivalence-properties}
	Let $f,g\in B_n$ be such that $f\sim_A g$. Then $\mathrm{w}(f)=\mathrm{w}(g)$ and so $f$ is balanced if and only if $g$ is balanced.
\end{proposition}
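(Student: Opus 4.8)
The plan is to exploit the single structural fact that distinguishes an affinity from an arbitrary map: it is a bijection of $\mathbb{F}^n$ onto itself. Indeed, for $\sim_A$ to be the equivalence relation the paper asserts it to be, the affinity $\varphi$ appearing in the definition $f = g\circ\varphi$ must be invertible (an affine permutation of $\mathbb{F}^n$), and this is precisely the property the argument will rest on. So I would begin by recording this, and by rewriting both weights in terms of supports: setting $S_f=\{x\in\mathbb{F}^n\mid f(x)=1\}$ and $S_g=\{y\in\mathbb{F}^n\mid g(y)=1\}$, we have $\mathrm{w}(f)=|S_f|$ and $\mathrm{w}(g)=|S_g|$ by definition of the Hamming weight.

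The key step is then a one-line support computation. Since $f(x)=g(\varphi(x))$ for every $x$, we have $x\in S_f \iff \varphi(x)\in S_g$, i.e. $S_f=\varphi^{-1}(S_g)$. Because $\varphi$ is a bijection, its restriction to $S_f$ is a bijection from $S_f$ onto $S_g$; hence $|S_f|=|S_g|$, which is exactly $\mathrm{w}(f)=\mathrm{w}(g)$. The balancedness claim is then immediate: $f$ is balanced iff $\mathrm{w}(f)=2^{n-1}$ iff $\mathrm{w}(g)=2^{n-1}$ iff $g$ is balanced.

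An equally short alternative, which reuses the machinery already set up in the Preliminaries, is to argue via $\mathcal{F}$. Re-indexing the sum by the bijection $y=\varphi(x)$ gives $\mathcal{F}(f)=\sum_{x\in\mathbb{F}^n}(-1)^{g(\varphi(x))}=\sum_{y\in\mathbb{F}^n}(-1)^{g(y)}=\mathcal{F}(g)$, and then the identity $\mathcal{F}(h)=2^n-2\,\mathrm{w}(h)$ yields $\mathrm{w}(f)=\mathrm{w}(g)$, with balancedness following from the criterion that $h$ is balanced iff $\mathcal{F}(h)=0$. I would likely present the support version as the main proof and mention the $\mathcal{F}$-reindexing as the conceptual reason it works.

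There is no real obstacle here; the statement is elementary. The only point requiring care, and the only place the hypothesis is genuinely used, is the bijectivity of $\varphi$: the cardinality equality $|S_f|=|S_g|$ (equivalently, the legitimacy of the change of variable $y=\varphi(x)$) would fail for a non-invertible affine map, so I would make sure to state explicitly that an affinity is an affine bijection before invoking it.
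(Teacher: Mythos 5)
Your proof is correct: the paper states this proposition as a known fact in the Preliminaries and gives no proof of its own, so there is nothing to diverge from, and your support-counting argument (with the $\mathcal{F}$-reindexing variant) is the standard one. Your emphasis on the bijectivity of $\varphi$ is well placed and consistent with the paper's later usage, where an affinity is explicitly $\varphi(y)=M\cdot y+w$ with $M\in GL_n(\mathbb{F})$ invertible.
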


\begin{remark}\label{fourier-affine-equivalence-invariant}
	From Proposition \ref{equivalence-properties} and applying the fact that $\mathcal{F}(f)=2^n-2\mathrm{w}(f)$, we can easily deduce that if $f\sim_A g$ then $\mathcal{F}(f)=\mathcal{F}(g)$, that is, $\mathcal{F}(f)$ is invariant under affine equivalence.
\end{remark}

\begin{theorem}\label{quadratic}
	Let $f\in B_n$ be quadratic. Then
	
	\begin{itemize}
		\item[(i)] $f\sim_A x_1x_2+\cdots +x_{2k-1}x_{2k}+x_{2k+1}$, with $k\leq \lfloor \frac{n-1}{2}\rfloor$ if $f$ is balanced,
		\item[(ii)] $f\sim_A x_1x_2+\cdots +x_{2k-1}x_{2k}+c$,  with $k\leq \lfloor \frac{n}{2}\rfloor$ and $c\in\mathbb{F}$ if $f$ is unbalanced.
	\end{itemize}
\end{theorem}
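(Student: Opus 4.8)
The plan is to reduce $f$ by a sequence of affinities to a canonical form and then read off balancedness from Proposition \ref{balanced-splitting} together with the affine-invariance of weight (Proposition \ref{equivalence-properties}). The engine of the reduction is "completing the square" over $\mathbb{F}$, carried out with the basic affinities introduced above, and I would organise it as an induction on $n$, proving the slightly stronger statement that every B.f. of degree at most $2$ is affine equivalent to $x_1x_2+\cdots+x_{2k-1}x_{2k}+r$, where $r$ is either $0$, the constant $1$, or a single variable $x_{2k+1}$. The theorem then follows by specialising to $\deg(f)=2$, which forces $k\ge 1$.

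For the inductive step, suppose $f$ has degree $2$, so some quadratic monomial occurs; after permuting coordinates assume it is $x_1x_2$. Collecting the terms that involve $x_1$ or $x_2$, I would write $f=x_1x_2+x_1A+x_2B+C$, where $A,B$ are affine in $x_3,\dots,x_n$ (their constant terms absorbing any linear $x_1,x_2$ contributions) and $C\in B_{n-2}$ depends only on $x_3,\dots,x_n$. Applying the basic affinities $x_1\mapsto x_1+B$ and $x_2\mapsto x_2+A$ and expanding over $\mathbb{F}$, the cross terms cancel in pairs and one is left with $f\sim_A x_1x_2+(AB+C)$, where $AB+C$ is a B.f. of degree at most $2$ in the variables $x_3,\dots,x_n$ only. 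Since this remainder involves strictly fewer variables, the induction hypothesis applies to it, and prepending $x_1x_2$ produces the required $x_1x_2+x_3x_4+\cdots$.

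When the reduction terminates I am left with $\sum_{i=1}^k x_{2i-1}x_{2i}+r$, where $r$ is an affine function of the untouched coordinates $x_{2k+1},\dots,x_n$. If $r$ is constant I obtain form (ii), with the constant taken as $c\in\mathbb{F}$; here only $2k\le n$ coordinates are used, giving $k\le\lfloor n/2\rfloor$. If $r$ is a nonconstant affine form, a linear change of the free coordinates sends it to a single variable and a final translation removes the constant, giving form (i); this uses $2k+1\le n$, hence $k\le\lfloor(n-1)/2\rfloor$. (I note that $k$ is in fact an invariant: it is half the rank of the alternating form $B_f(x,y)=f(x+y)+f(x)+f(y)+f(0)$, which depends only on the quadratic part of $f$ and whose radical is exactly $V(f)$; so the output of the reduction is forced.)

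It remains to match the two forms with balancedness, and this is where Propositions \ref{balanced-splitting} and \ref{equivalence-properties} enter. Form (i) is $g+x_{2k+1}$ with $g=\sum_{i=1}^k x_{2i-1}x_{2i}$, hence balanced by Proposition \ref{balanced-splitting}; form (ii) has weight $2^{n-1}-2^{n-k-1}\neq 2^{n-1}$ (the product part is bent on $2k$ variables and the $n-2k$ free variables only scale the weight by $2^{n-2k}$), hence unbalanced. Since weight, and therefore balancedness, is preserved under $\sim_A$ by Proposition \ref{equivalence-properties}, a balanced $f$ cannot be equivalent to form (ii) and an unbalanced $f$ cannot be equivalent to form (i); this pins down the dichotomy. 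The main obstacle I anticipate is bookkeeping rather than conceptual: one must track carefully how the substitutions interact with the identity $x_i^2=x_i$ over $\mathbb{F}$ (so that "squares" reappear as linear terms), verify that the composed maps are genuine affinities, and set up the induction so that the degree-$\le 2$ remainder is correctly absorbed at the base cases $n\le 2$.
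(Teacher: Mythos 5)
The paper does not prove this theorem: it is stated in the Preliminaries as a known classification (Dickson's theorem on quadratic forms over $\mathbb{F}$, citable from \cite{Mac} or \cite{Car4}), so there is no in-paper proof to compare against. Your argument is the standard proof of that classical fact, and it is correct: the completing-the-square step checks out (substituting $x_1\mapsto x_1+B$, $x_2\mapsto x_2+A$ in $x_1x_2+x_1A+x_2B+C$ does leave $x_1x_2+AB+C$ over $\mathbb{F}$, since the cross terms cancel in characteristic $2$), the induction correctly passes to a degree-$\le 2$ remainder in two fewer variables, the normalisation of the affine tail $r$ into $0$, $1$ or $x_{2k+1}$ is legitimate, and the final dichotomy via Proposition \ref{equivalence-properties} and Proposition \ref{balanced-splitting} is exactly the right way to separate (i) from (ii). Two small points: the weight of form (ii) is $2^{n-1}-2^{n-k-1}$ only when $c=0$ (it is $2^{n-1}+2^{n-k-1}$ when $c=1$), though both are $\ne 2^{n-1}$, which is all you need; and your parenthetical identification of $k$ via the rank of the associated alternating form (with radical $V(f)$) is not needed for the statement as given, but it is the cleanest justification that the canonical form is unique, which the paper implicitly uses later (e.g.\ in Remark \ref{dimension-semi-bent} and Theorem \ref{quadratic-nonlinearity}).
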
 

\begin{remark}\label{dimension-semi-bent}
	By Theorem \ref{quadratic}, it can be easily deduced that if $n$ is even then the dimension of the linear space of any quadratic function is even, and if $n$ is odd then its dimension is also odd.  
\end{remark}

The following corollary can be easily proved. 

\begin{corollary}\label{quadratic-bent-semi-bent}
	Let $f$ be a quadratic B.f. on $n$ variables and $c\in\mathbb{F}$. Then
	\begin{itemize}
		\item[1.]for even $n$, $f$ is bent if and only if 
		\begin{center}
			$f\sim_A x_1x_2+\cdots+x_{n-1}x_n+c$
		\end{center}
	
		\item[2.] for even $n$, $f$ is semi-bent if and only if $f\sim_A x_1x_2+\cdots+x_{n-3}x_{n-2}+x_{n-1}$ or $f\sim_A x_1x_2+\cdots+x_{n-3}x_{n-2}+c$.
		
		\item[3.] for odd $n$, $f$ is semi-bent if and only if $f\sim_A x_1x_2+\cdots+x_{n-2}x_{n-1}+x_n$ or $f\sim_A x_1x_2+\cdots+x_{n-2}x_{n-1}+c$.
	\end{itemize}
\end{corollary}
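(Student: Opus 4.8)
The plan is to reduce to the canonical forms of Theorem \ref{quadratic} and then read off the Walsh spectrum of each representative, matching it against the definitions of bent and semi-bent. The reduction is legitimate because both bentness and semi-bentness are invariants of $\sim_A$: if $f=g\circ\varphi$ with $\varphi$ an affinity, then composition with $\varphi$ permutes $A_n$ and preserves Hamming distance, so in fact the multiset of magnitudes $\{|W_f(a)|\}$ equals $\{|W_g(a)|\}$; hence $f$ is bent (resp. semi-bent) iff $g$ is. Throughout I write $q_k=x_1x_2+\cdots+x_{2k-1}x_{2k}$, regarded as an element of $B_n$, and I will show that the relevant amplitude of each canonical form depends only on $k$.

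The key computation is the Walsh spectrum of $q_k$, of $q_k+c$, and of $q_k+x_{2k+1}$ as functions on all $n$ variables. Since $q_k$ splits over the disjoint pairs $(x_{2i-1},x_{2i})$ while $x_{2k+1},\dots,x_n$ do not occur, the Walsh transform factorises:
\[ W_{q_k}(a)=\prod_{i=1}^{k}\Big(\sum_{u,v\in\mathbb{F}}(-1)^{uv+a_{2i-1}u+a_{2i}v}\Big)\ \prod_{j=2k+1}^{n}\Big(\sum_{t\in\mathbb{F}}(-1)^{a_j t}\Big). \]
Each pair factor equals $2(-1)^{a_{2i-1}a_{2i}}$, of magnitude $2$, while each free factor equals $2$ if $a_j=0$ and $0$ otherwise. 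Hence $W_{q_k}(a)\in\{0,\pm 2^{\,n-k}\}$, the nonzero values occurring exactly when $a_j=0$ for all $j>2k$. Adding the constant $c$ only multiplies everything by $(-1)^c$, and adding $x_{2k+1}$ merely replaces the $j=2k+1$ free factor by one that is nonzero precisely when $a_{2k+1}=1$; in all three cases the Walsh spectrum is $\{0,\pm 2^{\,n-k}\}$, so $\max_a|W_f(a)|=2^{\,n-k}$, depending only on $k$.

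The three assertions now follow by solving $2^{\,n-k}$ against the prescribed amplitude. A function is bent iff $\max_a|W_f(a)|=2^{n/2}$, i.e. $k=n/2$ (forcing $n$ even); since $2k=n$ leaves no room for a linear term and makes the function unbalanced (as $W_f(0)\neq 0$), the only representative is the unbalanced $x_1x_2+\cdots+x_{n-1}x_n+c$, giving item 1. For $n$ even, semi-bent means $\max_a|W_f(a)|=2^{(n+2)/2}$, i.e. $k=n/2-1$ and $2k=n-2$; the unbalanced representative is $x_1x_2+\cdots+x_{n-3}x_{n-2}+c$ and the balanced one (with $x_{2k+1}=x_{n-1}$) is $x_1x_2+\cdots+x_{n-3}x_{n-2}+x_{n-1}$, giving item 2. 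For $n$ odd, semi-bent means $\max_a|W_f(a)|=2^{(n+1)/2}$, i.e. $k=(n-1)/2$ and $2k+1=n$, producing $x_1x_2+\cdots+x_{n-2}x_{n-1}+c$ and $x_1x_2+\cdots+x_{n-2}x_{n-1}+x_n$, giving item 3. Both directions are covered at once: the form fixes $k$, and $k$ fixes the spectrum.

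The factorisation is routine; the point needing the most care is the bookkeeping that matches each $k$ to the \emph{correct} representatives allowed by Theorem \ref{quadratic}, in particular checking that the linear-term index $2k+1$ is a legitimate variable (which holds exactly in the semi-bent cases and fails for bent, where $k=n/2$), and confirming that a quadratic bent function is necessarily unbalanced while each semi-bent class splits into one balanced and one unbalanced form. A secondary point is to record explicitly that $\sim_A$ preserves the Walsh-spectrum magnitudes, since the stated Proposition \ref{equivalence-properties} only gives invariance of the Hamming weight; this upgrade is what licenses the reduction to canonical forms in the first place.
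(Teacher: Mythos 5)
Your proof is correct. The paper itself gives no argument for this corollary (it is introduced only with ``the following corollary can be easily proved''), and the route you take --- reduce to the canonical forms of Theorem~\ref{quadratic}, compute the Walsh spectrum $\{0,\pm 2^{\,n-k}\}$ of each representative by factorising over the disjoint pairs, and solve $2^{\,n-k}$ against the bent/semi-bent amplitudes --- is exactly the natural way to fill that gap. Your two points of care are also well taken: the invariance of the Walsh magnitudes under $\sim_A$ is genuinely needed and is not literally contained in Proposition~\ref{equivalence-properties}, and the bookkeeping matching $k$ to the admissible representatives (no linear term when $2k=n$, both a balanced and an unbalanced class otherwise) is where the content of the statement lies. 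The only remark worth adding is that the explicit factorisation could be bypassed: the linear space of each canonical form is visibly $\langle e_{2k+1},\dots,e_n\rangle$, of dimension $n-2k$, so Theorem~\ref{quadratic-nonlinearity} already hands you the spectrum $\{0,\pm 2^{(n+(n-2k))/2}\}=\{0,\pm 2^{\,n-k}\}$ and the same case analysis follows; your direct computation proves that instance of Theorem~\ref{quadratic-nonlinearity} along the way, which makes the argument self-contained at no real cost.
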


\begin{lemma}[\cite{Cus}]\label{weight-affine-equivalent-quadratics}
	Two unbalanced quadratic B.f. $g$ and $h$ on $n$ variables are affine equivalent if and only if $\mathrm{w}(g)=\mathrm{w}(h)$ and $\mathcal{N}(g)=\mathcal{N}(h)$.
\end{lemma}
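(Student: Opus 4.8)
The plan is to treat the two implications separately, with the reverse one carrying the real content. For the forward direction, suppose $g\sim_A h$, say $g=h\circ\varphi$ for an affinity $\varphi$. Proposition \ref{equivalence-properties} immediately gives $\mathrm{w}(g)=\mathrm{w}(h)$. For the nonlinearity I would observe that precomposition with $\varphi$ permutes the set $A_n$ (since $\omega\circ\varphi$ is affine whenever $\omega$ is, and $\varphi$ is invertible) and preserves Hamming distance, because $d(f_1\circ\varphi,f_2\circ\varphi)=\mathrm{w}((f_1+f_2)\circ\varphi)=\mathrm{w}(f_1+f_2)=d(f_1,f_2)$ again by Proposition \ref{equivalence-properties}. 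Hence $\mathcal{N}(g)=\min_{\omega\in A_n}d(h\circ\varphi,\omega)=\min_{\omega\in A_n}d(h,\omega)=\mathcal{N}(h)$, so both quantities are affine invariants.

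For the reverse direction I would reduce to the canonical forms supplied by Theorem \ref{quadratic}(ii). Since $g$ and $h$ are unbalanced quadratics, there are integers $k,l\leq\lfloor n/2\rfloor$ and constants $c,c'\in\mathbb{F}$ with
\[ g\sim_A q_k+c,\qquad h\sim_A q_l+c',\qquad q_m:=x_1x_2+\cdots+x_{2m-1}x_{2m}. \]
By the forward direction just established, it suffices to show that the hypotheses $\mathrm{w}(g)=\mathrm{w}(h)$ and $\mathcal{N}(g)=\mathcal{N}(h)$ force $k=l$ and $c=c'$; transitivity of $\sim_A$ then yields $g\sim_A h$.

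The core step is therefore to compute $\mathrm{w}(q_m+c)$ and $\mathcal{N}(q_m+c)$ explicitly. The function $q_m$ depends only on the first $2m$ coordinates and is bent on $\mathbb{F}^{2m}$; its restriction there has weight $2^{2m-1}-2^{m-1}$, and since the remaining $n-2m$ coordinates are free this multiplies by $2^{n-2m}$, giving $\mathrm{w}(q_m)=2^{n-1}-2^{n-m-1}$ and hence $\mathrm{w}(q_m+c)=2^{n-1}-(-1)^{c}2^{n-m-1}$. Factoring the Walsh transform over the first $2m$ and last $n-2m$ coordinates, one finds that $W_{q_m+c}$ is nonzero only when the last block of the argument vanishes, where it equals $\pm 2^{n-2m}\cdot 2^{m}=\pm 2^{n-m}$; thus its maximal modulus is $2^{n-m}$ and $\mathcal{N}(q_m+c)=2^{n-1}-2^{n-m-1}$, independent of $c$. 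Reading off these formulas, the nonlinearity is strictly increasing in $m$, so $\mathcal{N}(g)=\mathcal{N}(h)$ forces $k=l$; once $k=l$ is known, the weight $2^{n-1}-(-1)^{c}2^{n-k-1}$ determines the sign $(-1)^{c}$, and hence $c$, so $\mathrm{w}(g)=\mathrm{w}(h)$ forces $c=c'$.

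The main obstacle I anticipate is precisely this explicit evaluation of the weight and Walsh spectrum of the extended bent form $q_m+c$: one must correctly account for the $n-2m$ free variables (which scale both quantities by $2^{n-2m}$) and invoke the bentness of $q_m$ on $\mathbb{F}^{2m}$, as recorded in Corollary \ref{quadratic-bent-semi-bent}, to produce the factor $2^{m-1}$ in the weight and $2^{m}$ in the Walsh values. Everything else is bookkeeping: the monotonicity of $m\mapsto 2^{n-1}-2^{n-m-1}$, and the fact that an unbalanced $g$ satisfies $\mathrm{w}(g)\neq 2^{n-1}$, which is what guarantees that $(-1)^{c}$ is genuinely recoverable from the weight.
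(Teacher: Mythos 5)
Your proof is correct. Note that the paper itself does not prove this lemma at all --- it is imported verbatim from \cite{Cus} as a black box --- so there is no internal argument to compare against; what you have produced is a self-contained derivation using only tools the paper already has on hand (the canonical forms of Theorem \ref{quadratic}, bentness of $x_1x_2+\cdots+x_{2m-1}x_{2m}$ on $\mathbb{F}^{2m}$, and the splitting of the Walsh transform over disjoint variable blocks). Your computations check out: $\mathrm{w}(q_m+c)=2^{n-1}-(-1)^c 2^{n-m-1}$ and $\mathcal{N}(q_m+c)=2^{n-1}-2^{n-m-1}$ are both right, the latter also agreeing with Theorem \ref{quadratic-nonlinearity} since $\dim V(q_m+c)=n-2m$. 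Two small remarks. First, your formulas actually show that for \emph{unbalanced} quadratics the weight alone already separates all the classes $(k,c)$: the weights $2^{n-1}\pm 2^{n-k-1}$ are pairwise distinct as $k$ and the sign vary, so the nonlinearity hypothesis is redundant in this special case (Cusick presumably states the lemma with both hypotheses because they are both needed in the broader setting he works in). Second, your closing justification is slightly misplaced: recoverability of $(-1)^c$ from the weight needs only $2^{n-k-1}\neq 0$, which is automatic since $k\leq\lfloor n/2\rfloor$; the role of unbalancedness is rather to put $g$ and $h$ into case (ii) of Theorem \ref{quadratic} in the first place. Neither point affects the validity of the argument.
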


\begin{theorem}[\cite{Car4}]\label{quadratic-nonlinearity}
	Let $f\in B_n$ be a quadratic function. Then, for $a\in \mathbb{F}^n$, we have $W_f(a)\in \{0,\pm 2^{\frac{n+k}{2}}\}$ and \(\mathcal{N}(f)=2^{n-1}-2^{\frac{n+k}{2}-1},\) where $k=\dim V(f)$.
\end{theorem}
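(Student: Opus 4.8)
The plan is to compute $W_f(a)^2$ directly and exploit that the derivatives of a quadratic function are affine. Starting from the definition and substituting $y=x+b$, one obtains the standard identity
\[W_f(a)^2=\sum_{b\in\mathbb{F}^n}(-1)^{a\cdot b}\,\mathcal{F}(D_bf),\]
so everything is reduced to understanding the quantities $\mathcal{F}(D_bf)$. Since $f$ is quadratic, each derivative $D_bf$ has degree at most $1$, i.e. it is affine. If $D_bf$ is non-constant then it is balanced and $\mathcal{F}(D_bf)=0$; if $D_bf$ is constant, which by definition happens exactly when $b\in V(f)$, then $\mathcal{F}(D_bf)=\pm 2^n$. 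Hence the sum collapses onto the subspace $V(f)$:
\[W_f(a)^2=2^n\sum_{b\in V(f)}(-1)^{a\cdot b+f(b)+f(0)},\]
using that for $b\in V(f)$ the constant value of $D_bf$ is $f(b)+f(0)$.

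The key step is then to show that $b\mapsto f(b)+f(0)$ is a \emph{linear} functional on $V(f)$. For this I would introduce the associated symmetric bilinear form $B(x,y)=f(x+y)+f(x)+f(y)+f(0)$, which is genuinely bilinear because $f$ is quadratic, and check that $D_bf(x)=B(b,x)+f(b)+f(0)$. This simultaneously identifies $V(f)$ with the radical $\{b:B(b,\cdot)\equiv 0\}$ and shows that for $b,b'\in V(f)$ one has $f(b+b')+f(b)+f(b')+f(0)=B(b,b')=0$, so $f(b)+f(0)$ is additive, hence linear, on $V(f)$. Consequently $b\mapsto a\cdot b+f(b)+f(0)$ is a linear functional on the $k$-dimensional space $V(f)$, and the character sum above equals $2^k$ when this functional vanishes identically and $0$ otherwise. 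Therefore $W_f(a)^2\in\{0,2^{n+k}\}$, giving $W_f(a)\in\{0,\pm 2^{(n+k)/2}\}$; the exponent is an integer because $k\equiv n\pmod 2$ by Remark \ref{dimension-semi-bent}.

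Finally, to pin down the nonlinearity I would invoke Parseval's identity $\sum_a W_f(a)^2=2^{2n}$. If $N$ denotes the number of $a$ with $W_f(a)\neq 0$, then $N\cdot 2^{n+k}=2^{2n}$, so $N=2^{n-k}\geq 1$ and the value $\pm 2^{(n+k)/2}$ is actually attained. Thus $\max_a|W_f(a)|=2^{(n+k)/2}$, and the nonlinearity formula $\mathcal{N}(f)=2^{n-1}-\tfrac12\max_a|W_f(a)|=2^{n-1}-2^{(n+k)/2-1}$ follows immediately.

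I expect the main obstacle to be the middle step: proving that the restriction of $f$ to its linear space $V(f)$ is affine, which is exactly what makes the character sum evaluate cleanly. An alternative route avoiding the bilinear form would be to use the affine classification of Theorem \ref{quadratic}: a representative of the form $x_1x_2+\cdots+x_{2m-1}x_{2m}(+\,\epsilon)$ has $\dim V=n-2m$, so $2m=n-k$, and its Walsh transform factorises over the two-variable bent blocks $x_{2i-1}x_{2i}$ and the remaining coordinates, yielding values in $\{0,\pm 2^{\,n-m}\}=\{0,\pm 2^{(n+k)/2}\}$ directly; one then transfers the conclusion along the affine equivalence, noting that an affine change of variables only permutes the Walsh values and flips their signs.
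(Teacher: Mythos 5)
The paper does not prove this statement: it is imported verbatim from \cite{Car4} as a known fact, so there is no internal proof to compare against. Your argument is correct and complete, and it is essentially the standard proof of this classical result. The three load-bearing steps all check out: the identity $W_f(a)^2=\sum_b(-1)^{a\cdot b}\mathcal{F}(D_bf)$ collapses onto $V(f)$ because every derivative of a quadratic is affine; the associated bilinear form $B(x,y)=f(x+y)+f(x)+f(y)+f(0)$ identifies $V(f)$ with the radical and shows $b\mapsto f(b)+f(0)$ is linear there, so the inner sum is a character sum over a $k$-dimensional space and equals $2^k$ or $0$; and Parseval gives $N=2^{n-k}\geq 1$, so the extremal value is attained and the nonlinearity formula follows. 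Two minor remarks: the integrality of $(n+k)/2$ already falls out of $W_f(a)^2=2^{n+k}$ being the square of an integer, so you do not actually need to invoke Remark \ref{dimension-semi-bent}; and your alternative route via the normal forms of Theorem \ref{quadratic} (with $2m=n-k$ and the block-by-block factorisation of the Walsh sum) is the other standard derivation and works equally well, provided one adds the observation, which you do, that an affine substitution only permutes Walsh values up to sign and preserves $\dim V(f)$ (Lemma \ref{equivalence-derivatives}).
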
 

\begin{definition}
	Define \(\delta_F(a,b)=|\{x\in \mathbb{F}^n\mid D_aF(x)=b\}|\), for $a,b\in \mathbb{F}^n$ and v.B.f. $F$. The {\em differential uniformity of $F$ is} \(\delta(F)=\max_{a\neq 0,b\in \mathbb{F}^n}\delta_F(a,b)\) and always satisfies $\delta(F)\geq 2$. We call a function with $\delta(F)=2$ {\em Almost Perfect Nonlinear (APN)}.
\end{definition}

Next we look at another representation of v.B.f., known as {\em univariate polynomial representation}, which will be used in some sections. Consider the finite field $\mathbb{F}_{2^n}$ consisting of $2^n$ elements. It is well-known that the set $\mathbb{F}_{2^n}^*=\mathbb{F}_{2^n}\setminus\{0\}$ is a cyclic group which has $2^n-1$ elements. An element in $\mathbb{F}_{2^n}$ which is a generator of the multiplicative group $\mathbb{F}_{2^n}^*$ is called a {\em primitive element}. It is well explained in \cite{Car1} that the vector space $\mathbb{F}^n$ can be endowed with the structure of the finite field $\mathbb{F}_{2^n}$. So any function $F$ from $\mathbb{F}_{2^n}$ into $\mathbb{F}_{2^n}$ admits a unique univariate polynomial representation over $\mathbb{F}_{2^n}$, given as: 
\begin{align}
F(x)&=\sum_{i=0}^{2^n-1}\delta_ix^i,
\end{align} 
where $\delta_i\in\mathbb{F}_{2^n}$ and the degree of $F$ is at most $2^n-1$. Given the binary expansion $i=\sum_{s=0}^{n-1}i_s2^s$, define ${\rm w}_2(i)=\sum_{s=0}^{n-1}i_s$. So $F$ is a v.B.f. whose algebraic degree is given by \(\max\{{\rm w}_2(i)\mid 0\leq i\leq 2^n-1, \delta_i\neq 0\}\) (see \cite{Car1}).  

The (absolute) trace function $Tr:\mathbb{F}_{2^n}\rightarrow\mathbb{F}_2$ is defined as 
\begin{align*}
Tr(x)=x+x^2+x^{2^2}+\cdots+x^{2^{n-1}},
\end{align*} where $x\in\mathbb{F}_{2^n}$. For $\alpha\in\mathbb{F}_{2^n}$, a component $F_\alpha$ of $F$ is defined as $F_\alpha(x)=Tr(\alpha F)$.

We call any function of the form $F(x)=x^d$, for some non negative integer $d$, a {\em power function} and if $d=2^i+2^j$, for some non negative integers $i$ and $j$, with $i\neq j$, we say it is quadratic power function.

\section{Balanced Boolean functions}\label{balanced-section}
In this section we determine some conditions for B.f.'s to be balanced and we also construct some balanced functions.

If a B.f. is expressed in some particular form, its weight can be obtained from the weights of other B.f.'s on vector spaces with lower dimension as we explain below. First, observe that any B.f. $f$ on $n+1$ variables can be expressed in the form  
\begin{align}\label{factoring} f=x_{n+1}g(x_1,...,x_n)+h(x_1,...,x_n).\end{align}
We show that if the weights of the functions $g$ and $h$ on $n$ variables are known, then the weight of a B.f. on $n+1$ variables is obtained.

\begin{theorem}\label{weight-factoring}
	Let $f=x_{n+1}g(x_1,...,x_n)+h(x_1,...,x_n)$, with $f\in B_{n+1}$ and \\$g,h\in B_n$. Then \\(i) \(\mathrm{w}(f)=\mathrm{w}((g+h)_{\restriction\mathbb{F}^n})+\mathrm{w}(h_{\restriction\mathbb{F}^n}),\) \\(ii) $f$ is balanced if both $g+h$ and $h$ are balanced, \\(iii) $f$ is unbalanced if one in $\{g+h,h\}$ is balanced and the other is not.	
\end{theorem}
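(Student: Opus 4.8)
The plan is to prove (i) by splitting the domain $\mathbb{F}^{n+1}$ according to the value of the final coordinate $x_{n+1}$, and then to read off (ii) and (iii) as immediate arithmetic consequences.

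First I would partition $\mathbb{F}^{n+1}$ into the two disjoint slices $S_0=\{x\mid x_{n+1}=0\}$ and $S_1=\{x\mid x_{n+1}=1\}$, each of size $2^n$ and each naturally identified with $\mathbb{F}^n$ via the first $n$ coordinates. On $S_0$ we have $f=0\cdot g+h=h(x_1,\dots,x_n)$, so the number of points of $S_0$ on which $f=1$ equals $\mathrm{w}(h_{\restriction\mathbb{F}^n})$. On $S_1$ we have $f=g(x_1,\dots,x_n)+h(x_1,\dots,x_n)=(g+h)(x_1,\dots,x_n)$, so the corresponding count on $S_1$ equals $\mathrm{w}((g+h)_{\restriction\mathbb{F}^n})$. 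Since $\mathrm{w}(f)$ counts all points of $\mathbb{F}^{n+1}$ on which $f=1$, and the two slices are disjoint with union $\mathbb{F}^{n+1}$, summing the two counts yields (i). Here the restriction notation is only a reminder that $g$ and $h$ genuinely depend on $x_1,\dots,x_n$ alone, so restricting them to $\mathbb{F}^n$ changes nothing.

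For (ii) and (iii) I would substitute the definition of balancedness, namely weight $2^{n-1}$ on $\mathbb{F}^n$, into the identity from (i), and compare against the balancedness threshold $2^{(n+1)-1}=2^n$ in dimension $n+1$. If both $g+h$ and $h$ are balanced, then (i) gives $\mathrm{w}(f)=2^{n-1}+2^{n-1}=2^n$, so $f$ is balanced, proving (ii). For (iii), if exactly one of the two weights equals $2^{n-1}$ and the other does not, then their sum cannot equal $2^n$, whence $\mathrm{w}(f)\neq 2^n$ and $f$ is unbalanced.

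There is no genuine obstacle here: the entire content is the elementary slicing identity (i), from which (ii) and (iii) follow by direct substitution. The only point requiring a little care is the bookkeeping in (iii), where one must note that $a+b=2^n$ together with $a=2^{n-1}$ forces $b=2^{n-1}$; thus balancedness of $f$ would require \emph{both} slices to be balanced, and the contrapositive gives the claimed unbalancedness.
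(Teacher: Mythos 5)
Your proof is correct and follows essentially the same route as the paper: both arguments partition the computation according to the value of $x_{n+1}$, identifying the slice $x_{n+1}=0$ with $h$ and the slice $x_{n+1}=1$ with $g+h$, and then deduce (ii) and (iii) arithmetically. The only cosmetic difference is that the paper carries out the slicing on the exponential sum $\mathcal{F}(f)=\sum_x(-1)^{f(x)}$ and converts back to weights via $\mathrm{w}(f)=2^n-\tfrac{1}{2}\mathcal{F}(f)$, whereas you count the set $\{f=1\}$ directly; the two bookkeeping conventions are equivalent.
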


\begin{proof} In this proof we view $h$, $g$ and $g+h$ as functions in $B_n$.
	
	(i) Let $X=(x_1,...,x_n)$. We have 
	\begin{align}\label{fourier-factoring-eqn}
	\mathcal{F}(f)&=\sum_{(X,x_{n+1})\in\mathbb{F}^n\times \mathbb{F}}(-1)^{x_{n+1}g(X)+h(X)}\nonumber\\&=\sum_{X\in\mathbb{F}^n}(-1)^{g(X)+h(X)}+\sum_{X\in\mathbb{F}^n}(-1)^{h(X)}\nonumber\\&=\mathcal{F}(g+h)+\mathcal{F}(h)
	\end{align}
	
	So
	\begin{align}\label{weight-factoring-eqn}
		{\rm w}(f)&=2^n-\frac{1}{2}\mathcal{F}(f)\nonumber\\&=2^n-\frac{1}{2}\left[\mathcal{F}(g+h)+\mathcal{F}(h)\right]\\&=2^n-\frac{1}{2}\left[2^n-2{\rm w}(g+h)+2^n-2{\rm w}(h)\right]\nonumber\\&=\mathrm{w}(g+h)+\mathrm{w}(h).\nonumber
	\end{align}
	
	(ii) Observe that if $g+h$ and $h$ are balanced, we have $\mathcal{F}(g+h)=\mathcal{F}(h)=0$ which implies that $\mathrm{w}(f)=2^n$, by Equation \eqref{weight-factoring-eqn}.
	
	(iii) Without loss of generality, suppose that $g+h$ is balanced while $h$ not. Then $\mathcal{F}(g+h)=0$ and $\mathcal{F}(h)\neq 0$. So, by Equation \eqref{weight-factoring-eqn}, we have $\mathrm{w}(f)=2^n-\frac{1}{2}\mathcal{F}(h)\neq 2^n$ since $\mathcal{F}(h)\neq 0$, and so $f$ is unbalanced. 
\end{proof}

Our first two constructions of balanced B.f.'s are based on the well-known fact in the Proposition~\ref{balanced-splitting} and Theorem~\ref{weight-factoring}.

\begin{proposition}\label{balanced-construction-1}
	Let \(f\sim_A x_{n+1}g(x_1,...,x_n)+h(x_1,...,x_{n-1})\), where \\$g=\tilde{g}(x_1,...,x_{n-1})+x_n$ and $h=\tilde{h}(x_1,...,x_{n-2})+x_{n-1}$. Then $f$ is balanced.	
\end{proposition}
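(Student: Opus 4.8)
The plan is to reduce everything to Theorem~\ref{weight-factoring} and then invoke the splitting fact of Proposition~\ref{balanced-splitting} twice. Since balancedness is invariant under affine equivalence by Proposition~\ref{equivalence-properties}, it suffices to prove that the explicit representative $F = x_{n+1}g(x_1,\ldots,x_n) + h(x_1,\ldots,x_{n-1})$ is balanced. Written this way, $F$ already has the shape of Equation~\eqref{factoring} with top variable $x_{n+1}$, and I view $g$ and $h$ as elements of $B_n$. Theorem~\ref{weight-factoring}(ii) then says that $F$ is balanced as soon as both $h$ and $g+h$ are balanced, so the whole problem reduces to verifying those two conditions.

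First I would check $h$. Here $h = \tilde{h}(x_1,\ldots,x_{n-2}) + x_{n-1}$ is, by Proposition~\ref{balanced-splitting} applied with $n-1$ in place of $n$, balanced on $\mathbb{F}^{n-1}$. Since $h$ does not involve $x_n$, passing from $\mathbb{F}^{n-1}$ to $\mathbb{F}^n$ doubles both its weight and the ambient cardinality, so $h$ has weight $2^{n-1}$ and remains balanced as an element of $B_n$. Next I would check the sum: collecting terms gives
\[
g+h = \tilde{g}(x_1,\ldots,x_{n-1}) + \tilde{h}(x_1,\ldots,x_{n-2}) + x_{n-1} + x_n,
\]
which is exactly of the form "an arbitrary B.f. of $x_1,\ldots,x_{n-1}$, plus $x_n$". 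Hence Proposition~\ref{balanced-splitting}, now with $x_n$ as the split-off linear variable, yields that $g+h$ is balanced in $B_n$.

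With both conditions in hand, Theorem~\ref{weight-factoring}(ii) forces $F$ to be balanced, and transporting this back through the affine equivalence via Proposition~\ref{equivalence-properties} concludes that $f$ is balanced.

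As for difficulty, this argument is essentially careful bookkeeping rather than a substantial computation. The only points that need attention are the dimension counting, namely that a function balanced on $n-1$ variables and independent of the remaining variable is still balanced in $B_n$, and the correct identification of the split-off linear variable ($x_{n-1}$ for $h$, but $x_n$ for $g+h$) so that everything else folds into an "arbitrary" function of the lower-indexed variables. No genuinely hard step is expected.
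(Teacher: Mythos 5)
Your argument is correct and follows exactly the paper's route: the paper likewise invokes Proposition~\ref{balanced-splitting} to get that both $h$ and $g+h$ are balanced and then concludes via Theorem~\ref{weight-factoring}. Your version just spells out the dimension bookkeeping and the choice of split-off variable that the paper leaves implicit.
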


\begin{proof}
	By Proposition \ref{balanced-splitting}, both $g+h$ and $h$ are balanced and so, applying Theorem~\ref{weight-factoring}, $f$ is balanced.
\end{proof}

Notice that the result which we present in the following proposition is partly an extension of Proposition~\ref{balanced-construction-1}.

\begin{proposition}\label{balanced-construction-2}
	Let $g_i=\tilde{g}_i(x_{i+1},...,x_{n-i})+x_{n-i+1}$ be a B.f. on $n-2i+1$ variables, with integer $n>2$ and $1\leq i\leq \lfloor\frac{n}{2}\rfloor$, and define the two functions on $n$ variables as:
	\begin{align}\label{construction-2-eqn-1}
	f_\ell\sim_A\sum_{i=1}^{\ell-1}x_ig_i+g_\ell
	\end{align} and 
	\begin{align} 
	\bar{f}_\ell\sim_A\sum_{i=1}^{\ell}x_ig_i+c,
	\end{align}\label{construction-2-eqn-2}
	with $\ell\leq \lfloor\frac{n}{2}\rfloor$ and $c\in\mathbb{F}$. Then $f_\ell$ is balanced and $\bar{f}_\ell$ is unbalanced. 
\end{proposition}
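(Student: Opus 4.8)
The plan is to prove both claims by a single downward induction that peels off one factor $x_ig_i$ at a time, each step being an application of Theorem~\ref{weight-factoring}. Since $\sim_A$ preserves weight (Proposition~\ref{equivalence-properties}), it suffices to treat the representatives $\sum_{i=1}^{\ell-1}x_ig_i+g_\ell$ and $\sum_{i=1}^{\ell}x_ig_i+c$ themselves.

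For $f_\ell$, for $1\le a\le \ell$ put $H_a=\sum_{i=a}^{\ell-1}x_ig_i+g_\ell$, so that $f_\ell=H_1$, the base is $H_\ell=g_\ell$, and $H_a=x_ag_a+H_{a+1}$ for $a<\ell$. I would first pin down the variable ranges: $g_i$ involves only $x_{i+1},\dots,x_{n-i+1}$, with the linear term $x_{n-i+1}$ occurring alone, while $H_{a+1}$ involves only $x_{a+1},\dots,x_{n-a}$. Two facts then drop out. First, $x_a$ occurs in neither $g_a$ nor $H_{a+1}$, so $x_a$ is a genuine ``new'' variable and the splitting $H_a=x_ag_a+H_{a+1}$ has exactly the shape required by Theorem~\ref{weight-factoring}. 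Second, the linear variable $x_{n-a+1}$ of $g_a$ occurs in neither $\tilde g_a$ nor $H_{a+1}$, so $g_a+H_{a+1}=\tilde g_a+H_{a+1}+x_{n-a+1}$ is balanced by Proposition~\ref{balanced-splitting}. Thus $g_a+H_{a+1}$ is always balanced, and by Theorem~\ref{weight-factoring}(ii) the balancedness of $H_a$ follows from that of $H_{a+1}$.

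This yields a downward induction on $a$: the base case $H_\ell=g_\ell=\tilde g_\ell+x_{n-\ell+1}$ is balanced by Proposition~\ref{balanced-splitting}, and the inductive step is the observation above, so $H_1=f_\ell$ is balanced. For $\bar f_\ell$ I would run the same scheme with $\bar H_a=\sum_{i=a}^{\ell}x_ig_i+c$, so $\bar H_a=x_ag_a+\bar H_{a+1}$ and, exactly as before, $g_a+\bar H_{a+1}$ is balanced. The base case is $\bar H_\ell=x_\ell g_\ell+c$: here $g_\ell+c$ is balanced (Proposition~\ref{balanced-splitting}, a constant shift being irrelevant) while the constant $c$ is not, so Theorem~\ref{weight-factoring}(iii) makes $\bar H_\ell$ unbalanced. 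Since at every step exactly one of $\{g_a+\bar H_{a+1},\bar H_{a+1}\}$ is balanced (the former always, the latter not by the inductive hypothesis), Theorem~\ref{weight-factoring}(iii) propagates unbalancedness up to $\bar H_1=\bar f_\ell$.

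The one point that needs care -- and which I expect to be the main obstacle -- is the bookkeeping of variables across the recursion. Each $H_a$ is naturally a function of $x_a,\dots,x_{n-a+1}$, but at the next level it is fed into Theorem~\ref{weight-factoring} over the slightly larger set $x_a,\dots,x_{n-a+2}$, on which it does not depend on the new coordinate. I would note once that adjoining a variable on which a function does not depend doubles both its weight and the ambient size, so balancedness and unbalancedness are unaffected; this is what lets the inductive hypothesis at level $a+1$ be reused verbatim at level $a$. Everything else reduces to the routine verification of the index ranges claimed above.
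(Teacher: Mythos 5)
Your proposal is correct and is essentially the paper's own argument: the paper also peels off one factor $x_ig_i$ at a time via the identity $\mathcal{F}(x_ig_i+h_{i+1})=\mathcal{F}(g_i+h_{i+1})+\mathcal{F}(h_{i+1})$ from Theorem~\ref{weight-factoring}, telescoping it into a sum of terms each killed by Proposition~\ref{balanced-splitting} (with the lone surviving term $\mathcal{F}(c)\neq 0$ for $\bar f_\ell$), which is exactly your downward induction written out as a single closed-form sum. Your explicit bookkeeping of the variable ranges and of the padding of unused coordinates is a point the paper leaves implicit, but it does not change the route.
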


\begin{proof}
	For a positive integer $t\leq \ell-1$, define \[h_t=\sum_{i=t}^{\ell-1}x_ig_i+g_\ell \hspace{1cm}\text{ and } \hspace{1cm} \bar{h}_t=\sum_{i=t}^{\ell}x_ig_i+c,\] with $c\in\mathbb{F}$. Since $\mathcal{F}(f_\ell)$ is invariant under affine equivalence (see Remark~\ref{fourier-affine-equivalence-invariant}) then, by Equation~\ref{fourier-factoring-eqn}, we obtain 
	
	\begin{align}\label{bal-fourier-eqn}
	\mathcal{F}(f_\ell)=\sum_{i=1}^{\ell-2}\mathcal{F}(g_i+h_{i+1})+\mathcal{F}(g_{\ell-1}+g_{\ell})+\mathcal{F}(g_\ell)
	\end{align}
	and 
	\begin{align}\label{bal-fourier-eqn-1}
	\mathcal{F}(\bar{f}_\ell)=\sum_{i=1}^{\ell-1}\mathcal{F}(g_i+\bar{h}_{i+1})+\mathcal{F}(g_\ell+c)+\mathcal{F}(c).
	\end{align}
	We conclude by Proposition~\ref{balanced-splitting} that  $g_i+h_{i+1}$, $g_i+\bar{h}_{i+1}$, $g_{\ell-1}+g_{\ell}$ and $g_\ell+c$ are all balanced. So it implies that
	\[\mathcal{F}(g_\ell+c)=\mathcal{F}(g_{\ell-1}+g_\ell)=\mathcal{F}(g_i+h_{i+1})=\mathcal{F}(g_i+\bar{h}_{i+1})=0.\] 
	It follows that Equation~\eqref{bal-fourier-eqn} becomes \(\mathcal{F}(f_\ell)=0\), implying that $f_\ell$ is balanced and Equation \eqref{bal-fourier-eqn-1} becomes $\mathcal{F}(\bar{f}_\ell)=\mathcal{F}(c)\neq 0$ which implies that $\bar{f}_\ell$ is unbalanced.	
\end{proof}

\begin{remark}
	All the quadratic B.f.'s are a special case of the functions constructed in Proposition~\ref{balanced-construction-2} since if we let $\tilde{g}_i=0$, for all $1\leq i\leq \ell$, we obtain their classification via affine equivalence as given in Theorem \ref{quadratic}.
\end{remark}

Any B.f. can also be expressed in the form \begin{align}\label{convolution-product}f=x_{n+1}g(x_1,...,x_n)+(1+x_{n+1})h(x_1,...,x_n).\end{align} We call this form the {\em convolutional product} of $g$ and $h$. 

Next we completely classify the balanced cubic functions of the class  $f=x_{n+1}g(x_1,...,x_n)+(1+x_{n+1})h(x_1,...,x_n)$, with $\deg(h),\deg(g)\leq 2$.

\begin{theorem}\label{balanced-cubic}
	Let $f=x_{n+1}g(x_1,...,x_n)+(1+x_{n+1})h(x_1,...,x_n)$ on $n+1$ variables, with $\deg(h),\deg(g)\leq 2$, be cubic. Then $f$ is balanced if and only if either both $g$ and $h$ are balanced or $g=h\circ \varphi +1$, for some affinity $\varphi$ and with both $g$ and $h$ unbalanced quadratic.
\end{theorem}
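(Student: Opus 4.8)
The plan is to collapse the convolutional product to the factored shape already handled by Theorem~\ref{weight-factoring} and read off a clean weight identity. First I would rewrite
\[
f = x_{n+1}g + (1+x_{n+1})h = x_{n+1}(g+h) + h,
\]
which is exactly the form of Theorem~\ref{weight-factoring} with the roles of $g$ and $h$ there played by $g+h$ and $h$. Since $(g+h)+h=g$, part~(i) of that theorem gives the key identity $\mathrm{w}(f)=\mathrm{w}(g)+\mathrm{w}(h)$. As $f\in B_{n+1}$, balancedness means $\mathrm{w}(f)=2^n$, so the entire statement reduces to the arithmetic condition that $f$ is balanced if and only if $\mathrm{w}(g)+\mathrm{w}(h)=2^n$. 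I would also record that, because $\deg(h)\le 2$, the only possible source of a degree-$3$ monomial in $f$ is $x_{n+1}$ times the quadratic part of $g+h$; hence the cubicity hypothesis is equivalent to $\deg(g+h)=2$, an observation used below to exclude degenerate cases.

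For the converse direction I would simply verify the weight condition in each alternative. If $g$ and $h$ are both balanced, then $\mathrm{w}(g)=\mathrm{w}(h)=2^{n-1}$ and the sum is $2^n$. If instead $g=h\circ\varphi+1$ for an affinity $\varphi$, then Proposition~\ref{equivalence-properties} gives $\mathrm{w}(h\circ\varphi)=\mathrm{w}(h)$, and complementing by the constant $1$ yields $\mathrm{w}(g)=2^n-\mathrm{w}(h)$; again the sum is $2^n$. So the converse is immediate from the weight identity.

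For the forward direction, assume $\mathrm{w}(g)+\mathrm{w}(h)=2^n$ and split on whether either function is balanced. If one of them, say $h$, is balanced, then $\mathrm{w}(g)=2^n-2^{n-1}=2^{n-1}$, so $g$ is balanced too, landing in the first alternative. Otherwise both $g$ and $h$ are unbalanced, and here I would first argue that both are genuinely quadratic: a function of degree $\le 2$ that is unbalanced is either constant or of degree exactly $2$, since every degree-$1$ affine function is balanced; and if either of $g,h$ were constant, the weight identity would force the other to be constant as well, making $g+h$ constant and contradicting $\deg(g+h)=2$. Thus in this case $g$ and $h$ are both unbalanced quadratics.

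It remains to produce the affinity, which I would do via Lemma~\ref{weight-affine-equivalent-quadratics} applied to the pair $g$ and $h+1$, both unbalanced quadratics whose weights already agree since $\mathrm{w}(h+1)=2^n-\mathrm{w}(h)=\mathrm{w}(g)$. The step I expect to be the main obstacle, but which turns out to be automatic, is matching the nonlinearities demanded by that lemma. The point is that for a quadratic function the nonlinearity is determined by the weight: combining Theorem~\ref{quadratic}(ii) with Theorem~\ref{quadratic-nonlinearity}, an unbalanced quadratic $q\sim_A x_1x_2+\cdots+x_{2r-1}x_{2r}+c$ has $\dim V(q)=n-2r$, whence $\mathcal{N}(q)=2^{n-1}-2^{n-r-1}$ while $|\mathrm{w}(q)-2^{n-1}|=2^{n-r-1}$, so equal weights force equal nonlinearities. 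Therefore $\mathcal{N}(g)=\mathcal{N}(h+1)$, and Lemma~\ref{weight-affine-equivalent-quadratics} gives $g\sim_A h+1$, i.e.\ $g=(h+1)\circ\varphi=h\circ\varphi+1$ for some affinity $\varphi$, which is precisely the second alternative. This completes the plan.
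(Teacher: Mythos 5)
Your proof is correct and follows essentially the same route as the paper: reduce balancedness to the weight identity $\mathrm{w}(g)+\mathrm{w}(h)=2^n$ via the decomposition $f=x_{n+1}(g+h)+h$ and the Fourier/weight computation of Theorem~\ref{weight-factoring}, then invoke Lemma~\ref{weight-affine-equivalent-quadratics} to produce the affinity. If anything you are more careful than the paper, which silently skips both the verification that equal weights of unbalanced quadratics force equal nonlinearities (needed to apply that lemma) and the exclusion of constant $g$ or $h$ via the cubicity hypothesis.
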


\begin{proof}
	Recall that $\mathcal{F}(g)=2^n-2\mathrm{w}(g)$ and by Equation \eqref{fourier-factoring-eqn}, we have \(\mathcal{F}(f)=\mathcal{F}(g_{\restriction\mathbb{F}^n})+\mathcal{F}(h_{\restriction\mathbb{F}^n})\). So $f$ is balanced $\iff \mathcal{F}(f)=0\iff \mathcal{F}(g_{\restriction\mathbb{F}^n})=-\mathcal{F}(h_{\restriction\mathbb{F}^n})\iff 2^n-2\mathrm{w}(g_{\restriction\mathbb{F}^n})=-2^n+2\mathrm{w}(h_{\restriction\mathbb{F}^n})\iff \mathrm{w}(g_{\restriction\mathbb{F}^n})+\mathrm{w}(h_{\restriction\mathbb{F}^n})=2^n\iff \mathrm{w}(g_{\restriction\mathbb{F}^n})=2^n-\mathrm{w}(h_{\restriction\mathbb{F}^n})\iff \mathrm{w}(g_{\restriction\mathbb{F}^n})=\mathrm{w}(h_{\restriction\mathbb{F}^n}+1)\iff$  either both $g$ and $h$ are balanced or both  $g$ and $h$ unbalanced quadratics related by $g=h\circ \varphi +1$, for some affinity $\varphi$ (see Lemma \ref{weight-affine-equivalent-quadratics}).
\end{proof}

Observe that the forward direction of Theorem \ref{balanced-cubic} holds in general but its converse might not be necessarily always true.

In the next result we construct balanced functions based on bent functions.
\begin{proposition}\label{balanced-bent}
	Let $f=x_{n+1}g(x_1,...,x_n)+(1+x_{n+1})h(x_1,...,x_n)$, with $n$ even, be a B.f. on $\mathbb{F}^{n+1}$ such that $g$ and $h$ are both bent. Then $f$ is balanced if and only if $\mathrm{w}(g)\neq \mathrm{w}(h)$.
\end{proposition}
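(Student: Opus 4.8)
The plan is to reduce the convolutional product to the factoring form already analysed and then exploit the rigidity of the weight of a bent function. First I would observe that, over $\mathbb{F}$, the convolutional product can be rewritten as $f=x_{n+1}(g+h)+h$, so that $f$ falls into the shape of Theorem~\ref{weight-factoring} with the roles of $g$ and $h$ there played by $g+h$ and $h$. Applying the Fourier computation of Equation~\eqref{fourier-factoring-eqn} (exactly as in the proof of Theorem~\ref{balanced-cubic}) then gives $\mathcal{F}(f)=\mathcal{F}(g)+\mathcal{F}(h)$, where $g$ and $h$ are viewed as functions in $B_n$.

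Next I would translate the balancedness criterion into a statement about weights. Since $f$ is balanced if and only if $\mathcal{F}(f)=0$, and since $\mathcal{F}(g)=2^n-2\mathrm{w}(g)$ and $\mathcal{F}(h)=2^n-2\mathrm{w}(h)$, the identity above shows that $f$ is balanced if and only if $\mathrm{w}(g)+\mathrm{w}(h)=2^n$. At this point the hypothesis that both $g$ and $h$ are bent enters decisively: a bent function on $n$ variables has $W(0)=\mathcal{F}=\pm 2^{n/2}$, so its weight is forced to be one of the two values $2^{n-1}-2^{\frac{n}{2}-1}$ or $2^{n-1}+2^{\frac{n}{2}-1}$.

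Finally I would finish by a two-case comparison. If $\mathrm{w}(g)=\mathrm{w}(h)$, then both weights equal the same one of the two admissible values, so their sum is either $2^n-2^{n/2}$ or $2^n+2^{n/2}$, never $2^n$; hence $f$ is not balanced. If instead $\mathrm{w}(g)\neq\mathrm{w}(h)$, then one weight is $2^{n-1}-2^{\frac{n}{2}-1}$ and the other is $2^{n-1}+2^{\frac{n}{2}-1}$, so their sum is exactly $2^n$ and $f$ is balanced. Combining the two cases yields the claimed equivalence. The computation is entirely routine; the only real content, and the step I would be most careful about, is the observation that bentness pins the weight down to exactly two possible symmetric values, which is precisely what converts the additive condition $\mathrm{w}(g)+\mathrm{w}(h)=2^n$ into the clean dichotomy $\mathrm{w}(g)\neq\mathrm{w}(h)$.
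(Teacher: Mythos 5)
Your proof is correct and follows essentially the same route as the paper: both reduce the convolutional product to $f=x_{n+1}(g+h)+h$, use the additivity $\mathcal{F}(f)=\mathcal{F}(g)+\mathcal{F}(h)$ (equivalently $\mathrm{w}(f)=\mathrm{w}(g)+\mathrm{w}(h)$), and then exploit the fact that a bent function's weight is pinned to $2^{n-1}\pm 2^{\frac{n}{2}-1}$ to run the same two-case comparison. The only cosmetic difference is that you phrase the dichotomy via $\mathcal{F}$ and the condition $\mathrm{w}(g)+\mathrm{w}(h)=2^n$, while the paper works with weights directly.
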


\begin{proof}
	Since $\mathcal{F}(g)=W_g(0)=\pm 2^{\frac{n}{2}}$, so any bent function on $\mathbb{F}^n$ has the weight $2^{n-1}\pm 2^{\frac{n}{2}-1}$. Since \(\mathrm{w}(f)=\mathrm{w}(g_{\restriction\mathbb{F}^n})+\mathrm{w}(h_{\restriction\mathbb{F}^n})\), so $\mathrm{w}(f)=2^n\pm 2^{\frac{n}{2}}$ if  $\mathrm{w}(g_{\restriction\mathbb{F}^n})=\mathrm{w}(h_{\restriction\mathbb{F}^n})$ and $\mathrm{w}(f)=2^n$ if $\mathrm{w}(g_{\restriction\mathbb{F}^n})\neq \mathrm{w}(h_{\restriction\mathbb{F}^n})$. Hence $f$ is balanced if and only if $\mathrm{w}(g)\neq \mathrm{w}(h)$. 
\end{proof}

Next we show that the balanced function in Proposition \ref{balanced-bent} [also for the unbalanced, that is, if $\mathrm{w}(g)=\mathrm{w}(h)$] are in fact plateaued.

\begin{proposition}
	Let $f=x_{n+1}g(x_1,...,x_n)+(1+x_{n+1})h(x_1,...,x_n)$, with $n$ even, be a B.f. on $\mathbb{F}^{n+1}$ such that $g$ and $h$ are both bent. Then $f$ is a plateaued function.
\end{proposition}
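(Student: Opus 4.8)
The plan is to compute the Walsh transform of $f$ on $\mathbb{F}^{n+1}$ directly and read off its spectrum. Writing $X=(x_1,\dots,x_n)$ and noting that $f(X,0)=h(X)$ while $f(X,1)=g(X)$, I would split the defining Walsh sum according to the value of the last coordinate $x_{n+1}$. For any $a\in\mathbb{F}^n$ and $a_{n+1}\in\mathbb{F}$ this gives
\begin{align*}
W_f(a,a_{n+1})&=\sum_{X\in\mathbb{F}^n}(-1)^{h(X)+a\cdot X}+\sum_{X\in\mathbb{F}^n}(-1)^{g(X)+a\cdot X+a_{n+1}}\\
&=W_h(a)+(-1)^{a_{n+1}}W_g(a).
\end{align*}
This identity is the crux of the argument: the Walsh transform of $f$ is a signed sum of the Walsh transforms of the two ingredient functions, with the sign governed by $a_{n+1}$.

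The next step invokes bentness. Since $n$ is even and both $g$ and $h$ are bent on $\mathbb{F}^n$, their Walsh transforms take only the values $\pm 2^{n/2}$; that is, $W_g(a),W_h(a)\in\{2^{n/2},-2^{n/2}\}$ for every $a\in\mathbb{F}^n$. Consequently each value $W_f(a,a_{n+1})$ is a sum of two numbers, each of absolute value $2^{n/2}$. If the two summands share a sign their sum is $\pm 2^{n/2+1}$, and if they have opposite signs the sum is $0$. Hence
\[
W_f(a,a_{n+1})\in\{0,\pm 2^{n/2+1}\}
\]
for all $(a,a_{n+1})\in\mathbb{F}^{n+1}$, which is precisely the statement that $f$ is plateaued with amplitude $\mu=2^{n/2+1}$.

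There is essentially no serious obstacle here; the whole statement reduces to a single clean computation. The only points requiring care are the bookkeeping in the Walsh-sum split --- in particular recording the factor $(-1)^{a_{n+1}}$ coming from the $x_{n+1}=1$ slice --- together with the observation that the magnitudes $|W_g(a)|=|W_h(a)|=2^{n/2}$ are \emph{constant} across all $a$, which is exactly what forces the sum to collapse onto the three values $0,\pm 2^{n/2+1}$. As a sanity check, the displayed identity also recovers Proposition~\ref{balanced-bent}: balancedness of $f$ is the vanishing of $W_f(0,0)=W_h(0)+W_g(0)$, which occurs if and only if the two bent Fourier values have opposite signs, i.e.\ if and only if $\mathrm{w}(g)\neq\mathrm{w}(h)$.
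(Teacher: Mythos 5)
Your proposal is correct and follows essentially the same route as the paper: both split the Walsh sum over the last coordinate to obtain $W_f(a,a_{n+1})=W_h(a)+(-1)^{a_{n+1}}W_g(a)$ and then use that bentness forces both summands to have absolute value $2^{n/2}$, so the spectrum collapses to $\{0,\pm 2^{n/2+1}\}$. No gaps.
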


\begin{proof}
	Let $\alpha=(a,a_{n+1})\in\mathbb{F}^n\times\mathbb{F}$ and $z=(X,x_{n+1})\in\mathbb{F}^n\times\mathbb{F}$, where $X=(x_1,...,x_n)$. Then we have
	\begin{align}\label{walsh}
	W_f(\alpha)&=\sum_{z\in\mathbb{F}^{n+1}}(-1)^{f(z)+\alpha\cdot z}\nonumber\\& =\sum_{(x_{n+1},X)\in\mathbb{F}\times\mathbb{F}^n}(-1)^{x_{n+1}g(X)+(1+x_{n+1})h(X)+a\cdot X+a_{n+1}\cdot x_{n+1}}\nonumber\\&=\sum_{X\in\mathbb{F}^n}(-1)^{h(X)+a\cdot x}+\sum_{X\in\mathbb{F}^n}(-1)^{g(X)+a\cdot X+a_{n+1}}\nonumber\\&= W_{h_{\restriction\mathbb{F}^n}}(a)+(-1)^{a_{n+1}}W_{g_{\restriction\mathbb{F}^n}}(a).
	\end{align}
	
	Since $g$ and $h$ are bent then, for any $a\in\mathbb{F}^n$, the only possible values for $W_{h_{\restriction\mathbb{F}^n}}(a)$ and $W_{g_{\restriction\mathbb{F}^n}}(a)$ are $\pm 2^{\frac{n}{2}}$. So, for any $\alpha=(a,a_{n+1})\in\mathbb{F}^n\times\mathbb{F}$, $W_f(\alpha)$ takes one of the values $0$ or $\pm 2^{\frac{n}{2}+1}$. Hence $f$ is plateaued.
\end{proof}

\section{Linear space of balanced Boolean functions}\label{linear structures-section}
We present some conditions which help to determine whether a derivative of a B.f. is constant and we utilise them to check the balanced B.f.'s, constructed in Section~\ref{balanced-section}, whose linear space is trivial. 

\begin{proposition}\label{derivative-first}
	Let $f=x_{n+1}g(x_1,...,x_n)+h(x_1,...,x_n)$, where $g,h\in B_n$ and $f\in B_{n+1}$. Let 
	$\lambda=(a,a_{n+1})\in\mathbb{F}^n\times\mathbb{F}$. 
	Then \begin{align*}D_\lambda f\sim_A x_{n+1}D_ag+a_{n+1}g+D_ah.\end{align*}
\end{proposition}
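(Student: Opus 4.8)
The plan is to compute the derivative straight from its definition and then account for the mismatch between the resulting expression and the claimed one by a single coordinate translation. I would write $x=(X,x_{n+1})$ with $X=(x_1,\dots,x_n)$ and recall that $\lambda=(a,a_{n+1})$, so that $x+\lambda=(X+a,\,x_{n+1}+a_{n+1})$. Substituting the factored form of $f$ into $D_\lambda f(x)=f(x+\lambda)+f(x)$ gives
\[
D_\lambda f = (x_{n+1}+a_{n+1})g(X+a)+h(X+a)+x_{n+1}g(X)+h(X).
\]

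Next I would collect terms. The coefficient of $x_{n+1}$ assembles into $x_{n+1}\bigl(g(X+a)+g(X)\bigr)=x_{n+1}D_ag$, the two evaluations of $h$ combine into $D_ah$, and the remaining summand is $a_{n+1}g(X+a)$. This yields the exact identity
\[
D_\lambda f = x_{n+1}D_ag + a_{n+1}\,g(X+a) + D_ah.
\]
The only discrepancy with the target $x_{n+1}D_ag+a_{n+1}g+D_ah$ is that the middle term carries the shifted argument $X+a$ rather than $X$.

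To remove this discrepancy I would use $g(X+a)=g(X)+D_ag(X)$, which rewrites the identity as
\[
D_\lambda f = (x_{n+1}+a_{n+1})D_ag + a_{n+1}g + D_ah.
\]
Now observe that the map $\varphi$ sending $x_{n+1}\mapsto x_{n+1}+a_{n+1}$ and fixing every other coordinate is a basic affinity of $\mathbb{F}^{n+1}$, the added term $a_{n+1}$ being a constant and hence affine. Applying $\varphi$ to $x_{n+1}D_ag+a_{n+1}g+D_ah$ replaces $x_{n+1}$ by $x_{n+1}+a_{n+1}$ and leaves $a_{n+1}g$ and $D_ah$ untouched, reproducing exactly the right-hand side above. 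Hence $D_\lambda f=\bigl(x_{n+1}D_ag+a_{n+1}g+D_ah\bigr)\circ\varphi$, which is the asserted affine equivalence.

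There is essentially no hard step here; the computation is routine once the factored form is substituted. The one point that requires care, and the reason the statement is phrased with $\sim_A$ rather than equality, is the appearance of $g(X+a)$ in the middle term: for $a_{n+1}=0$ the derivative equals the claimed expression on the nose, while for $a_{n+1}=1$ the two differ precisely by the translation $x_{n+1}\mapsto x_{n+1}+1$. I would simply confirm that this translation is admitted as a basic affinity under the paper's definition, which it is, since constants lie in $A_n$.
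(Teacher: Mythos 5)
Your proposal is correct and follows essentially the same route as the paper's proof: both compute $D_\lambda f$ directly, rewrite $a_{n+1}g(X+a)$ as $a_{n+1}(D_ag+g)$ to obtain $(x_{n+1}+a_{n+1})D_ag+a_{n+1}g+D_ah$, and conclude via the basic affinity $x_{n+1}\mapsto x_{n+1}+a_{n+1}$.
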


\begin{proof}
	Let $X=(x_1,...,x_n)\in \mathbb{F}^n$. Thus, we have $f=x_{n+1}g(X)+h(X)$. Let $\lambda=(a,a_{n+1})\in\mathbb{F}^n\times\mathbb{F}$. So \begin{align*}
	D_\lambda f&=(x_{n+1}+a_{n+1})g(X+a)+h(X+a)+x_{n+1}g(X)+h(X)\\&=x_{n+1}\left[g(X+a)+g(X)\right]+a_{n+1}g(X+a)+h(X+a)+h(X)\\&=x_{n+1}D_ag(X)+a_{n+1}[D_ag(X)+g(X)]+D_ah(X)\\&\sim_A x_{n+1}D_ag(X)+a_{n+1}g(X)+D_ah(X). \hspace{1cm}(\text{apply } x_{n+1}\mapsto x_{n+1}+a_{n+1})\qedhere
	\end{align*}
\end{proof}

	For $f\in B_n$, we define the set which contains all $a\in\mathbb{F}^n$ such that $D_af$ is balanced by $\Gamma(f)$, that is, $\Gamma(f)=\{a\in\mathbb{F}^n\mid D_af \text{ is balanced}\}$ (see \cite{Cal}). 
	
	We next show that the linear space of B.f. $f$ and $\Gamma(f)$ are both invariant under affine equivalence.

\begin{lemma}\label{equivalence-derivatives}
	Let $g_1,g_2\in B_n$ be such that $g_1\sim_A g_2$. Then $|V(g_1)|=|V(g_2)|$ and $|\Gamma(g_1)|=|\Gamma(g_2)|$.
\end{lemma}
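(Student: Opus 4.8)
The plan is to exploit affine equivalence through an affinity $\varphi$ with $g_1 = g_2 \circ \varphi$, and to track how derivatives transform under $\varphi$. The key algebraic fact I would establish first is a transformation rule for derivatives: if $\varphi(x) = Mx + c$ for some invertible linear map $M$ (over $\mathbb{F}$) and constant $c$, then for any $a \in \mathbb{F}^n$ one computes
\begin{align*}
D_a g_1(x) = g_2(\varphi(x+a)) + g_2(\varphi(x)) = g_2(\varphi(x) + Ma) + g_2(\varphi(x)) = (D_{Ma} g_2)(\varphi(x)).
\end{align*}
Thus $D_a g_1 = (D_{Ma} g_2)\circ \varphi$, which says precisely that $D_a g_1 \sim_A D_{Ma} g_2$.

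With this rule in hand, the two statements follow by transporting the defining properties through the bijection $a \mapsto Ma$. First I would handle $V$: by Proposition~\ref{equivalence-properties}, affine-equivalent functions have equal weight, and a derivative being constant is equivalent to it being the constant function $0$ or $1$, i.e. having weight $0$ or $2^n$. Since $D_a g_1 \sim_A D_{Ma} g_2$, the derivative $D_a g_1$ is constant if and only if $D_{Ma} g_2$ is constant. Hence $a \in V(g_1) \iff Ma \in V(g_2)$, so the linear map $M$ restricts to a bijection $V(g_1) \to V(g_2)$, giving $|V(g_1)| = |V(g_2)|$. The argument for $\Gamma$ is identical but uses balancedness instead of constancy: again by Proposition~\ref{equivalence-properties}, $D_a g_1$ is balanced if and only if $D_{Ma}g_2$ is balanced (since $\mathrm{w}(D_a g_1) = \mathrm{w}(D_{Ma}g_2)$), so $a \in \Gamma(g_1) \iff Ma \in \Gamma(g_2)$, and $M$ gives a bijection $\Gamma(g_1) \to \Gamma(g_2)$, yielding $|\Gamma(g_1)| = |\Gamma(g_2)|$.

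I do not expect a serious obstacle here, since the whole argument rests on the single derivative-transformation identity. The one point requiring mild care is the constant term $c$ in the affinity: it shifts the evaluation point but, because we are subtracting $g_2(\varphi(x+a))$ and $g_2(\varphi(x))$ at differences, the additive constant cancels in the increment $\varphi(x+a) - \varphi(x) = Ma$, so $c$ plays no role in the derivative. Similarly one should note that the affine part of $\varphi$ (the possible additive constant inside $g_2$) does not affect whether a derivative is constant or balanced, both of which are weight conditions invariant under affine equivalence by Proposition~\ref{equivalence-properties}. The only genuinely structural input beyond that Proposition is recognising that $M$ is invertible, so that $a \mapsto Ma$ is a genuine bijection of $\mathbb{F}^n$ (indeed a linear isomorphism), which is what upgrades the logical equivalences into the claimed cardinality equalities.
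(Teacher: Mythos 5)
Your proposal is correct and follows essentially the same route as the paper: both establish the identity $D_a g_1 = (D_{M\cdot a}g_2)\circ\varphi$ and then invoke Proposition~\ref{equivalence-properties} to transport constancy and balancedness along the bijection $a\mapsto M\cdot a$. Your explicit remark that invertibility of $M$ is what turns the pointwise equivalences into cardinality equalities is a point the paper leaves implicit, but the argument is the same.
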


\begin{proof}
	Let $\varphi$ be the affinity of $\mathbb{F}^n$ associated with invertible $M\in GL_n(\mathbb{F})$ and $w\in\mathbb{F}^n$, that is, $\varphi(y)=M\cdot y+w$, for all $y\in \mathbb{F}^n$. For  $a\in \mathbb{F}^n$, we have 
	\begin{align}\label{equivalence-derivatives-eqn}
	D_ag_1(x)&=D_a(g_2\circ \varphi)(x)\nonumber\\&=g_2(\varphi(x+a))+g_2(\varphi(x))\nonumber\\&=g_2(M\cdot (x+a)+w)+g_2(\varphi(x))\nonumber\\&=g_2(M\cdot x+M\cdot a+w)+g_2(\varphi(x))\nonumber\\&=g_2(M\cdot a+\varphi(x))+g_2(\varphi(x))\nonumber\\&=D_{M\cdot a}g_2(\varphi(x))=(D_{M\cdot a}g_2\circ \varphi)(x).
	\end{align}
	So it implies that $D_ag_1=(D_{M\cdot a}g_2)\circ \varphi\sim_A D_{M\cdot a}g_2$. It follows by Proposition \ref{equivalence-properties} that ${\rm w}(D_ag_1)={\rm w}(D_{M\cdot a}g_2)$, so we conclude that $D_ag_1$ is balanced if and only if $D_{M\cdot a}g_2$ is balanced, $D_ag_1=0$ if and only if $D_{M\cdot a}g_2\sim_A 0$,  and $D_ag_1=1$ if and only if $D_{M\cdot a}g_2\sim_A 1$. Hence we have $|V(g_1)|=|V(g_2)|$ and $|\Gamma(g_1)|=|\Gamma(g_2)|$.
\end{proof}

\begin{proposition}\label{derivative-first-constant}
	Let $f=x_{n+1}g(x_1,...,x_n)+h(x_1,...,x_n)$, where $g,h\in B_n$ and $f\in B_{n+1}$. Let 
	$\lambda=(a,a_{n+1})\in\mathbb{F}^n\times\mathbb{F}$. Then $D_\lambda f=c$, with $c\in\mathbb{F}$ (i.e., $D_\lambda f$ is constant) if and only if $D_ag=0$ and $D_ah=a_{n+1}g+c$.		
\end{proposition}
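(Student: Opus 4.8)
The plan is to reduce everything to the computation already done in Proposition~\ref{derivative-first}. That proposition gives $D_\lambda f\sim_A x_{n+1}D_ag+a_{n+1}g+D_ah$, and since the displayed expression on the right is literally the function obtained from $D_\lambda f$ by applying the basic affinity $x_{n+1}\mapsto x_{n+1}+a_{n+1}$, an affinity sends constants to constants. Hence $D_\lambda f$ is constant if and only if the affine-equivalent function $x_{n+1}D_ag+a_{n+1}g+D_ah$ is constant, and the two take the \emph{same} constant value $c$ (since the substitution does not alter the constant term). So the whole problem becomes: when is the function $x_{n+1}D_ag(X)+\bigl(a_{n+1}g(X)+D_ah(X)\bigr)$ equal to a constant $c$ as a function of $(X,x_{n+1})\in\mathbb{F}^n\times\mathbb{F}$?

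The key step is then to exploit that this last function is written in exactly the factored form of Equation~\eqref{factoring}, namely $x_{n+1}G(X)+H(X)$ with $G=D_ag$ and $H=a_{n+1}g+D_ah$. A function of this shape is a constant $c$ on all of $\mathbb{F}^n\times\mathbb{F}$ if and only if its value does not depend on $x_{n+1}$ and equals $c$. First I would set $x_{n+1}=0$: this forces $H(X)=c$ for every $X$, i.e. $a_{n+1}g+D_ah=c$. Next, comparing the values at $x_{n+1}=0$ and $x_{n+1}=1$ for each fixed $X$, the difference is exactly $G(X)=D_ag(X)$; since both values must equal the same constant $c$, we get $D_ag(X)=0$ for all $X$, i.e. $D_ag=0$. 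Conversely, if $D_ag=0$ then the $x_{n+1}$-term vanishes identically and the function reduces to $H=a_{n+1}g+D_ah$, which is the constant $c$ precisely when $D_ah=a_{n+1}g+c$. Rearranging $a_{n+1}g+D_ah=c$ into $D_ah=a_{n+1}g+c$ (using $c=-c$ in characteristic~$2$) gives the stated second condition.

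Putting the two directions together yields the claimed equivalence: $D_\lambda f=c$ if and only if $D_ag=0$ and $D_ah=a_{n+1}g+c$. I do not anticipate a genuine obstacle here, as the argument is a direct unwinding of the factored form; the one point that requires a little care is the \emph{bookkeeping of the constant} $c$. One must confirm that the basic affinity $x_{n+1}\mapsto x_{n+1}+a_{n+1}$ of Proposition~\ref{derivative-first}, while changing the intermediate expression, leaves the constant value unchanged, so that the same $c$ appears on both sides of the biconditional rather than some shifted constant. Once that is checked, the argument that a function of the form $x_{n+1}G(X)+H(X)$ is constant exactly when $G\equiv 0$ and $H\equiv c$ is elementary and completes the proof.
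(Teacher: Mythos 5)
Your proof is correct and follows essentially the same route as the paper: invoke Proposition~\ref{derivative-first} and then observe that a function of the form $x_{n+1}G(X)+H(X)$ is the constant $c$ exactly when $G\equiv 0$ and $H\equiv c$. The paper's own proof is just a terser version of this; your extra care about the affinity $x_{n+1}\mapsto x_{n+1}+a_{n+1}$ preserving the constant value is a valid (and welcome) filling-in of a step the paper leaves implicit.
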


\begin{proof}
	 $D_\lambda f=c$, with $c\in\mathbb{F}$ (i.e., $D_\lambda f$ is constant) if and only if \[x_{n+1}D_ag+a_{n+1}g+D_ah=c\] (see Proposition \ref{derivative-first}) if and only if $D_ag=0$ and $D_ah=a_{n+1}g+c$.
\end{proof}

We can deduce from Proposition \ref{derivative-first-constant} that the following result holds.

\begin{corollary}\label{derivative-first-nonconstant}
	Let $f=x_{n+1}g(x_1,...,x_n)+h(x_1,...,x_n)$, where $g,h\in B_n$ are non-constant and $f\in B_{n+1}$. Let 
	$\lambda=(a,a_{n+1})\in\mathbb{F}^n\times\mathbb{F}$. Then $D_\lambda f$ is non-constant if and only if one of the following happens: 
	\begin{itemize}
		\item[(i)] $D_ag\neq 0$,
		\item[(ii)] $D_ag=0$ and $D_ah\neq a_{n+1}g+c$, with $c\in\mathbb{F}$. 		
	\end{itemize}
\end{corollary}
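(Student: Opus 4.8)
The plan is to obtain Corollary~\ref{derivative-first-nonconstant} directly as the logical negation of Proposition~\ref{derivative-first-constant}. Proposition~\ref{derivative-first-constant} asserts that, for $\lambda=(a,a_{n+1})$, the derivative $D_\lambda f$ is constant (equal to some $c\in\mathbb{F}$) \emph{if and only if} the two conditions $D_ag=0$ \emph{and} $D_ah=a_{n+1}g+c$ both hold. Since $D_\lambda f$ being non-constant is precisely the negation of $D_\lambda f$ being constant, I would simply negate the right-hand side of that biconditional.

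First I would write down the characterising condition from Proposition~\ref{derivative-first-constant}: $D_\lambda f$ is constant if and only if there exists $c\in\mathbb{F}$ such that $D_ag=0$ and $D_ah=a_{n+1}g+c$. Negating this, $D_\lambda f$ is non-constant if and only if for \emph{every} $c\in\mathbb{F}$ it is \emph{not} the case that both $D_ag=0$ and $D_ah=a_{n+1}g+c$ hold. The negation of a conjunction $(P\wedge Q)$ is $(\neg P)\vee(\neg Q)$, so this splits into the two alternatives: either $D_ag\neq 0$ (case (i)), or else $D_ag=0$ but the second equality fails for every $c$, i.e.\ $D_ah\neq a_{n+1}g+c$ for all $c\in\mathbb{F}$ (case (ii)). This is exactly the stated disjunction.

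The one point requiring a little care is the quantifier on $c$. In Proposition~\ref{derivative-first-constant} the constant $c$ is existentially quantified (the derivative is constant means it equals \emph{some} value $c$), so its negation places a universal quantifier on $c$ in case (ii); the phrase ``$D_ah\neq a_{n+1}g+c$, with $c\in\mathbb{F}$'' in the statement should be read as ``for all $c\in\mathbb{F}$.'' Under the hypothesis $D_ag=0$ and the assumption that $g$ is non-constant, one can also observe that the two cases are genuinely exclusive and that $a_{n+1}g+c$ ranges over a well-defined family as $c$ varies, so the logical negation is clean. The hypothesis that $g$ and $h$ are non-constant is not needed for the logical equivalence itself but ensures the statement is stated for the intended meaningful regime.

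No genuine obstacle is expected here: the result is a formal consequence of Proposition~\ref{derivative-first-constant}, and the proof is a one-line negation of a biconditional. The only thing I would be vigilant about is correctly propagating the existential quantifier on $c$ through the negation, which is the sole substantive logical step.
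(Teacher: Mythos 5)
Your proposal is correct and matches the paper exactly: the paper gives no separate proof, stating only that the corollary "can be deduced from Proposition \ref{derivative-first-constant}," which is precisely the negation of the biconditional that you carry out. Your extra care with the quantifier on $c$ is a sensible clarification of what the paper leaves implicit.
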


\begin{proposition}
	If $f=x_{n+1}g(x_1,...,x_n)+h(x_1,...,x_n)$, with $n$ even and $g$ bent, then $f$ has a trivial linear space.
\end{proposition}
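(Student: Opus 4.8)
The plan is to determine exactly which $\lambda=(a,a_{n+1})\in\mathbb{F}^n\times\mathbb{F}$ produce a constant derivative $D_\lambda f$, and then to show that the bentness of $g$ forces $\lambda=0$. The key tool is Proposition~\ref{derivative-first-constant}, which says that $D_\lambda f=c$ for some $c\in\mathbb{F}$ if and only if both $D_ag=0$ and $D_ah=a_{n+1}g+c$ hold simultaneously. So the whole problem reduces to analyzing this pair of conditions and showing it can be satisfied only by the zero vector together with $c=0$.

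First I would pin down the component $a$. Since $g$ is bent and $n$ is even, Theorem~\ref{bent-thm} guarantees that $D_ag$ is balanced for every nonzero $a\in\mathbb{F}^n$. A balanced function on $\mathbb{F}^n$ has weight $2^{n-1}$, hence is never constant (because $0<2^{n-1}<2^n$); in particular it is never the zero function. Therefore the first condition $D_ag=0$ can hold only when $a=0$, which forces $a=0$ immediately.

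Next I would handle $a_{n+1}$. Setting $a=0$, the trivial derivative gives $D_0h=0$, so the second condition collapses to $0=a_{n+1}g+c$, i.e. $a_{n+1}g=c$ is a constant function. Here I would invoke that a bent function is non-constant: since $\mathcal{N}(g)=2^{n-1}-2^{n/2-1}>0$ for even $n\geq 2$, the function $g$ is not affine and in particular not constant. Hence $a_{n+1}g$ is constant only if $a_{n+1}=0$, and this in turn forces $c=0$. Combining the two steps yields $\lambda=(a,a_{n+1})=0$, so $V(f)=\{0\}$ and $f$ has trivial linear space.

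Once Proposition~\ref{derivative-first-constant} is in hand the argument is essentially a two-line deduction, so I do not expect a serious obstacle. The only points requiring a word of care are the two elementary non-constancy observations, namely that a balanced function cannot be constant and that a bent function cannot be constant; both follow directly from their defining weight and nonlinearity conditions, and it is worth stating them explicitly so that the implications $D_ag=0\Rightarrow a=0$ and $a_{n+1}g=c\Rightarrow a_{n+1}=0$ are fully justified.
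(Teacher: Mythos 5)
Your proof is correct and follows essentially the same route as the paper: both reduce to the decomposition $D_\lambda f\sim_A x_{n+1}D_ag+a_{n+1}g+D_ah$ and use that derivatives of a bent $g$ at nonzero points are balanced (hence nonzero) to rule out $a\neq 0$, then use non-constancy of $g$ itself to rule out $\lambda=(0,1)$. You merely phrase the case analysis through Proposition~\ref{derivative-first-constant} instead of its contrapositive Corollary~\ref{derivative-first-nonconstant}, which is an immaterial difference.
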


\begin{proof}
	Suppose that $g$ is a bent function and let $\lambda=(a,a_{n+1})\in\mathbb{F}^n\times\mathbb{F}$.	By Proposition \ref{derivative-first}, we have  \(D_\lambda f\sim_A x_{n+1}D_ag+a_{n+1}g+D_ah.\) Observe that when $\lambda=(0,1)$ we have $D_\lambda f\sim_A g$ which is a non-constant function since $g$ is bent. If we show that $D_\lambda f$ is non-constant, for all $\lambda=(a,a_{n+1})\in(\mathbb{F}^n\times\{0\})\times\mathbb{F}$, then we are done. Since $g$ is bent then $D_ag$ is balanced (i.e. nonzero), for any $a\in\mathbb{F}^n\setminus\{0\}$, and so we conclude by Corollary \ref{derivative-first-nonconstant}(i) that $D_\lambda f$ is non-constant, for all $\lambda=(a,a_{n+1})\in(\mathbb{F}^n\times\{0\})\times\mathbb{F}$.
\end{proof}

In the next result, we apply Corollary \ref{derivative-first-nonconstant} to show that some balanced functions constructed in Proposition \ref{balanced-construction-1} have trivial linear space. 

\begin{proposition}\label{balanced-nonconstant-derivatives}
	Let $f$ be as constructed in Proposition \ref{balanced-construction-1}. If $n\geq 3$ is odd and $\tilde{g}$ with restriction to $\mathbb{F}^{n-1}$ is bent, then the linear space of $f$ is trivial.
\end{proposition}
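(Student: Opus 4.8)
The plan is to reduce to the explicit representative, then read off all linear structures coordinate by coordinate using the derivative criterion already proved. Since $|V(\cdot)|$ is invariant under affine equivalence (Lemma~\ref{equivalence-derivatives}), I may replace $f$ by the explicit function $x_{n+1}g(x_1,\dots,x_n)+h(x_1,\dots,x_{n-1})$, with $g=\tilde g(x_1,\dots,x_{n-1})+x_n$ and $h=\tilde h(x_1,\dots,x_{n-2})+x_{n-1}$, and it then suffices to show this representative has trivial linear space. So I take an arbitrary linear structure $\lambda=(a,a_{n+1})\in\mathbb{F}^n\times\mathbb{F}$, meaning $D_\lambda f$ is constant, and aim to force $\lambda=0$. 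By Proposition~\ref{derivative-first-constant}, $D_\lambda f$ being constant is equivalent to the pair of conditions $D_ag=0$ and $D_ah=a_{n+1}g+c$ for some $c\in\mathbb{F}$.

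The heart of the argument is the first condition. Writing $a=(a',a_n)$ with $a'=(a_1,\dots,a_{n-1})$, the special form of $g$ gives $D_ag=D_{a'}\tilde g+a_n$. The crucial input is that $\tilde g_{\restriction\mathbb{F}^{n-1}}$ is bent, which makes sense since $n-1$ is even (as $n$ is odd); by Theorem~\ref{bent-thm} the derivative $D_{a'}\tilde g$ is then balanced, hence non-constant, for every $a'\neq 0$. Consequently $D_ag=0$ can hold only when $a'=0$, and in that case $D_ag=a_n$, which forces $a_n=0$. Thus the first condition alone already pins down $a=0$.

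Finally I would substitute $a=0$ into the second condition, which turns $D_ah=a_{n+1}g+c$ into $0=a_{n+1}g+c$, i.e. $a_{n+1}g$ is constant. Since $g=\tilde g+x_n$ genuinely depends on $x_n$ (and $\tilde g$ is bent, hence non-constant), $g$ itself is non-constant, so necessarily $a_{n+1}=0$ (and then $c=0$). Hence $\lambda=(a,a_{n+1})=0$, which shows $V(f)=\{0\}$ is trivial.

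I expect the only genuine work to be the implication $D_ag=0\Rightarrow a=0$ extracted from bentness; everything else is bookkeeping via Proposition~\ref{derivative-first-constant} and the affine-invariance of Lemma~\ref{equivalence-derivatives}. The single point to state carefully is that balancedness of $D_{a'}\tilde g$ prevents it from equalling the constant $a_n$ needed to make $D_ag$ vanish, which is precisely where Theorem~\ref{bent-thm} enters, together with the observation that $g$ is non-constant so that the residual condition $a_{n+1}g=\text{const}$ kills the last coordinate.
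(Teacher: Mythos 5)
Your proposal is correct and follows essentially the same route as the paper: both arguments reduce everything to showing $D_ag\neq 0$ for all $a\neq 0$ via the decomposition $D_ag=D_{a'}\tilde g+a_n$ and the balancedness of derivatives of the bent function $\tilde g$, and then dispose of the remaining direction $\lambda=(0,a_{n+1})$ using the non-constancy of $g$. The only cosmetic difference is that you phrase it contrapositively through Proposition~\ref{derivative-first-constant} while the paper works case-by-case through Corollary~\ref{derivative-first-nonconstant}; these are the same criterion.
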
  

\begin{proof}
	Assume that $\tilde{g}$, with restriction to $\mathbb{F}^{n-1}$, is bent and let $\lambda=(a,a_{n+1})\in\mathbb{F}^n\times\mathbb{F}$.	We know, by Proposition \ref{derivative-first}, that  \(D_\lambda f\sim_A x_{n+1}D_ag+a_{n+1}g+D_ah.\) Observe that when $\lambda=(0,1)$ we have $D_\lambda f\sim_A g$ which is clearly non-constant as $\tilde{g}$ is bent. Now we remain to show that $D_\lambda f$ is non-constant, for all $\lambda=(a,a_{n+1})\in(\mathbb{F}^n\setminus\{0\})\times\mathbb{F}$. We know from Corollary \ref{derivative-first-nonconstant} that if $D_ag$ is nonzero then $D_\lambda f$ is non-constant. So we can simply show that $D_a g$ is nonzero, for all $a\in\mathbb{F}^n\setminus\{0\}$.
	
	Let $a=(\tilde{a},a_n)\in\mathbb{F}^{n-1}\times\mathbb{F}$, where $\tilde{a}=(a_1,...,a_{n-1})$. If $\tilde{a}=(0,...,0)$ and $a_n=1$, then we have $D_a g=1$ which is nonzero. If $a=(\tilde{a},1)$, with $\tilde{a}\in\mathbb{F}^{n-1}\setminus\{0\}$, we have $D_ag=D_{\tilde{a}}\tilde{g}+1$ which must be nonzero as $D_{\tilde{a}}\tilde{g}$ is balanced because $\tilde{g}$ is bent. If $a=(\tilde{a},0)$, with $\tilde{a}\in\mathbb{F}^{n-1}\setminus\{0\}$, we have $D_ag=D_{\tilde{a}}\tilde{g}$ which is balanced as $\tilde{g}$ is bent. Thus, $D_ag$ is nonzero, for all $a\in\mathbb{F}^n\setminus\{0\}$. Hence the linear space of $f$ is trivial. 
\end{proof}

Notice that we can apply similar arguments as in the proof of Proposition~\ref{balanced-nonconstant-derivatives} to show that the linear space for any function of the form given in Proposition~\ref{balanced-construction-2}, with $\tilde{g}_1$ bent, is trivial.

\begin{example}
	For any positive odd integer $n\geq 3$, a function of the form: \[f=x_{n+1}(x_1x_2+\cdots +x_{n-2}x_{n-1}+x_n)+h(x_1,...,x_{n-2})+x_{n-1}\] is balanced and its linear space is trivial.
\end{example}

Next we determine whether the linear space of any balanced cubic function of the form \eqref{convolution-product} [i.e., $f=x_{n+1}g(x_1,...,x_n)+(1+x_{n+1})h(x_1,...,x_n)$, with $\deg(g),\deg(h)\leq 2$] is trivial. From Theorem \ref{balanced-cubic}, we know that such functions are balanced if and only if either both $g$ and $h$ are balanced or $g=h\circ \varphi +1$, for some unbalanced quadratics $g$ and $h$, and an affinity $\varphi$.

\begin{proposition}\label{balanced-quadratic-bent-derivatives}
	Let $f=x_{n+1}g(x_1,...,x_n)+(1+x_{n+1})h(x_1,...,x_n)$ on $\mathbb{F}^{n+1}$, with $n$ even, be cubic such that $g$ and $h$ are quadratic bent related by $g=h\circ \varphi +1$, for some affinity $\varphi$. Then the linear space of $f$ is trivial.
\end{proposition}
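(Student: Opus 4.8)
The plan is to reduce everything to the additive form already analysed in Proposition~\ref{derivative-first} and then invoke Corollary~\ref{derivative-first-nonconstant}. Expanding the convolutional product gives
\[
f = x_{n+1}g + (1+x_{n+1})h = x_{n+1}(g+h) + h,
\]
so writing $q := g+h$ we have $f = x_{n+1}q + h$ with $q,h \in B_n$. Two structural facts drive the argument. First, every monomial of $x_{n+1}q$ carries the variable $x_{n+1}$ while no monomial of $h$ does, so the two summands cannot cancel and $\deg f = \max\{1+\deg q,\,2\}$; hence the hypothesis that $f$ is cubic forces $\deg q = 2$, i.e. the quadratic part of $g+h$ does not vanish. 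Second, $h$ is bent, so by Theorem~\ref{bent-thm} every nonzero derivative $D_a h$ is balanced, and in particular $V(h)=\{0\}$.

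Next I would run the derivative computation. By Proposition~\ref{derivative-first}, for $\lambda=(a,a_{n+1})\in\mathbb{F}^n\times\mathbb{F}$ the derivative satisfies $D_\lambda f \sim_A x_{n+1}D_a q + a_{n+1}q + D_a h$, and since affine equivalence preserves constancy (Lemma~\ref{equivalence-derivatives}) it suffices to show this representative is non-constant for every nonzero $\lambda$. Since both $q$ and $h$ are non-constant, Corollary~\ref{derivative-first-nonconstant} applies, and I would split into cases on $a$ and $a_{n+1}$. The case $a=0$ (so $a_{n+1}=1$) is immediate: the representative reduces to $q$, which is non-constant because $\deg q = 2$. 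For $a\neq 0$ with $D_a q \neq 0$, part (i) of Corollary~\ref{derivative-first-nonconstant} already yields non-constancy.

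The crux is the remaining case, $a\neq 0$ with $D_a q = 0$, where part (i) is unavailable and I must verify the condition of part (ii), namely that $D_a h \neq a_{n+1}q + c$ for all $c\in\mathbb{F}$. Here I split on $a_{n+1}$: if $a_{n+1}=1$ the right-hand side $q+c$ has degree $2$, whereas $D_a h$ has degree at most $1$ (as $h$ is quadratic), so equality is impossible on degree grounds; if $a_{n+1}=0$ the right-hand side is the constant $c$, whereas $D_a h$ is balanced — hence non-constant — because $h$ is bent and $a\neq 0$. In every case $D_\lambda f$ is non-constant, so $V(f)=\{0\}$.

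The main obstacle is precisely this last case, where $a$ is a nonzero linear structure of $g+h$; this genuinely occurs, since a sum of bent functions need not be bent, so $V(g+h)$ may be nontrivial. It is resolved by combining the degree gap supplied by the cubic hypothesis with the triviality of $V(h)$ coming from bentness. The cubic hypothesis is essential: if $g$ and $h$ merely satisfied $g=h+1$ (the case $\varphi=\mathrm{id}$), then $f=x_{n+1}+h$ would have degree $2$ and admit the nonzero linear structure $(0,\dots,0,1)$, so the conclusion would fail without it.
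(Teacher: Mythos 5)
Your proof is correct and follows essentially the same route as the paper's: the same reduction to $f=x_{n+1}(g+h)+h$, the same observation that the cubic hypothesis forces $g+h$ to be quadratic, and the same case split on whether $D_a(g+h)$ vanishes, settled by the degree gap when $a_{n+1}=1$ and by balancedness of $D_ah$ (bentness of $h$) when $a_{n+1}=0$. Your closing remark on why the cubic hypothesis cannot be dropped is a useful addition not present in the paper.
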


\begin{proof}
	Suppose that both $g$ and $h$, with restrictions to $\mathbb{F}^n$, are bent. Let $\lambda=(a,a_{n+1})\in\mathbb{F}^n\times\mathbb{F}$. Observe that \(f=x_{n+1}(g+h)+h\), and so $f$ is cubic if and only if $g+h$ is a quadratic function. So we assume that $g+h$ is quadratic. By Proposition \ref{derivative-first}, we have  \(D_\lambda f\sim_A x_{n+1}D_a(g+h)+a_{n+1}(g+h)+D_ah.\) Observe that when $\lambda=(1,0)$ we have $D_\lambda f\sim_A g+h$ which is non-constant as we assumed that $g+h$ is quadratic. 
	
	Next we prove that $D_\lambda f$ is non-constant, for all $\lambda=(a,a_{n+1})\in(\mathbb{F}^n\setminus\{0\})\times\mathbb{F}$. By Corollary \ref{derivative-first-nonconstant}(i), we know that if $D_a(g+h)\neq 0$, then $D_\lambda f$ is non-constant. Now we show that $D_\lambda f$ is still non-constant if $D_a(g+h)=0$, for some $a\in\mathbb{F}^n\setminus\{0\}$.  Assume that $D_a(g+h)=0$, for some $a\in\mathbb{F}^n\setminus\{0\}$. Then we have \(D_\lambda f\sim_A a_{n+1}(g+h)+D_ah.\) If $a_{n+1}=0$ then \(D_\lambda f\sim_A D_ah\), and so it is non-constant since $D_ah$ has to be balanced as $h$ is bent. If $a_{n+1}=1$ then \(D_\lambda f\sim_A g+h+D_ah\) which is also non-constant since $g+h$ is a quadratic and $D_ah$ has degree $1$ as it is balanced.
\end{proof}

\begin{remark}\label{degree-conv-prod}
	Since the convolutional product of $g$ and $h$ can be reduced to $f=x_{n+1}(g+h)+h$, so either $\deg(f)=\deg(h)$ [this happens when $\deg(g+h)<\deg(h)$] or $\deg(f)=\max\{\deg(g),\deg(h)\}+1$. Moreover, we can use Theorem~\ref{weight-factoring}, to deduce that $\mathrm{w}(f)=\mathrm{w}(g_{\restriction\mathbb{F}^n})+\mathrm{w}(h_{\restriction\mathbb{F}^n})$ and $f$ is balanced if $g$ and $h$ are balanced.
\end{remark}

Finally, we determine some balanced functions constructed in Proposition~\ref{balanced-bent} [i.e., $f=x_{n+1}g(x_1,...,x_n)+(1+x_{n+1})h(x_1,...,x_n)$, where $g$ and $h$ are both bent and $\mathrm{w}(g)\neq \mathrm{w}(h)$] which have trivial linear space.

\begin{proposition}\label{balanced-bent-derivatives}
	Let $f=x_{n+1}g(x_1,...,x_n)+(1+x_{n+1})h(x_1,...,x_n)$, with $n$ even, be a B.f. on $n+1$ variables such that $g$ and $h$ are both bent. Then the linear space of $f$ is trivial if $\deg(f)=\max\{\deg(g),\deg(h)\}+1$.
\end{proposition}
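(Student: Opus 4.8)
The plan is to reduce everything to the non-constancy criterion of Corollary~\ref{derivative-first-nonconstant}. First I would rewrite the convolutional product in the economical form $f = x_{n+1}(g+h) + h$ (as in Remark~\ref{degree-conv-prod}), so that $f$ has the shape $x_{n+1}G + H$ with $G = g+h$ and $H = h$, both functions of $x_1,\dots,x_n$. Since $g$ and $h$ are bent, neither can be affine, so $\deg g,\deg h \geq 2$; set $d = \max\{\deg g,\deg h\} \geq 2$. A preliminary observation, which is what I actually intend to use, is that the hypothesis $\deg f = d+1$ is \emph{equivalent} to $\deg(g+h) = d$: because the two summands $x_{n+1}(g+h)$ and $h$ share no monomials, one has $\deg f = \max\{1+\deg(g+h),\,\deg h\}$, and this equals $d+1$ precisely when the top-degree parts of $g$ and $h$ do not cancel in $g+h$.

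Next I would show that $D_\lambda f$ is non-constant for every nonzero $\lambda = (a,a_{n+1}) \in \mathbb{F}^n \times \mathbb{F}$. By Proposition~\ref{derivative-first} we have $D_\lambda f \sim_A x_{n+1}D_a(g+h) + a_{n+1}(g+h) + D_a h$, and since both degree and constancy are invariant under affine equivalence it suffices to work with this representative. I would split into cases. For $\lambda = (0,1)$ the expression collapses to $g+h$, which is non-constant because $\deg(g+h) = d \geq 2$. For $a \neq 0$ with $D_a(g+h) \neq 0$, Corollary~\ref{derivative-first-nonconstant}(i) gives non-constancy at once. For $a \neq 0$ with $D_a(g+h) = 0$ and $a_{n+1} = 0$, the representative reduces to $D_a h$, which is balanced, hence non-constant, because $h$ is bent and $a \neq 0$ (Theorem~\ref{bent-thm}).

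The one case carrying real content is $a \neq 0$, $D_a(g+h) = 0$, $a_{n+1} = 1$, where $D_\lambda f \sim_A (g+h) + D_a h$ and I must rule out that this is constant. Here the degree hypothesis is indispensable: the top homogeneous part of $h$ is translation-invariant, so it cancels in $D_a h = h(x+a)+h(x)$, forcing $\deg(D_a h) \leq \deg h - 1 \leq d-1$. Since $\deg(g+h) = d$ by hypothesis, the lower-degree term $D_a h$ cannot cancel the degree-$d$ part of $g+h$, whence $\deg\big((g+h)+D_a h\big) = d \geq 2$ and the sum is non-constant. In the language of Corollary~\ref{derivative-first-nonconstant}(ii), this says $D_a h \neq (g+h)+c$ for every $c \in \mathbb{F}$. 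Exhausting all nonzero $\lambda$ then yields that the linear space of $f$ is trivial.

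I expect the main obstacle to be exactly this last case. Without the hypothesis $\deg f = d+1$ one could have $\deg(g+h) < d$, and then it becomes genuinely possible for $D_a h$ to cancel $g+h$ up to a constant; so the degree bookkeeping, together with the fact that a derivative strictly lowers algebraic degree, is the crux. All remaining cases follow routinely from Theorem~\ref{bent-thm} and Corollary~\ref{derivative-first-nonconstant}.
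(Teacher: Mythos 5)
Your proposal is correct and follows essentially the same route as the paper's proof: the reduction to $f=x_{n+1}(g+h)+h$, the observation that the degree hypothesis forces $\deg(g+h)=\max\{\deg(g),\deg(h)\}$, the case split via Proposition~\ref{derivative-first} and Corollary~\ref{derivative-first-nonconstant}, and the final degree comparison $\deg(g+h)>\deg(D_ah)$ in the case $D_a(g+h)=0$, $a_{n+1}=1$. Your explicit justification that a derivative strictly lowers algebraic degree is a slightly more careful statement of the step the paper asserts directly, but the argument is the same.
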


\begin{proof}
	Recall that \(D_\lambda f\sim_A x_{n+1}D_a(g+h)+a_{n+1}(g+h)+D_ah\), for $\lambda=(a,a_{n+1})\in\mathbb{F}^n\times\mathbb{F}$ (see Proposition~\ref{balanced-quadratic-bent-derivatives}). Observe that $f=x_{n+1}(g+h)+h$. We are given that $\deg(f)=\max\{\deg(g),\deg(h)\}+1$. So it follows that $\deg(g+h)=\max\{\deg(g),\deg(h)\}$, implying that $g+h$ is non-constant since $g$ and $h$ are bent. When $\lambda=(0,1)$, we have $D_\lambda f\sim_A g+h$ which is non-constant. 
	
	Now we prove that $D_\lambda f$, for all $\lambda=(a,a_{n+1})\in(\mathbb{F}^n\setminus\{0\})\times\mathbb{F}$, is non-constant. If $D_a(g+h)\neq 0$, then $D_\lambda f$ is non-constant, by Corollary \ref{derivative-first-nonconstant}(i). Suppose that $D_b(g+h)=0$, for some $b\in\mathbb{F}^n\setminus\{0\}$. We need to show that $D_\lambda f$ is still non-constant, for $\lambda=(b,a_{n+1})\in(\mathbb{F}^n\setminus\{0\})\times\mathbb{F}$. In this case we have \(D_\lambda f\sim_A a_{n+1}(g+h)+D_bh.\) If $a_{n+1}=0$ then we have \(D_\lambda f\sim_A D_bh\) which is non-constant since $D_bh$ has to be balanced as $h$ is bent. If $a_{n+1}=1$ then we have \(D_\lambda f\sim_A g+h+D_bh\). Since $\deg(g+h)=\max\{\deg(g),\deg(h)\}$, so we have $\deg(g+h)=\max\{\deg(g),\deg(h)\}>\deg(D_bh)$, implying that $\deg(D_\lambda f)=\deg(g+h)>\deg(D_bh)$. So $D_\lambda f$ must be non-constant. Hence the linear space of $f$ is trivial. 
\end{proof}

Let $\alpha=(a,a_{n+1})\in\mathbb{F}^n\times\mathbb{F}$. Observe that, from Equation \eqref{walsh}, we obtain $|W_f(\alpha)|\leq |W_{g_{\restriction\mathbb{F}^n}}|+|W_{h_{\restriction\mathbb{F}^n}}|$. So it follow that the nonlinearity of $f$ in Propositions~ \ref{balanced-quadratic-bent-derivatives} and \ref{balanced-bent-derivatives} is
\begin{align}\label{nonlinearity-eqn}
\mathcal{N}(f)&=2^n-\frac{1}{2}\max_{\alpha\in\mathbb{F}^{n+1}}|W_f(\alpha)|\nonumber\\&\geq 2^n-\frac{1}{2}\max_{\alpha\in\mathbb{F}^{n+1}}(|W_{g_{\restriction\mathbb{F}^n}}|+|W_{h_{\restriction\mathbb{F}^n}}|)\nonumber\\&\geq 2^{n-1}-\frac{1}{2}\max_{\alpha\in\mathbb{F}^{n+1}}|W_{g_{\restriction\mathbb{F}^n}}|+2^{n-1}-\frac{1}{2}\max_{\alpha\in\mathbb{F}^{n+1}}|W_{h_{\restriction\mathbb{F}^n}}|\nonumber\\&=\mathcal{N}(g_{\restriction\mathbb{F}^n})+\mathcal{N}(h_{\restriction\mathbb{F}^n})
\end{align}
This suggests a way of constructing B.f.'s with high nonlinearity. For instance, from the relation \eqref{nonlinearity-eqn}, we deduce that the nonlinearity of the balanced function $f$ constructed in Propositions \ref{balanced-quadratic-bent-derivatives} and \ref{balanced-bent-derivatives} is $\mathcal{N}(f)\geq 2^{N-1}-2^{\frac{N-1}{2}}$, with $N=n+1$.

\begin{example}
	Let $g=x_1x_2+x_3x_4+1$ and $h=x_1x_4+x_2x_3$. The cubic function $f=x_5g+(1+x_5)h$ is balanced and its linear space is trivial. It can be easily verified that $\mathcal{N}(f)=12$, implying that $f$ is semi-bent. Note that both $g$ and $h$, with restriction to $\mathbb{F}^4$, are bent related by $g=h\circ \varphi +1$, where \(\varphi=A(x_1,x_2,x_3,x_4)^T\),  with \[A=\left(\begin{array}{cccc}
	1 & 0 & 0 & 0 \\ 
	0 & 0 & 0 & 1 \\ 
	0 & 0 & 1 & 0 \\ 
	0 & 1 & 0 & 0
	\end{array}\right).\]
\end{example}

\section{APN functions in even dimension}\label{APN functions-section}
In this section we study the linear spaces of components of APN functions in even dimension. We show that, for any APN function, there must be a component with trivial linear space. We also provide a general form for the number of bent components in quadratic APN function and show bounds on their number. 
\subsection{Linear space for components of APN functions in even dimension} 
We first give some definitions and results which are crucial in studying the linear spaces of components of APN functions in even dimension.   
 
\begin{definition}\label{splitting-function}
	A B.f. $f$ on $n$ variables is called a {\em splitting function} if we have $f\sim_A g(x_1,...,x_i)+h(x_{i+1},...,x_n)$, for some positive integer $i$, $g\in B_i$ and $h\in B_{n-i}$. We say that $i$ is a {\em splitting number} of $f$ and $S(f)$ denotes the set of all splitting numbers of $f$. We define a {\em splitting index} of $f$ as the number $\sigma(f)=\min S(f)$.
\end{definition}

\begin{remark}
	Let $f\in B_n$ be a splitting function. Then 
	\begin{itemize}
		\item[1.] $i$ is a splitting number $\iff$ $n-i$ is a splitting number,
		\item[2.] clearly, $\sigma(f)\in\{1,...,\lfloor n/2\rfloor\}$.
	\end{itemize}
\end{remark}

\begin{lemma}\label{linear-subspace-splitting}
	Let $f\in B_n$. Then $\sigma(f)=1$ if and only if $\dim V(f)\geq 1$. 
\end{lemma}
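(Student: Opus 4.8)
The plan is to prove both directions of the biconditional by translating the notion of splitting into the language of linear structures.

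First I would prove the easier direction: $\sigma(f)=1 \Rightarrow \dim V(f)\ge 1$. If $\sigma(f)=1$, then by definition $f\sim_A g(x_1)+h(x_2,\dots,x_n)$ for some $g\in B_1$ and $h\in B_{n-1}$. Here $g(x_1)$ is an affine function of the single variable $x_1$, so either $g$ is constant or $g=x_1$ (up to an additive constant). If $g$ is constant, then $f$ is affine-equivalent to a function not depending on $x_1$, which forces $D_{e_1}f\sim_A 0$; and if $g=x_1+c$, then $D_{e_1}f\sim_A 1$. In either case $e_1$ is a linear structure of the representative $g+h$, so by Lemma~\ref{equivalence-derivatives} (affine invariance of $|V(\cdot)|$) we get $\dim V(f)\ge 1$.

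For the converse $\dim V(f)\ge 1 \Rightarrow \sigma(f)=1$, I would start from a nonzero linear structure $a\in V(f)$, so $D_af$ is constant. Using affine invariance of $V(f)$ (Lemma~\ref{equivalence-derivatives}), I may change coordinates by an invertible $M$ with $M e_1 = a$, reducing to the case $a=e_1$; thus after replacing $f$ by an affine-equivalent function I assume $D_{e_1}f=c$ for some $c\in\mathbb{F}$. Writing $f$ in the factored form $f=x_1\,g(x_2,\dots,x_n)+h(x_2,\dots,x_n)$, the condition $D_{e_1}f=c$ means (by the computation underlying Proposition~\ref{derivative-first-constant}, with the roles of the variables relabeled) that $g\equiv c$ is constant. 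Hence $f=c\,x_1+h(x_2,\dots,x_n)$, which is precisely of the form (affine function of $x_1$) plus (function of the remaining variables), so $1\in S(f)$ and therefore $\sigma(f)=1$.

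The main obstacle I expect is purely bookkeeping: making sure the affine change of coordinates that sends a general linear structure $a$ to a coordinate vector $e_1$ is applied correctly and that the splitting property is genuinely preserved under the equivalence $\sim_A$ used in Definition~\ref{splitting-function}, since that definition builds affine equivalence into the notion of splitting. One must also handle the constant case cleanly: when $D_{e_1}f=0$ the variable $x_1$ is absent and the ``$g$-part'' is $0$, whereas when $D_{e_1}f=1$ the $g$-part is the honest affine function $x_1$; both cases yield $1\in S(f)$. No delicate estimates are needed — the whole argument rests on the elementary fact that a nonzero linear structure is exactly what lets one isolate a single variable entering only affinely, combined with the affine-invariance lemma already established.
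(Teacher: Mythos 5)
Your proof is correct and follows essentially the same route as the paper's: both directions rest on transporting the linear structure to $e_1$ via an invertible change of coordinates, using the affine invariance of $V(\cdot)$ from Lemma~\ref{equivalence-derivatives}, and identifying $D_{e_1}\tilde f$ with the coefficient of $x_1$ in the factored form $\tilde f=x_1g+h$. The only cosmetic difference is your explicit case split on whether the one-variable part $g(x_1)$ is constant or $x_1+c$, which the paper subsumes into the single observation that $D_{e_1}\tilde f$ is constant.
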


\begin{proof}
	Suppose that $\sigma(f)=1$, that is, $f\sim_A \tilde{f}=g(x_1)+h(x_2,...,x_n)$. So we have $\tilde{f}=f\circ \varphi$, where $\varphi(y)=My+w$, for some $w\in\mathbb{F}^n$ and invertible $M\in GL_n(\mathbb{F})$. Clearly $D_{e_1}\tilde{f}$ is constant. By Equation \eqref{equivalence-derivatives-eqn} in the proof Lemma \ref{equivalence-derivatives}, we have $D_{e_1}\tilde{f}=(D_{Me_1}f)\circ \varphi$ and since $\mathrm{w}(D_{e_1}\tilde{f})=\mathrm{w}(D_{Me_1}f)$ (see Proposition~\ref{equivalence-properties}), so $(D_{Me_1}f)\circ \varphi$ must also be constant. Note that $Me_1\neq 0$ since $M$ is a linear isomorphism. Thus both $0$ and $Me_1$ are in $V(f)$ which implies that $\dim V(f)\geq 1$.
	
	Conversely, suppose that $\dim V(f)\geq 1$, that is, $\exists a\neq 0\in V(f)$ such that $D_af=c$, with $c\in \mathbb{F}$. We can take the $\mathbb{F}$-linear isomorphism $E$ of $\mathbb{F}^n$ that sends $e_1\mapsto Ee_1=a$ so that we have $\tilde{f}=f\circ E$ and thus, \[D_{e_1}\tilde{f}=(D_{Ee_1}f)\circ E=(D_af)\circ E=(c)\circ E=c\] which implies that $D_{e_1}\tilde{f}$ is constant. Since we have $D_{e_1}\tilde{f}=c$, so we can write $\tilde{f}=cx_1+h(x_2,...,x_n)$. Hence $\sigma(f)=1$ as $f\sim_A \tilde{f}$.
\end{proof}

\begin{remark}\label{split-weight-fourier}
	If $g(x_1,...,x_s)$, with a positive integer $s<n$, is in $B_n$ then we have $\mathrm{w}(g)=2^{n-s}\mathrm{w}(g_{\restriction\mathbb{F}^s})$.
\end{remark}

The preceding remark is useful in the following.

\begin{lemma}\label{balanced-split-1}
		Let $f\in B_n$, with $n$ even. If $\sigma(f)=1$, then \(|\Gamma(f)|\leq 2^n-4.\)
\end{lemma}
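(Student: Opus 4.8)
The plan is to use affine invariance to put $f$ into the separated normal form guaranteed by $\sigma(f)=1$, reduce the count of balanced derivatives to the $(n-1)$-variable factor, and then invoke the nonexistence of bent functions in odd dimension.

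First I would note that $|\Gamma(f)|$ is invariant under affine equivalence (Lemma \ref{equivalence-derivatives}), so I may replace $f$ by any affine-equivalent representative. Since $\sigma(f)=1$, the function $f$ is a splitting function with splitting number $1$, so exactly as in the converse part of Lemma \ref{linear-subspace-splitting} we have $f\sim_A cx_1+h(x_2,\dots,x_n)$ for some $c\in\mathbb{F}$ and some $h\in B_{n-1}$; absorbing the additive constant into $h$, I assume outright that $f=cx_1+h(x_2,\dots,x_n)$.

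Second, I would compute the derivative explicitly. Writing $a=(a_1,\tilde a)\in\mathbb{F}\times\mathbb{F}^{n-1}$, one gets $D_af=ca_1+D_{\tilde a}h(x_2,\dots,x_n)$, a function that does not involve $x_1$. Hence, by Remark \ref{split-weight-fourier}, its weight on $\mathbb{F}^n$ is twice the weight of $ca_1+D_{\tilde a}h$ on $\mathbb{F}^{n-1}$; since adding the constant $ca_1$ does not affect balancedness, $D_af$ is balanced if and only if $D_{\tilde a}h$ is balanced. This shows $\Gamma(f)=\mathbb{F}\times\Gamma(h)$, and therefore $|\Gamma(f)|=2|\Gamma(h)|$.

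Third, I would bound $|\Gamma(h)|$ from above. Because $D_0h=0$ is not balanced, $0\notin\Gamma(h)$, so $|\Gamma(h)|\leq 2^{n-1}-1$; equality here would force every nonzero derivative of $h$ to be balanced, i.e.\ $h$ bent by Theorem \ref{bent-thm}. But $h$ has $n-1$ variables, and since $n$ is even, $n-1$ is odd, so no bent function exists in this dimension; hence $|\Gamma(h)|\leq 2^{n-1}-2$. Combining, $|\Gamma(f)|=2|\Gamma(h)|\leq 2(2^{n-1}-2)=2^n-4$. The only genuinely delicate point is the reduction in the first two steps: checking that $\sigma(f)=1$ truly yields the separated form $cx_1+h$ and that differentiating decouples the $x_1$-direction so the count factors exactly as $2|\Gamma(h)|$. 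Once the problem is reduced to the odd-dimensional factor $h$, the bound follows immediately from the parity obstruction to bentness; I would also sanity-check the boundary case $n=2$, where the statement reads $|\Gamma(f)|\leq 0$ and splitting forces $f$ to be affine, to confirm nothing degenerates.
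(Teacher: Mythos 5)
Your proof is correct, and while it rests on the same two pillars as the paper's argument (reduction to the normal form $cx_1+h(x_2,\dots,x_n)$ via affine invariance of $|\Gamma(\cdot)|$, plus the nonexistence of bent functions in odd dimension), you organize it differently and, in my view, more cleanly. The paper proceeds by contradiction in two stages: it first assumes $|\Gamma(\tilde f)|=2^n-2$, shows that every nonzero derivative of $h$ would then be balanced so that $h_{\restriction\mathbb{F}^{n-1}}$ would be bent in odd dimension, and then separately rules out $|\Gamma(\tilde f)|=2^n-3$ by the pairing argument that $D_{d}\tilde f$ unbalanced forces $D_{d+e_1}\tilde f$ unbalanced. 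Your computation $D_af=ca_1+D_{\tilde a}h$ establishes the exact product structure $\Gamma(f)=\mathbb{F}\times\Gamma(h_{\restriction\mathbb{F}^{n-1}})$, which subsumes both stages at once: the factor of $2$ automatically encodes the paper's pairing (membership in $\Gamma(f)$ depends only on $\tilde a$), and the single bound $|\Gamma(h_{\restriction\mathbb{F}^{n-1}})|\leq 2^{n-1}-2$ from the odd-dimension bentness obstruction finishes the job. What you gain is an exact formula $|\Gamma(f)|=2|\Gamma(h_{\restriction\mathbb{F}^{n-1}})|$ rather than just an upper bound, and a proof with no case analysis; what the paper's version makes slightly more visible is the explicit list $\{0,e_1,d,d+e_1\}$ of excluded points, which is the form in which the lemma is later consumed in Theorem \ref{APN-components-linear-space}. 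Your sanity check at $n=2$ is also consistent: there $f$ is affine, all derivatives are constant, and $|\Gamma(f)|=0=2^2-4$.
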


\begin{proof}
	Suppose $f\in\mathbb{F}^n$ has $\sigma(f)=1$, that is, $f\sim_A \tilde{f}=cx_1+h(x_2,...,x_n)$, with $c\in\mathbb{F}$. By Lemma \ref{equivalence-derivatives}, we have $|\Gamma(f)|=|\Gamma(\tilde{f})|$, so we can simply consider $|\Gamma(\tilde{f})|$. It is clear that $0$ and $e_1$ are both not in $\Gamma(\tilde{f})$ since $D_0\tilde{f}=0$ and $D_{e_1}\tilde{f}=c$. Suppose that these are the only ones, that is, $|\Gamma(\tilde{f})|= 2^n-2$. This implies that, for all $a\in\mathbb{F}^n\setminus\{0,e_1\}$, $D_a\tilde{f}$ is balanced. 
	
	Let $W=<e_2,...,e_n>$ and denote $W^*=W\setminus\{0\}$.  Clearly, $W^*$ is contained in $\mathbb{F}^n\setminus\{0,e_1\}$, that is, $W^*\subset \Gamma(\tilde{f})$. So, for all $a\in W^*$, $D_a\tilde{f}$ is balanced. It is clear that $W\simeq \mathbb{F}^{n-1}$. Observe that, for any $a=(0,b)\in\{0\}\times(\mathbb{F}^{n-1}\setminus\{0\})=W^*$, we have $D_a\tilde{f}=D_bh$ as the first coordinate of $a$ is $0$. Since $D_a\tilde{f}$ does not depend on $x_1$ then, by Remark~\ref{split-weight-fourier}, we have $2^{n-1}=\mathrm{w}(D_a\tilde{f})=\mathrm{w}(D_bh)=2\mathrm{w}(D_bh_{\restriction\mathbb{F}^{n-1}}) \implies \mathrm{w}(D_bh_{\restriction\mathbb{F}^{n-1}})=2^{n-2}$, that is, $D_bh_{\restriction \mathbb{F}^{n-1}}$ is balanced, for all $b\in \mathbb{F}^{n-1}\setminus \{0\}$. This implies that $h$, with restriction to $\mathbb{F}^{n-1}$, is bent (see Theorem \ref{bent-thm}).
	
	But $n-1$ is odd as $n$ is even, so it implies that we have a bent function on $\mathbb{F}$-vector space of odd dimension, which is impossible. Thus, the assumption that $|\Gamma(\tilde{f})|=2^n-2$ is false, and so we can say $|\Gamma(\tilde{f})|\leq 2^n-3$.
	
	Suppose that $d\in\mathbb{F}^n\setminus\{0,e_1\}$ is the other nonzero element such that $D_d\tilde{f}$ is unbalanced. So $D_{d+e_1}\tilde{f}(x)=D_{e_1}\tilde{f}(x)+D_df(x+e_1)=c+D_df(x+e_1)=(c+D_df(x))\circ \varphi$, for $c\in\mathbb{F}$ and $\varphi(y)=Iy+e_1$, with $I$ as an identity in $GL_n(\mathbb{F})$. That is, $D_{d+e_1}\tilde{f}(x)\sim_A D_df(x)+c$. Since $D_df(x)$ is unbalanced then $D_df(x)+c$ must be unbalanced, implying that $D_{d+e_1}\tilde{f}(x)$ is also unbalanced. That is, $\{0,e_1, d, d+e_1\}\not\subset \Gamma(\tilde{f})$. Hence we have $|\Gamma(\tilde{f})|\leq 2^n-4$.
\end{proof}

Next we state a well-known result for characterization of APN function.
 
\begin{theorem}[\cite{Ber}]\label{APN}
	Let $F$ be a v.B.f. from $\mathbb{F}^n$ into $\mathbb{F}^n$. Then
	\begin{align}
	\sum_{\lambda\neq 0\in\mathbb{F}^n}\sum_{a\in\mathbb{F}^n}\mathcal{F}^2(D_a(F_\lambda))\geq 2^{2n+1}(2^n-1).
	\end{align}
	Moreover, $F$ is APN if and only if equality holds.
\end{theorem}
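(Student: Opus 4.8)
The plan is to rewrite the entire left-hand side in terms of the differential counts $\delta_F(a,b)$ and then exploit the parity of those counts. First I would note that, since $D_a(F_\lambda)(x)=\lambda\cdot D_aF(x)$, one has $\mathcal{F}(D_a(F_\lambda))=\sum_{x\in\mathbb{F}^n}(-1)^{\lambda\cdot D_aF(x)}$. Squaring and expanding the square as a double sum gives $\mathcal{F}^2(D_a(F_\lambda))=\sum_{x,y}(-1)^{\lambda\cdot(D_aF(x)+D_aF(y))}$.

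Next I would sum this over all $\lambda\in\mathbb{F}^n$ and interchange the order of summation, applying the orthogonality relation $\sum_{\lambda\in\mathbb{F}^n}(-1)^{\lambda\cdot v}=2^n$ when $v=0$ and $0$ otherwise. The $\lambda$-sum then collapses to $2^n\cdot|\{(x,y):D_aF(x)=D_aF(y)\}|$. Because the $\lambda=0$ term equals $\mathcal{F}^2(0)=2^{2n}$, subtracting it isolates the sum over $\lambda\neq 0$. The crucial identification is that $|\{(x,y):D_aF(x)=D_aF(y)\}|=\sum_{b\in\mathbb{F}^n}\delta_F(a,b)^2$. Summing the resulting expression over all $a\in\mathbb{F}^n$ and separating the $a=0$ contribution (where $\delta_F(0,\cdot)$ is $2^n$ at $b=0$ and $0$ elsewhere, giving $\sum_b\delta_F(0,b)^2=2^{2n}$), one finds that the $2^{3n}$ coming from this term cancels exactly against the $2^{3n}$ coming from the accumulated $\lambda=0$ corrections. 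The upshot is the clean identity that the left-hand side equals precisely $2^n\sum_{a\neq 0}\sum_b\delta_F(a,b)^2$.

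The final step is a parity argument on the differential spectrum. For each fixed $a\neq 0$ the counts satisfy $\sum_b\delta_F(a,b)=2^n$, and every $\delta_F(a,b)$ is even, since if $x$ solves $D_aF(x)=b$ then so does $x+a$ and these are distinct for $a\neq 0$. For any even nonnegative integer $t$ we have $t^2\geq 2t$, with equality exactly when $t\in\{0,2\}$; applying this termwise gives $\sum_b\delta_F(a,b)^2\geq 2\sum_b\delta_F(a,b)=2^{n+1}$. Summing over the $2^n-1$ nonzero values of $a$ yields $\sum_{a\neq 0}\sum_b\delta_F(a,b)^2\geq(2^n-1)2^{n+1}$, and hence the asserted bound $2^{2n+1}(2^n-1)$. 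Equality holds if and only if $\delta_F(a,b)\in\{0,2\}$ for all $a\neq 0$ and all $b$, which is exactly the condition $\delta(F)=2$, i.e.\ $F$ is APN.

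I expect the main obstacle to be the bookkeeping in the double-counting step, namely tracking the $\lambda=0$ and $a=0$ contributions carefully so that they cancel and leave the exact factor $2^n\sum_{a\neq 0}\sum_b\delta_F(a,b)^2$. The other decisive point is recognizing that each $\delta_F(a,b)$ is even: this is precisely what upgrades the trivial estimate $t^2\geq 0$ to the sharp $t^2\geq 2t$ and pins the equality case down to the APN condition.
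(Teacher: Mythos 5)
Your argument is correct: the orthogonality step correctly yields $\sum_{\lambda\neq 0}\sum_{a}\mathcal{F}^2(D_aF_\lambda)=2^n\sum_{a\neq 0}\sum_{b}\delta_F(a,b)^2$ (the $a=0$ contribution $2^{3n}$ does cancel the accumulated $\lambda=0$ corrections), and the parity observation that each $\delta_F(a,b)$ is even, combined with $t^2\geq 2t$ for even $t\geq 0$ with equality exactly for $t\in\{0,2\}$, gives the stated bound and pins the equality case to $\delta(F)=2$. The paper itself offers no proof of this theorem---it is quoted from the cited reference of Berger, Canteaut, Charpin and Laigle-Chapuy---and your derivation is essentially the standard proof given there, so there is nothing to add.
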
 

In the next results we discuss about the linear space for components of an APN function in even dimension. 

\begin{theorem}\label{APN-components-linear-space}
	Let a v.B.f. $F$ from $\mathbb{F}^n$ to itself, with $n$ even, be APN. Then there is $\lambda\neq 0\in\mathbb{F}^n$ such that the linear space of $F_\lambda$ is trivial.
\end{theorem}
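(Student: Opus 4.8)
The plan is to argue by contradiction: suppose that every component $F_\lambda$, with $\lambda\neq 0$, has non-trivial linear space, i.e. $\dim V(F_\lambda)\geq 1$, and to derive a contradiction with the APN equality in Theorem \ref{APN}. The engine of the argument is a lower bound on the left-hand side of that equality, coming from the observation that linear structures contribute a maximal amount to the squared Fourier coefficients of the derivatives.

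First I would record the elementary but crucial fact that if $a\in V(F_\lambda)$ then $D_aF_\lambda$ is constant, so $\mathcal{F}(D_aF_\lambda)=\pm 2^n$ and hence $\mathcal{F}^2(D_aF_\lambda)=2^{2n}$. Since $V(F_\lambda)$ is a subspace of size $2^{\dim V(F_\lambda)}$, summing only over $a\in V(F_\lambda)$ gives the termwise bound
\[
\sum_{a\in\mathbb{F}^n}\mathcal{F}^2(D_aF_\lambda)\;\geq\;2^{2n}\,2^{\dim V(F_\lambda)},
\]
and this is an equality precisely when $\mathcal{F}(D_aF_\lambda)=0$ for every $a\notin V(F_\lambda)$, that is, when $D_aF_\lambda$ is balanced off the linear space. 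Summing over all $\lambda\neq 0$ and inserting the APN equality $\sum_{\lambda\neq 0}\sum_a\mathcal{F}^2(D_aF_\lambda)=2^{2n+1}(2^n-1)$ then yields
\[
\sum_{\lambda\neq 0}2^{\dim V(F_\lambda)}\;\leq\;2(2^n-1).
\]

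Under the contradiction hypothesis each of the $2^n-1$ summands is at least $2$, so the left-hand side is at least $2(2^n-1)$; hence equality must hold throughout. This simultaneously forces $\dim V(F_\lambda)=1$ for every $\lambda\neq 0$ and, term by term, makes every bound of the previous paragraph tight, so that $D_aF_\lambda$ is balanced for all $a\notin V(F_\lambda)$. Since the $a\in V(F_\lambda)$ give constant (hence unbalanced) derivatives, I conclude $\Gamma(F_\lambda)=\mathbb{F}^n\setminus V(F_\lambda)$ and therefore $|\Gamma(F_\lambda)|=2^n-2$ for every component. At this point the even-dimension hypothesis delivers the contradiction: because $\dim V(F_\lambda)\geq 1$, Lemma \ref{linear-subspace-splitting} gives $\sigma(F_\lambda)=1$, and Lemma \ref{balanced-split-1} then forces $|\Gamma(F_\lambda)|\leq 2^n-4$, which collides with $|\Gamma(F_\lambda)|=2^n-2$. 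Hence not all components can have non-trivial linear space, and some $F_\lambda$ satisfies $\dim V(F_\lambda)=0$.

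The step I expect to be the main obstacle, and the one I would treat most carefully, is the passage from the single aggregate equality to the two termwise conclusions. Concretely, each bracket $\sum_a\mathcal{F}^2(D_aF_\lambda)-2^{2n}\,2^{\dim V(F_\lambda)}$ is non-negative, so their vanishing sum forces each bracket to vanish individually; it is exactly this tightness that pins $\dim V(F_\lambda)=1$ and simultaneously produces the sharp count $|\Gamma(F_\lambda)|=2^n-2$ needed to exceed the $2^n-4$ bound. I would also note explicitly that the oddness of $n-1$ is what makes Lemma \ref{balanced-split-1} available (a bent function on an odd-dimensional space being impossible), so the conclusion genuinely requires $n$ even; everything else in the argument is routine bookkeeping with $\mathcal{F}$.
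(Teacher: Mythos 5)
Your proof is correct and rests on exactly the same ingredients as the paper's: the APN equality of Theorem \ref{APN}, the equivalence of a non-trivial linear space with splitting index $1$ (Lemma \ref{linear-subspace-splitting}), and the bound $|\Gamma(F_\lambda)|\leq 2^n-4$ of Lemma \ref{balanced-split-1}. The only difference is organisational: the paper feeds Lemma \ref{balanced-split-1} in first, getting a strict per-component excess over $2^{2n+1}$ that violates the APN equality, whereas you run the inequality the other way, letting the APN equality force $|\Gamma(F_\lambda)|=2^n-2$ termwise and then colliding with the lemma --- the same contradiction reached in the opposite order.
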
 

\begin{proof}
	Since, by Lemma \ref{linear-subspace-splitting}, a B.f. has a nonzero linear structure if and only if its splitting index is $1$, so we simply show that for any APN function $F$ it is impossible to have $\sigma(F_\lambda)=1$, for all $\lambda\neq 0 \in\mathbb{F}^n$.
	
	Suppose, by contradiction, that $F$ is APN and $\sigma(F_\lambda)=1$, for all $\lambda\neq 0 \in\mathbb{F}^n$. By Lemma~\ref{balanced-split-1}, we can suppose that, for any $\lambda\neq 0 \in\mathbb{F}^n$, there are nonzero $v$, $u$ and $w$ not $\Gamma(F_\lambda)$ such that $D_vF_\lambda$ is constant, $D_uF_\lambda$ and $D_wF_\lambda$ are both unbalanced. So we have $\mathcal{F}^2(D_0F_\lambda)=\mathcal{F}^2(D_vF_\lambda)=2^{2n}$, and both $\mathcal{F}^2(D_uF_\lambda)$ and $\mathcal{F}^2(D_wF_\lambda)$ are nonzero positive integers (recall that, for any B.f. $f$, $\mathcal{F}(f)=0$ if and only if $f$ is balanced). Thus, we have \begin{align*}\sum_{a\in\mathbb{F}^n}\mathcal{F}^2(D_aF_\lambda)&\geq\mathcal{F}^2(D_0F_\lambda)+\mathcal{F}^2(D_vF_\lambda)+\mathcal{F}^2(D_uF_\lambda)+\mathcal{F}^2(D_wF_\lambda)\\&=2^{2n}+2^{2n}+\mathcal{F}^2(D_uF_\lambda)+\mathcal{F}^2(D_wF_\lambda)>2^{2n+1}\end{align*} from which we deduce that \[\sum_{\lambda\neq 0\in\mathbb{F}^n}\sum_{a\in\mathbb{F}^n}\mathcal{F}^2(D_aF_\lambda)>2^{2n+1}(2^n-1).\] Thus, by Theorem \ref{APN}, it impossible for $F$ to be an APN function. So it follows that if $F$ is an APN function in even dimension then there is a component whose linear space is trivial. 
\end{proof}

\begin{proposition}[\cite{Cal}]\label{APN-permutation-derivatives}
	Let $F$ be an APN permutation over $\mathbb{F}^n$, with $n$ even. If there are $\lambda\neq 0, a\neq 0\in\mathbb{F}^n$ such that $D_aF_\lambda$ is constant, then $D_aF_\lambda=1$.
\end{proposition}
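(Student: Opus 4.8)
The plan is to proceed by contradiction. Since $D_aF_\lambda$ is constant and takes values in $\mathbb{F}$, it equals either $0$ or $1$; I would assume $D_aF_\lambda\equiv 0$ and derive a contradiction from the APN property together with the injectivity of the permutation $F$.

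First I would translate the hypothesis into a statement about the vectorial derivative $D_aF\colon\mathbb{F}^n\to\mathbb{F}^n$. Expanding $F_\lambda=\lambda\cdot F$ and using bilinearity of the dot product gives $D_aF_\lambda(x)=\lambda\cdot D_aF(x)$ for every $x$. Hence the assumption $D_aF_\lambda\equiv 0$ is exactly the statement that the image of $D_aF$ lies inside the hyperplane $H=\{y\in\mathbb{F}^n\mid\lambda\cdot y=0\}$, which has $|H|=2^{n-1}$ elements.

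Next I would compute the size of the image of $D_aF$ using that $F$ is APN. Because $D_aF(x)=D_aF(x+a)$ and $a\neq 0$, the solutions of $D_aF(x)=b$ split into distinct pairs $\{x,x+a\}$, so each $\delta_F(a,b)$ is even; combined with $\delta_F(a,b)\leq 2$ this forces $\delta_F(a,b)\in\{0,2\}$. Summing $\sum_{b}\delta_F(a,b)=2^n$ then shows that exactly $2^{n-1}$ values are attained, i.e. $|\mathrm{Im}(D_aF)|=2^{n-1}$. The crucial observation is that $0\notin\mathrm{Im}(D_aF)$: a solution of $D_aF(x)=0$ would give $F(x+a)=F(x)$, contradicting the injectivity of the permutation $F$ since $a\neq 0$.

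Assembling these facts yields the contradiction: under $D_aF_\lambda\equiv 0$ we would have $\mathrm{Im}(D_aF)\subseteq H$ with $|\mathrm{Im}(D_aF)|=|H|=2^{n-1}$, forcing $\mathrm{Im}(D_aF)=H$; but $0\in H$ while $0\notin\mathrm{Im}(D_aF)$, which is absurd. Therefore $D_aF_\lambda$ cannot be $0$, so it equals $1$. The only delicate step is the exact count $|\mathrm{Im}(D_aF)|=2^{n-1}$, which rests on the parity argument that $\delta_F(a,\cdot)$ is even under APN; once this is secured the contradiction is immediate. I note in passing that this argument uses only that $F$ is an APN permutation and never the parity of $n$.
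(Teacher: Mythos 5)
Your argument is correct. Note that the paper does not prove this proposition at all---it is imported verbatim from the cited reference [Cal]---so there is no internal proof to compare against; your counting argument (the derivative $D_aF$ of an APN function is $2$-to-$1$ onto an image of size exactly $2^{n-1}$, a permutation excludes $0$ from that image, and containment in the hyperplane $\lambda^\perp$ then forces equality with a set that does contain $0$) is the standard proof of this fact and each step checks out. Your closing observation is also right: the evenness of $n$ plays no role here, so the statement holds for any APN permutation.
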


In the next result we talk about the maximum possible dimension for linear spaces of components of APN permutation.

\begin{theorem}\label{APN-permutation-linear space}
	If $F$ is an APN permutation over $\mathbb{F}^n$, with $n$ even, then \\$\dim V(F_\lambda)\leq 1$, for all $\lambda\neq 0 \in\mathbb{F}^n$. 
\end{theorem}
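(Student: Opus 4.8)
The plan is to argue by contradiction, exploiting the fact that $V(F_\lambda)$ is a linear subspace together with Proposition~\ref{APN-permutation-derivatives}, which pins down the value of \emph{every} nonzero constant derivative of a component to be exactly $1$. First I would recall the standard fact that the linear space $V(f)$ of any B.f. $f$ is an $\mathbb{F}$-linear subspace of $\mathbb{F}^n$: if $D_af$ and $D_bf$ are both constant, then the identity $D_{a+b}f(x)=D_af(x)+D_bf(x+a)$ shows that $D_{a+b}f$ is constant as well, and $0\in V(f)$ trivially. Hence if we assume $\dim V(F_\lambda)\geq 2$, we may choose two linearly independent vectors $a,b\in V(F_\lambda)\setminus\{0\}$, and then $a+b$ is automatically a third nonzero element of $V(F_\lambda)$.

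Next I would apply Proposition~\ref{APN-permutation-derivatives} to each of the three nonzero linear structures $a$, $b$ and $a+b$. Since each of $D_aF_\lambda$, $D_bF_\lambda$ and $D_{a+b}F_\lambda$ is constant and the corresponding shift is a nonzero vector, the proposition forces $D_aF_\lambda=D_bF_\lambda=D_{a+b}F_\lambda=1$. This is precisely where the even-dimension and permutation hypotheses enter the argument, since they are exactly the hypotheses of Proposition~\ref{APN-permutation-derivatives}.

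Finally I would derive the contradiction from the cocycle identity used above. Evaluating $D_{a+b}F_\lambda(x)=D_aF_\lambda(x)+D_bF_\lambda(x+a)$ and substituting the constant values $D_aF_\lambda\equiv 1$ and $D_bF_\lambda\equiv 1$ gives $D_{a+b}F_\lambda\equiv 1+1=0$, which contradicts $D_{a+b}F_\lambda\equiv 1$ established in the previous step. Therefore no two linearly independent linear structures can coexist in $V(F_\lambda)$, and consequently $\dim V(F_\lambda)\leq 1$ for every $\lambda\neq 0\in\mathbb{F}^n$.

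I do not anticipate a serious obstacle: the whole argument rests on the single observation that the number of "$1$-valued" constant derivatives among $\{a,b,a+b\}$ must be even by the additive identity but is forced to be three (hence odd) by Proposition~\ref{APN-permutation-derivatives}. The only points requiring mild care are confirming that the hypotheses of Proposition~\ref{APN-permutation-derivatives} genuinely apply to $a+b$, namely that $a+b\neq 0$ (guaranteed by the linear independence of $a$ and $b$) and that $D_{a+b}F_\lambda$ is constant (guaranteed by the subspace property of $V(F_\lambda)$).
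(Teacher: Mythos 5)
Your proposal is correct and follows essentially the same route as the paper: both argue by contradiction via the subspace property of $V(F_\lambda)$, invoke Proposition~\ref{APN-permutation-derivatives} to force $D_aF_\lambda=D_bF_\lambda=D_{a+b}F_\lambda=1$, and then use the identity $D_{a+b}F_\lambda(x)=D_aF_\lambda(x)+D_bF_\lambda(x+a)$ to obtain $D_{a+b}F_\lambda=0$, a contradiction. The only cosmetic difference is that you observe directly that a constant function shifted by $a$ remains the same constant, whereas the paper routes this trivial fact through its affine-equivalence lemma.
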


\begin{proof}
	Suppose, by contradiction, that there is $\mu\neq 0\in\mathbb{F}^n$ such that $\dim V(F_\mu)>1$. It follows that $V(F_\mu)$ contains at least three nonzero elements. Let $a,b\in V(F_\mu)$ be nonzero and distinct. Then, by Proposition~\ref{APN-permutation-derivatives}, we have $D_aF_\lambda=D_bF_\lambda=1$. Clearly, $a+b$ is also a nonzero element in $V(F_\mu)$ different from $a$ and $b$. Note that $D_{a+b}F_\mu(x)=D_aF_\mu(x)+D_bF_\mu(x+a)$, $x\in\mathbb{F}^n$. By Equation \eqref{equivalence-derivatives-eqn} in the proof Lemma~\ref{equivalence-properties}, $D_bF_\mu(x+a)=(D_{Ib}F_\mu)\circ \varphi\sim_A D_bF_\mu$, with $\varphi(x)=Ix+a$ and $I$ being the identity matrix of $GL_n(\mathbb{F})$. Since $D_bF_\mu=1$, so we must have $D_bF_\mu(x+a)=1$. Thus, $D_{a+b}F_\mu(x)=0$, which is impossible by Proposition \ref{APN-permutation-derivatives}. Thus, we must have  $\dim V(F_\lambda)\leq 1$, for all $\lambda\neq 0\in\mathbb{F}^n$.
\end{proof}

\subsection{Quadratic APN functions}
A quadratic v.B.f. from $\mathbb{F}^n$ to itself is denoted by $Q$, the linear space $V(Q_\lambda)$ of a component $Q_\lambda$ is denoted by $V_\lambda$ and we let $V_\lambda^*=V_\lambda\setminus\{0\}$. It is well-known that any APN function cannot contain linear components, so we assume that $Q$ is pure quadratic. For quadratic functions, it is clear from Theorem \ref{quadratic} that we have trivial linear space if and only if the function is bent. So, by Theorem~\ref{APN-components-linear-space}, any quadratic APN functions must have some bent components. In this subsection we are mainly counting how many bent components are in quadratic APN functions.

First we prove a result which relate the dimensions of linear spaces for components of $Q$ to quadratic APN functions. 

\begin{proposition}\label{quadratic-APN-subspace}
	For any quadratic $Q:\mathbb{F}^n\rightarrow\mathbb{F}^n$, we have 
	\begin{align}\label{semi-bent-eqn}
	\sum_{\lambda\neq 0\in\mathbb{F}^n}(2^{\dim V_\lambda}-1)\geq 2^n-1.
	\end{align} 
	Moreover, equality holds if and only if $Q$ is APN. 
\end{proposition}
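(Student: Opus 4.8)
The plan is to identify the left-hand side of \eqref{semi-bent-eqn} with the double sum appearing in Theorem~\ref{APN}, which is the natural bridge between the linear-space dimensions and the APN property. The crucial observation is that for a \emph{quadratic} component $Q_\lambda$ every derivative $D_a Q_\lambda$ is affine (differentiating drops the degree by one), and an affine Boolean function is either constant or balanced. Hence $D_a Q_\lambda$ is constant exactly when $a\in V_\lambda$, in which case $\mathcal{F}^2(D_a Q_\lambda)=2^{2n}$; otherwise $D_a Q_\lambda$ is balanced and $\mathcal{F}(D_a Q_\lambda)=0$. This two-valued behaviour is the only place where the quadratic hypothesis is used, and I expect it to be the sole point requiring care.

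First I would evaluate the inner sum over $a$ for a fixed nonzero $\lambda$. Since $\mathcal{F}^2(D_a Q_\lambda)$ equals $2^{2n}$ precisely on $a\in V_\lambda$ and vanishes elsewhere,
\[
\sum_{a\in\mathbb{F}^n}\mathcal{F}^2(D_a Q_\lambda)=|V_\lambda|\cdot 2^{2n}=2^{2n}\cdot 2^{\dim V_\lambda}.
\]
Summing over all $\lambda\neq 0$ and feeding the result into Theorem~\ref{APN} gives
\[
2^{2n}\sum_{\lambda\neq 0\in\mathbb{F}^n}2^{\dim V_\lambda}\;\geq\;2^{2n+1}(2^n-1),
\]
with equality if and only if $Q$ is APN. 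Dividing through by $2^{2n}$ yields $\sum_{\lambda\neq 0}2^{\dim V_\lambda}\geq 2^{n+1}-2$.

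It then remains to rewrite the target sum by splitting off the constant term. Since there are exactly $2^n-1$ nonzero vectors $\lambda$,
\[
\sum_{\lambda\neq 0}\bigl(2^{\dim V_\lambda}-1\bigr)
=\Bigl(\sum_{\lambda\neq 0}2^{\dim V_\lambda}\Bigr)-(2^n-1)
\;\geq\;(2^{n+1}-2)-(2^n-1)=2^n-1,
\]
which is exactly \eqref{semi-bent-eqn}. Because every manipulation after the appeal to Theorem~\ref{APN} is a reversible identity, equality in \eqref{semi-bent-eqn} holds if and only if equality holds in Theorem~\ref{APN}, that is, if and only if $Q$ is APN. I do not anticipate any genuine obstacle: once the reduction $\mathcal{F}^2(D_a Q_\lambda)\in\{0,2^{2n}\}$ is in place, the remainder is bookkeeping with the count $2^n-1$ of nonzero components together with the equality case of Theorem~\ref{APN}.
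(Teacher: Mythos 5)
Your proposal is correct and follows essentially the same route as the paper: both arguments hinge on Theorem~\ref{APN} together with the observation that a derivative of a quadratic component is affine, hence constant (contributing $\mathcal{F}^2=2^{2n}$, precisely on $V_\lambda$) or balanced (contributing $0$). The only difference is bookkeeping --- the paper splits off the $a=0$ term before identifying the remaining sum with $2^{2n}|V_\lambda^*|$, while you keep $a=0$ inside $V_\lambda$ and subtract $2^n-1$ at the end --- and both correctly track the equality case.
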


\begin{proof}
	Since $\mathcal{F}^2(D_0Q_\lambda)=2^{2n}$, so we have
	\begin{align}\label{fourier-eqn-1}\sum_{\lambda\neq 0\in\mathbb{F}^n}\sum_{a\in\mathbb{F}^n}\mathcal{F}^2(D_aQ_\lambda)& = \sum_{\lambda\neq 0\in\mathbb{F}^n}[\mathcal{F}^2(D_0Q_\lambda)+\sum_{a\neq 0\in\mathbb{F}^n}\mathcal{F}^2(D_aQ_\lambda)]\nonumber\\&=\sum_{\lambda\neq 0\in\mathbb{F}^n}[2^{2n}+\sum_{a\neq 0\in\mathbb{F}^n}\mathcal{F}^2(D_aQ_\lambda)]\nonumber\\&=2^{2n}(2^n-1)+\sum_{\lambda\neq 0,\in\mathbb{F}^n}\sum_{a\neq 0\in\mathbb{F}^n}\mathcal{F}^2(D_aQ_\lambda).\end{align}
	By Theorem \ref{APN} and Equation \eqref{fourier-eqn-1}, we deduce that 	
	\begin{align}\label{fourier-eqn-2} \sum_{\lambda\neq 0,\in\mathbb{F}^n}\sum_{a\neq 0\in\mathbb{F}^n}\mathcal{F}^2(D_aQ_\lambda)\geq 2^{2n}(2^n-1)\end{align} and equality holds if and only if $Q$ is APN. 
	
	For any quadratic $Q$, $\deg(D_aQ_\lambda)=0$ if $a\in V_\lambda$ and $\deg(D_aQ_\lambda)=1$ if $a\notin V_\lambda$.  So we have $\mathcal{F}^2(D_aQ_\lambda)=2^{2n}$ if $a\in V_\lambda$ and $\mathcal{F}^2(D_aQ_\lambda)=0$ if $a\notin V_\lambda$.  Thus, we have
	\begin{align}\label{fourier-eqn-3} \sum_{\lambda\neq 0\in\mathbb{F}^n}\sum_{a\neq 0\in\mathbb{F}^n}\mathcal{F}^2(D_aQ_\lambda)&=\sum_{\lambda\neq 0\in\mathbb{F}^n}\sum_{a\in V_\lambda^*}\mathcal{F}^2(D_aQ_\lambda)\nonumber\\&=\sum_{\lambda\neq 0\in\mathbb{F}^n}2^{2n}|V_\lambda^*|\nonumber\\&=2^{2n}\sum_{\lambda\neq 0\in\mathbb{F}^n}(2^{\dim V_\lambda}-1).\end{align} We deduce, from the relation \eqref{fourier-eqn-2} and Equation \eqref{fourier-eqn-3}, that \[\sum_{\lambda\neq 0\in\mathbb{F}^n}(2^{\dim V_\lambda}-1)\geq 2^n-1\] and equality holds if and only if $Q$ is APN.
\end{proof}

It can be easily shown that for any quadratic B.f. $f$ in odd dimension we have $\dim V(f)\geq 1$ and equality holds when $f$ is a semi-bent. This implies that the equality in the relation~\eqref{semi-bent-eqn} happens when $Q$ is an AB function. Since in odd dimension all quadratic functions have non-trivial linear space, then all components of any quadratic APN function in odd dimension have non-trivial linear space. Thus, it implies that the result in Theorem~\ref{APN-components-linear-space} cannot be extended to APN function in odd dimension.

Now we focus on quadratic v.B.f. in even dimension. From Theorem \ref{quadratic}, it is clear that any quadratic B.f. in even dimension has a splitting index $1$ or $2$. By Corollary \ref{quadratic-bent-semi-bent}, we deduce that quadratic is bent if and only if the splitting index is $2$.

\begin{definition}\label{def-quad} For any quadratic $Q$, define \[\Delta_i=\{\lambda\in\mathbb{F}^n\mid \lambda\neq 0, \sigma(Q_\lambda)=i\}, \hspace{.5cm} N=|\Delta_1| \hspace{.5cm}\text{ and }\hspace{.5cm} B=|\Delta_2|.\]
\end{definition} 

\begin{remark}\label{bent-non-bent-remark}
	From Definition \ref{def-quad}, $N$ is the number of non-bent compoments and $B$ is the number of bent components in $Q$ and so we have $N+B=2^n-1$.
\end{remark}

 Nyberg in \cite{Nyb}, proved that bent functions exist only from $\mathbb{F}^n$ to $\mathbb{F}^m$, with $m\leq n/2$, so it well-known that no v.B.f. from $\mathbb{F}^n$ to itself is bent. 
 
 \begin{remark}\label{max-bent}
 The maximum number of bent components in any v.B.f. from $\mathbb{F}^n$ to itself is $2^n-2^{\frac{n}{2}}$ (see \cite{Pott}). So $0\leq B \leq 2^n-2^{\frac{n}{2}}$. In \cite{Mes}, no plateaued APN function has the maximum number of bent components. It is well-known that quadratic functions are plateaued. 
 \end{remark}
 
 In the next result we wish to determine $B$ when $Q$ is an APN function and contains only bent and semi-bent components. 

\begin{proposition}\label{quadratic-APN-bent-semi-bent}
	Let a quadratic $Q:\mathbb{F}^n\rightarrow\mathbb{F}^n$, with $n$ even, be such that $Q_\lambda$, with $\lambda\neq 0$, is bent or semi-bent. Then $Q$ is APN if and only if there are exactly $\frac{2}{3}(2^n-1)$ bent components.
\end{proposition}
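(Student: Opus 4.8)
The plan is to count the quantity $\sum_{\lambda\neq 0}(2^{\dim V_\lambda}-1)$ from Proposition~\ref{quadratic-APN-subspace} by splitting the components into the two allowed types and then invoking the equality condition of that proposition. First I would translate the hypothesis into dimension data. By Corollary~\ref{quadratic-bent-semi-bent}, a quadratic $Q_\lambda$ in even dimension $n$ is bent precisely when $\sigma(Q_\lambda)=2$, and by Theorem~\ref{quadratic-nonlinearity} together with Remark~\ref{dimension-semi-bent} the value $\dim V_\lambda$ is forced by the plateaued amplitude of $Q_\lambda$. Concretely, a bent component has $\dim V_\lambda=0$, so it contributes $2^0-1=0$ to the sum, whereas a semi-bent component in even dimension has $\dim V_\lambda=2$ (the smallest even value giving the semi-bent amplitude $2^{(n+2)/2}$), so it contributes $2^2-1=3$.

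Next I would write the total in terms of the counts $N$ and $B$ from Definition~\ref{def-quad}. Since every nonzero $\lambda$ yields either a bent or a semi-bent component by hypothesis, the $B$ bent components contribute $0$ each and the $N=|\Delta_1|$ non-bent (here semi-bent) components contribute $3$ each, giving
\begin{align*}
\sum_{\lambda\neq 0\in\mathbb{F}^n}(2^{\dim V_\lambda}-1)=3N.
\end{align*}
By Proposition~\ref{quadratic-APN-subspace}, $Q$ is APN if and only if this sum equals $2^n-1$, i.e.\ if and only if $3N=2^n-1$, equivalently $N=\tfrac{1}{3}(2^n-1)$. Using the relation $N+B=2^n-1$ from Remark~\ref{bent-non-bent-remark}, this is equivalent to $B=2^n-1-N=\tfrac{2}{3}(2^n-1)$, which is exactly the claimed count. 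Both directions follow simultaneously because the APN condition is phrased as an equality in Proposition~\ref{quadratic-APN-subspace}.

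The main obstacle I anticipate is justifying that a semi-bent quadratic component in even dimension has $\dim V_\lambda=2$ rather than some larger even number. This must be pinned down carefully: by definition of semi-bent in even dimension, $W_{Q_\lambda}(a)\in\{0,\pm 2^{(n+2)/2}\}$, and matching this with the quadratic Walsh formula $W_f(a)\in\{0,\pm 2^{(n+k)/2}\}$ of Theorem~\ref{quadratic-nonlinearity} forces $k=\dim V_\lambda=2$. One should also confirm, via Remark~\ref{dimension-semi-bent}, that no odd dimension can occur (so $\dim V_\lambda$ is genuinely even and the only non-bent plateaued option consistent with the hypothesis is the semi-bent one with $k=2$). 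Once this identification is secured, the arithmetic collapsing $3N=2^n-1$ into $B=\tfrac{2}{3}(2^n-1)$ is routine, and the biconditional transfers directly from the equality clause of Proposition~\ref{quadratic-APN-subspace}.
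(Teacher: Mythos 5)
Your proposal is correct and follows essentially the same route as the paper: both arguments reduce the equality criterion of Proposition~\ref{quadratic-APN-subspace} to a count over the non-bent components, identify $\dim V_\lambda=0$ for bent and $\dim V_\lambda=2$ for semi-bent quadratic components in even dimension, and solve $3N=2^n-1$ together with $N+B=2^n-1$. Your explicit justification that the semi-bent amplitude forces $k=2$ via Theorem~\ref{quadratic-nonlinearity} is a slightly more direct phrasing of what the paper obtains from Theorem~\ref{quadratic} and Corollary~\ref{quadratic-bent-semi-bent}, but the substance is identical.
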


\begin{proof}
	For any quadratic APN $Q$, by Theorem \ref{APN-components-linear-space}, we conclude that $B>0$, that is, some components of $Q$ must be bent (as we require that the linear space of some components must be trivial). Since $n$ is even, so $\dim V_\lambda$ is even (see Remark~\ref{dimension-semi-bent}). From Theorem~\ref{quadratic} and Corollary~\ref{quadratic-bent-semi-bent}, we can deduce that $\dim V_\lambda=0$ if and only if $Q_\lambda$ is bent. That is, $\dim V_\lambda\neq 0$ if $\lambda\in\Delta_1$  and $\dim V_\lambda=0$ if $\lambda\in\Delta_2$. For any quadratic APN $Q$, by Proposition~\ref{quadratic-APN-subspace}, we must have 
	
	\begin{align}\label{APN-eqn}
	\sum_{\lambda\neq 0\in\mathbb{F}^n}(2^{\dim V_\lambda}-1)=2^n-1.
	\end{align}
	
	Since $\dim V_\lambda=0$ if $\lambda\in\Delta_2$, then Equation \eqref{APN-eqn} can be reduced to
	
	\begin{align}\label{APN-eqn-1}
	\sum_{\lambda\in\Delta_1}(2^{\dim V_\lambda }-1)=2^n-1.
	\end{align}
	That is, $Q$ is APN if and only if Equation \eqref{APN-eqn-1} holds. 
	
	If $Q$ is such that $Q_\lambda$, with $\lambda\neq 0$, is bent or semi-bent, then $N$ is the number of semi-bent (i.e., $\dim V_\lambda=2$, for any $\lambda\in\Delta_1$). 
	Thus, Equation \eqref{APN-eqn-1} is true if and only if $(2^2-1)|\Delta_1|=3N=2^n-1\iff N=(2^n-1)/3$. Since $N+B=2^n-1$, so $B=2(2^n-1)/3$.
\end{proof}

It follows from Proposition \ref{quadratic-APN-bent-semi-bent} that any quadratic APN function in even dimension with the set \(\{0,\pm 2^{\frac{n}{2}},\pm 2^{\frac{n+2}{2}}\}\) as its Walsh spectrum has $2(2^n-1)/3$ bent components . It is well-known that the Walsh spectrum of any Gold function in even dimension is \(\{0,\pm 2^{\frac{n}{2}},\pm 2^{\frac{n+2}{2}}\}\), so any Gold function has $2(2^n-1)/3$ bent components. 

\begin{theorem}\label{number-bents}
	Let a quadratic $Q:\mathbb{F}^n\rightarrow\mathbb{F}^n$, with $n$ even, be APN. Then \[2(2^n-1)/3\leq B\leq 2^n-2^{n/2}-2,\] where \(B=2(2^n-1)/3+4t\),
	for some integer $t\geq 0$.
\end{theorem}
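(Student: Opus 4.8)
The plan is to extract everything from the APN equality of Proposition~\ref{quadratic-APN-subspace} together with the parity and bentness information on the dimensions $\dim V_\lambda$, and then to close the interval by a congruence argument modulo $4$. First I would recall that since $Q$ is APN, Proposition~\ref{quadratic-APN-subspace} gives $\sum_{\lambda\neq 0\in\mathbb{F}^n}(2^{\dim V_\lambda}-1)=2^n-1$. By Theorem~\ref{quadratic} and Corollary~\ref{quadratic-bent-semi-bent}, a quadratic component is bent precisely when $\dim V_\lambda=0$, so the bent components contribute nothing and the sum collapses to $\sum_{\lambda\in\Delta_1}(2^{\dim V_\lambda}-1)=2^n-1$. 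Moreover $n$ is even, so by Remark~\ref{dimension-semi-bent} every $\dim V_\lambda$ is even; for $\lambda\in\Delta_1$ it is therefore an even integer $\geq 2$. Writing $\dim V_\lambda=2k_\lambda$ with $k_\lambda\geq 1$, the reduced equality reads $\sum_{\lambda\in\Delta_1}(4^{k_\lambda}-1)=2^n-1$, with $N=|\Delta_1|$ summands.

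For the lower bound and the stated form I would measure the deviation of $B$ from its baseline. Since $B=2^n-1-N$ (Remark~\ref{bent-non-bent-remark}), a short manipulation gives
\[
B-\tfrac{2}{3}(2^n-1)=\tfrac{1}{3}(2^n-1)-N=\sum_{\lambda\in\Delta_1}\Big(\tfrac{4^{k_\lambda}-1}{3}-1\Big)=\sum_{\lambda\in\Delta_1}\tfrac{4^{k_\lambda}-4}{3}=4\sum_{\lambda\in\Delta_1}\tfrac{4^{k_\lambda-1}-1}{3}.
\]
Each summand $\frac{4^{k_\lambda-1}-1}{3}=1+4+\cdots+4^{k_\lambda-2}$ is a non-negative integer (equal to $0$ when $k_\lambda=1$), so setting $t=\sum_{\lambda\in\Delta_1}\frac{4^{k_\lambda-1}-1}{3}$ yields $B=\frac{2}{3}(2^n-1)+4t$ with $t\geq 0$ an integer. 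In particular $B\geq \frac{2}{3}(2^n-1)$, the stated lower bound, attained exactly when every non-bent component is semi-bent (that is, $k_\lambda=1$ for all $\lambda\in\Delta_1$).

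For the upper bound I would combine the universal bound $B\leq 2^n-2^{n/2}$ from Remark~\ref{max-bent} with the residue of $B$ modulo $4$. From the displayed form, $B\equiv \frac{2}{3}(2^n-1)\pmod 4$; since $\frac{2^n-1}{3}=1+4+\cdots+4^{n/2-1}\equiv 1\pmod 4$ is odd, one gets $B\equiv 2\pmod 4$. On the other hand, for $n\geq 4$ both $2^n$ and $2^{n/2}$ are divisible by $4$, so $2^n-2^{n/2}\equiv 0\pmod 4$. Hence $B$ cannot equal $2^n-2^{n/2}$ (consistent with the fact, noted in Remark~\ref{max-bent}, that a plateaued APN function never attains the maximal number of bent components), and the largest value not exceeding $2^n-2^{n/2}$ that is congruent to $2$ modulo $4$ is $2^n-2^{n/2}-2$. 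This gives $B\leq 2^n-2^{n/2}-2$ and completes the argument.

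The main obstacle is purely bookkeeping rather than analytic: one must justify the reduction to $\Delta_1$ carefully (the equivalence ``bent $\iff \dim V_\lambda=0$'' and the evenness of the dimensions), and then track the two modular facts precisely, namely that $\frac{2^n-1}{3}$ is odd and that $2^n-2^{n/2}\equiv 0\pmod 4$ for $n\geq 4$, so that the congruence sharpens the Pott bound by exactly $2$. Once the APN equality is in hand, no genuinely hard estimate remains.
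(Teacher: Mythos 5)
Your proof is correct and follows essentially the same route as the paper: reduce the APN equality of Proposition~\ref{quadratic-APN-subspace} to the non-bent components in $\Delta_1$, show the excess of $B$ over $2(2^n-1)/3$ is a non-negative multiple of $4$, and then combine the Pott bound with the congruence $B\equiv 2\pmod 4$ to shave the upper bound down to $2^n-2^{n/2}-2$. Your explicit identity $B-\tfrac{2}{3}(2^n-1)=4\sum_{\lambda\in\Delta_1}\tfrac{4^{k_\lambda-1}-1}{3}$ is in fact a cleaner way of packaging the paper's ``each component with $\dim V_\mu=2k$ increases $B$ by $4(2^{2k-2}-1)/3$'' bookkeeping, but the underlying argument is the same.
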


\begin{proof}
	Suppose that $Q$ is APN. Since the dimension of the linear space of any quadratic in even dimension is even (see Remark~\ref{dimension-semi-bent}), so it follows that for any $Q_\lambda$, with $\lambda\in\Delta_1$, we have $\dim V_\lambda\geq 2$. If, for any $\lambda\in \Delta_1$, $Q_\lambda$ is semi-bent then we are in Proposition \ref{quadratic-APN-bent-semi-bent}, that is, $B=2(2^n-1)/3$. If some components are neither bent nor semi-bent, then we must have $B>2(2^n-1)/3$ for Equation \eqref{APN-eqn-1} to be satisfied. 
	
	If $Q$ has a component $Q_\mu$, with $\mu\in \Delta_1$, which is not semi-bent, then $\dim V_\mu=2k$, for some $k\geq 2$.  So, for Equation \eqref{APN-eqn-1} to be satisfied, the presence of $Q_\mu$ in $Q$ has to increase the number of bent components by \[\frac{2^{2k}-1}{2^2-1}-1=\frac{2^{2k}-4}{3}=4\left(\frac{2^{2k-2}-1}{3}\right)\] which clearly is divisible by $4$. So it follows that \(B=2(2^n-1)/3+4t\), for some integer $t\geq 0$.
	
	By Remark \ref{max-bent}, we have \(B \leq 2^n-2^{n/2}\). Now we show that it not possible to have \(B=2^n-2^{n/2}\). For some $t\geq 0$, we have $B=2(2^n-1)/3+4t=2[(2^n-1)/3+2t] \not\equiv 0\pmod{4}$ since $(2^n-1)/3+2t$ is odd. Thus, \(B\neq 2^n-2^{n/2}\) since $2^n-2^{n/2}\equiv 0 \pmod{4}$. Hence we must have \(B\leq 2^n-2^{n/2}-2\).
\end{proof}

For any quadratic APN $Q$ in demension $4$, by Theorem \ref{number-bents}, we only have one possibility, that is, $B=10$ (this satisfies Proposition \ref{quadratic-APN-bent-semi-bent}). We state this result in following.    

\begin{corollary}\label{quadratic-APN-dim-4}
	A pure quadratic $Q:\mathbb{F}^4\rightarrow \mathbb{F}^4$ is APN if and only if $B=10$. 
\end{corollary}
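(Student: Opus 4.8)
The plan is to recognise that the statement is simply the dimension-four instance of the preceding results, so that no genuinely new computation is required. The crucial preliminary observation is that for $n=4$ every nonzero component $Q_\lambda$ of a pure quadratic $Q$ must be either bent or semi-bent. Indeed, since $Q$ is pure quadratic, $\deg(Q_\lambda)=2$ for all $\lambda\neq 0$, so $Q_\lambda$ is not affine and hence $\dim V_\lambda<4$. By Remark~\ref{dimension-semi-bent}, $\dim V_\lambda$ is even, which leaves only $\dim V_\lambda\in\{0,2\}$. Corollary~\ref{quadratic-bent-semi-bent} then identifies $\dim V_\lambda=0$ with $Q_\lambda$ bent and $\dim V_\lambda=2$ with $Q_\lambda$ semi-bent, so $Q$ falls exactly into the hypotheses of Proposition~\ref{quadratic-APN-bent-semi-bent}.

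For the forward direction I would substitute $n=4$ into Theorem~\ref{number-bents}: the lower bound is $2(2^4-1)/3=10$ and the upper bound is $2^4-2^{2}-2=10$, so the two bounds coincide and force $B=10$. Equivalently, one may invoke Proposition~\ref{quadratic-APN-bent-semi-bent} directly, since every component is bent or semi-bent.

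For the converse I would argue via the equality criterion of Proposition~\ref{quadratic-APN-subspace}. Assuming $B=10$, Remark~\ref{bent-non-bent-remark} gives $N=2^4-1-B=5$. Each of the $B=10$ bent components contributes $2^{0}-1=0$ to the sum $\sum_{\lambda\neq0}(2^{\dim V_\lambda}-1)$, while by the preliminary observation each of the $N=5$ non-bent components is semi-bent with $\dim V_\lambda=2$ and hence contributes $2^{2}-1=3$. Therefore the sum equals $0\cdot 10+3\cdot 5=15=2^4-1$, and the equality case of Proposition~\ref{quadratic-APN-subspace} yields that $Q$ is APN.

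The only point requiring care---the main obstacle, though it is a mild one---is the dimension bookkeeping in the preliminary step: one must combine the parity constraint (Remark~\ref{dimension-semi-bent}) with the degree constraint ($\deg(Q_\lambda)=2$ rules out $\dim V_\lambda=4$) to exclude every value of $\dim V_\lambda$ other than $0$ and $2$. Once this is in place there is no residual computation, since Proposition~\ref{quadratic-APN-bent-semi-bent} already disposes of the bent/semi-bent situation and pins the count at $2(2^4-1)/3=10$.
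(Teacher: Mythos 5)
Your proposal is correct and follows essentially the same route as the paper, which simply observes that the two bounds of Theorem~\ref{number-bents} coincide at $10$ when $n=4$ and that this situation is covered by Proposition~\ref{quadratic-APN-bent-semi-bent}; you merely spell out the details the paper leaves implicit (in particular the observation that for $n=4$ every component of a pure quadratic is forced to be bent or semi-bent, and the explicit verification of the equality case of Proposition~\ref{quadratic-APN-subspace} for the converse).
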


Not long time ago, only quadratic APN functions with $B=2(2^n-1)/3$ were known. From Proposition \ref{quadratic-APN-bent-semi-bent}, such functions contain only bent and semi-bent components. As noted earlier Gold functions are example of such functions. It had been conjectured that all quadratic APN functions are equivalent to Gold functions (i.e., all quadratic APN functions have the same number of bent components) until Dillon in 2006 gave an example of quadratic APN with different number of bent components and inequivalent to Gold functions. The Dillon's Example:  
\[F(x)=x^3+z^{11}x^5+z^{13}x^9+x^{17}+z^{11}x^{33}+x^{48}\]
is defined over $\mathbb{F}_{2^6}$, with $z$ primitive. 
Using MAGMA, we found that $F$ has $46$ bent components. That is, it is an example of quadratic APN  with  $B=2(2^n-1)/3+4$ (i.e., $t=1$ by Theorem~\ref{number-bents}). Also by computer search, we found the function: 
\[G(x)=x^3+ z^{53}x^{10}+ z^{41}x^{18}+ z^{59}x^{33}+ z^{43}x^{34}+z^{31}x^{48}\] 
over $\mathbb{F}_{2^6}$, with $z$ primitive, which has the same number of bent components as $F$, the Dillon's Example. From Theorem~\ref{number-bents}, we deduce that, in dimension $6$, all the possibilities for the number of bent components in any quadratic APN function are: $42$, $46$, $50$ and $54$. So far we only know the existence of quadratic APN functions with $42$ (Gold functions and others) and $46$ (Dillon's example) bent components but we are uncertain whether those with $50$ and $54$ exists.

In \cite{Yu}, some quadratic APN functions in dimension $8$ with Walsh spectrum\\ $\{-64,-32,-16, 0, 16, 32, 64\}$ (which is different from the Walsh spectrum of Gold function) are found. These functions are further classified in terms of their distribution of Walsh coefficients and two classes are found. One class has 487 functions and the other one has 12 functions. In a class of 487 functions, we considered the function:
\begin{align*}G'(x)&=z^{249}x^{192} + z^{24}x^{160} + z^{210}x^{144} + z^{69}x^{136} + z^{46}x^{132} + z^{164}x^{130} + z^{43}x^{129}\\& + z^{31}x^{96} + z^{30}x^{80} +z^{115}x^{72} + z^{228}x^{68} + z^{16}x^{66} + z^{228}x^{65} + z^{217}x^{48}\\& + z^9x^{40}+ z^{251}x^{36} + z^{151}x^{34} + z^{77}x^{33} +z^{189}x^{24} + z^{109}x^{20} + z^{191}x^{18}\\& + z^{249}x^{17} + z^{175}x^{12} + z^{130}x^{10} + z^{91}x^9 + z^{59}x^6 + z^{60}x^5+z^{121}x^3\end{align*} and by checking with MAGMA, we found that it contains $2(2^8-1)/3+4=174$ bent components (i.e., $t=1$ by Theorem~\ref{number-bents}) and in the other class, we considered the function:
\begin{align*}
G''(x)&=z^{130}x^{192} + z^{160}x^{160} + z^{117}x^{144} + z^{230}x^{136} + z^{228}x^{132} + z^{162}x^{130}\\& + z^{25}x^{129} + z^{79}x^{96} + z^{204}x^{80} + z^{83}x^{72} + z^{159}x^{68} + z^{234}x^{66} + z^{36}x^{65}\\& + z^{67}x^{48} + z^{151}x^{40} + z^{17}x^{36} + z^{81}x^{34} + z^{52}x^{33} +z^9x^{24} + z^{116}x^{20}\\& + z^{102}x^{18} + z^{97}x^{17} + z^{74}x^{12} + z^{48}x^{10} + z^{144}x^9 + z^{58}x^6 + z^{146}x^5 +z^{123}x^3
\end{align*} which was found to have $2(2^8-1)/3+8=178$ bent components (i.e., $t=2$ by Theorem~\ref{number-bents}). Thus, in dimension $8$, we only know the existence of quadratic APN functions with $170$, $174$ and $178$ bent components and it is yet to be known whether quadratic APN functions having $B=2(2^8-1)/3+4t$, with $3\leq t\leq 17$, bent components exist.

\begin{proposition}\label{quadratic-APN-nonlinearity}
	Let $Q:\mathbb{F}^n\rightarrow\mathbb{F}^n$ be APN with $B=2(2^n-1)/3+4t$, for some positive integer $t$, as described in Theorem~\ref{number-bents}. Then 
	\[\mathcal{N}(Q)=\begin{cases}
	2^{n-1}-2^{n/2} & \hspace{.5cm} \text{ if }t=0, n\geq 4\\
	2^{n-1}-2^{n/2+1} & \hspace{.5cm} \text{ if } 1\leq t\leq 4, n\geq 6
	\end{cases}\]
\end{proposition}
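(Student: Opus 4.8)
The plan is to compute the nonlinearity of $Q$ by relating it directly to the dimensions of the linear spaces $V_\lambda$ of its components, using Theorem~\ref{quadratic-nonlinearity}. Recall that for a quadratic function $f$ with $k=\dim V(f)$ we have $\mathcal{N}(f)=2^{n-1}-2^{(n+k)/2-1}$. Since $\mathcal{N}(Q)=\min_{\lambda\neq 0}\mathcal{N}(Q_\lambda)$, and the map $k\mapsto 2^{n-1}-2^{(n+k)/2-1}$ is strictly decreasing in $k$, the nonlinearity of $Q$ is determined entirely by the \emph{largest} value of $\dim V_\lambda$ occurring among the components, say $k_{\max}$. Concretely, $\mathcal{N}(Q)=2^{n-1}-2^{(n+k_{\max})/2-1}$. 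So the whole problem reduces to pinning down $k_{\max}$ in each of the two cases for $t$.

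First I would handle the case $t=0$. By Theorem~\ref{number-bents} and the proof of Proposition~\ref{quadratic-APN-bent-semi-bent}, when $t=0$ every non-bent component $Q_\lambda$ (those with $\lambda\in\Delta_1$) is semi-bent, i.e.\ $\dim V_\lambda=2$, and the bent components have $\dim V_\lambda=0$. Hence $k_{\max}=2$ (provided $\Delta_1\neq\emptyset$, which holds since $N=(2^n-1)/3>0$ for $n\geq 4$). Substituting $k_{\max}=2$ gives $\mathcal{N}(Q)=2^{n-1}-2^{(n+2)/2-1}=2^{n-1}-2^{n/2}$, as claimed.

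Next I would treat $1\leq t\leq 4$. Here the key observation from the proof of Theorem~\ref{number-bents} is that each component $Q_\mu$ with $\dim V_\mu=2k$, $k\geq 2$, contributes $4(2^{2k-2}-1)/3$ to the increment $4t$; that is, $t=\sum 4(2^{2k-2}-1)/12$ summed over such components. Since $\dim V_\lambda$ is always even (Remark~\ref{dimension-semi-bent}), the smallest non-semi-bent dimension is $4$, which alone contributes $t=1$. For $1\leq t\leq 4$ I must argue that the maximum attainable dimension among components is exactly $4$, i.e.\ $k_{\max}=4$: a single component of dimension $6$ would force an increment of $4(2^{4}-1)/3=20$, giving $t\geq 5$ already, so no component of dimension $\geq 6$ can occur while keeping $t\leq 4$. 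Thus at least one component has $\dim V_\lambda=4$ and none exceeds it, so $k_{\max}=4$, whence $\mathcal{N}(Q)=2^{n-1}-2^{(n+4)/2-1}=2^{n-1}-2^{n/2+1}$.

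The main obstacle is the case analysis in the second regime: I must verify carefully that the total increment $4t$ with $1\leq t\leq 4$ is compatible with components of dimension exactly $4$ and rules out the presence of any component of dimension $6$ or higher. This rests on the arithmetic identity for the contribution of a dimension-$2k$ component established in Theorem~\ref{number-bents} together with the fact that contributions from distinct non-semi-bent components add. Once it is shown that $t\leq 4$ forbids $\dim V_\lambda\geq 6$ and that at least one component attains dimension $4$ (which follows from $t\geq 1$), the nonlinearity formula follows immediately from Theorem~\ref{quadratic-nonlinearity}. The condition $n\geq 6$ ensures that dimension-$4$ components can actually occur in the even-dimensional setting.
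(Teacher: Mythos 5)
Your proposal is correct and follows essentially the same route as the paper: both arguments reduce the computation to the largest occurring $\dim V_\lambda$ via Theorem~\ref{quadratic-nonlinearity}, handle $t=0$ through Proposition~\ref{quadratic-APN-bent-semi-bent}, and for $1\leq t\leq 4$ use the contribution formula $4(2^{2k-2}-1)/3$ from the proof of Theorem~\ref{number-bents} to show a dimension-$6$ component would force $t\geq 5$, so $\dim V_\lambda\in\{0,2,4\}$ with $4$ attained. No substantive differences.
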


\begin{proof}
	We first need to recall from Remark \ref{dimension-semi-bent}, that for any quadratic B.f. on $n$ variables, with even $n$, the dimension $k$ of its linear space is even and the Walsh spectrum is $\{0,2^{(n+k)/2}\}$. 
	
	If $t=0$ then, by Proposition \ref{quadratic-APN-bent-semi-bent}, all components of $Q$ are bent and semi-bent, that is, the Walsh spectrum of $Q$ is $\{0,\pm 2^{(n+2)/2},\pm 2^{n/2}\}$. So clearly, by Corollary \ref{quadratic-nonlinearity}, $\mathcal{N}(Q)=2^{n-1}-2^{n/2}$.
	
	To prove that $\mathcal{N}(Q)=2^{n-1}-2^{n/2+1}$ if $1\leq t\leq 4$, we need to show that for this range of $t$ we have $\dim V_\lambda\in\{0,2,4\}$, for all $\lambda\neq 0\in\mathbb{F}^n$, that is, Walsh spectrum of $Q$ is $\{0,\pm 2^{(n+4)/2},\pm 2^{(n+2)/2},\pm 2^{n/2}\}$.
	
	It is clear from Theorem \ref{number-bents} that for $t\geq 1$, we have $B>2(2^n-1)/3$, and so Proposition \ref{quadratic-APN-bent-semi-bent} allows us to conclude that there must be $\lambda\neq 0\in\mathbb{F}^n$ such that $\dim V_\lambda>2$. We claim that if $1\leq t\leq 4$, then we have $\dim V_\lambda\in\{0,2,4\}$, for $\lambda\neq 0\in\mathbb{F}^n$. Suppose, by contradiction, that there is $\mu\neq 0\in\mathbb{F}^n$ such that $\dim V_\mu=6$. Then, as noted in the proof of Theorem~\ref{number-bents}, the presence of $Q_\mu$ increases the number of bent components by \[4\left(\frac{2^{6-2}-1}{3}\right)=4(5),\] implying that $B\geq 2(2^n-1)/3+4(5)$. So it follows that if, for some $\lambda\neq 0\in\mathbb{F}^n$, $\dim V_\lambda =6$, then we have $t\geq 5$. This implies that, if $1\leq t\leq 4$, then we must have $\dim V_\lambda\in\{0,2,4\}$, for all $\lambda\neq 0\in\mathbb{F}^n$. So in this case the Walsh spectrum of $Q$ is $\{0,\pm 2^{(n+4)/2},\pm 2^{(n+2)/2},\pm 2^{n/2}\}$ from which we deduce that $\mathcal{N}(Q)=2^{n-1}-2^{n/2+1}$. 
\end{proof}

From Proposition \ref{quadratic-APN-nonlinearity}, it seems like the nonlinearity of any quadratic APN function decreases as the number of bent components increases and it is the highest when the number of bent components is at the lowest possible.

\section{Quadratic power functions} \label{power functions-section}
 Pott et al. in  \cite{Pott} say that the question to determine all monomial bent functions $Tr(\alpha x^d)$ on $\mathbb{F}_{2^n}$, with $\alpha\in\mathbb{F}_{2^n}^*$ and $n$ even, has attracted quite a lot of research interest. In this section we study the Walsh spectrum and enumerate bent components for any quadratic power functions. Recall that a function $F=x^d$ is a quadratic power functions if $d=2^j+2^i$, with $i\neq j$ and $j>i\geq 0$. It is well-known that a function with the power $d=2^i(2^{j-i}+1)$ is affine equivalent to the one with power $d'=2^{j-i}+1$. So we simply consider the power $2^k+1$, for some positive integer $k$. For a function $F$, we denote its image by $\mathrm{Im}(F)$. 
 
 We denote the greatest common divisor integers $m$ and $m'$ by $(m,m')$.  We begin with the following well-known result which can be found in \cite{Eric}.

 \begin{lemma}\label{gcd}
 	For any positive integers $n$ and $k$, we have
 	\begin{itemize}
 		\item[(a)] $(2^n-1,2^k-1)=2^{(n,k)}-1$,
 		\item[(b)] $(2^n-1, 2^k+1)=\begin{cases}1 \text{ if } n/(n,k) \text{ is odd},\\2^{(n,k)}+1 \text{ if } n/(n,k) \text{ is even}.
 		\end{cases}$
 	\end{itemize}
 \end{lemma}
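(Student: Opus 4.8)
The plan is to prove (a) by a Euclidean-algorithm descent on the exponents, and then to deduce (b) from (a) by passing through $2^{2k}-1$ and splitting into cases according to the parity of $n/(n,k)$.

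For part (a), I would first record the congruence $2^{m}\equiv 2^{m\bmod k}\pmod{2^k-1}$, which follows at once from $2^k\equiv 1\pmod{2^k-1}$. Writing $n=qk+r$ with $0\le r<k$, this gives $2^n-1\equiv 2^r-1\pmod{2^k-1}$, hence $(2^n-1,2^k-1)=(2^k-1,2^r-1)$. Since the exponent pair $(n,k)\mapsto(k,r)$ undergoes exactly one step of the ordinary Euclidean algorithm, iterating this reduction terminates at $(2^{(n,k)}-1,2^0-1)=(2^{(n,k)}-1,0)=2^{(n,k)}-1$, which proves (a).

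For part (b), set $d=(n,k)$ and $g=(2^n-1,2^k+1)$. Since $2^k+1$ divides $2^{2k}-1$, I have $g\mid(2^n-1,2^{2k}-1)$, and part (a) evaluates the latter as $2^{(n,2k)}-1$. Writing $n=da$ and $k=db$ with $(a,b)=1$, a short computation gives $(n,2k)=d$ when $a$ is odd and $(n,2k)=2d$ when $a$ is even, where $a=n/(n,k)$. In the odd case $g\mid 2^d-1\mid 2^k-1$ while also $g\mid 2^k+1$, so $g\mid(2^k+1)-(2^k-1)=2$; as $g$ is odd this forces $g=1$. In the even case $b$ is odd, so the elementary factorization $x+1\mid x^b+1$ with $x=2^d$ shows $2^d+1\mid 2^k+1$, while $a$ even gives $2^{2d}-1\mid 2^n-1$, whence $2^d+1\mid 2^n-1$; thus $2^d+1\mid g$.

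It then remains to show $g$ carries no extra factor in the even case. Since $2^d-1$ is odd, $(2^d-1,2^d+1)=1$, so $2^{2d}-1$ is a product of the coprime pieces $2^d-1$ and $2^d+1$, and $g\mid 2^{2d}-1$ splits as $g=(g,2^d-1)\cdot(g,2^d+1)=(g,2^d-1)\cdot(2^d+1)$. The cofactor $(g,2^d-1)$ divides both $2^d-1\mid 2^k-1$ and $2^k+1$, hence divides $2$, hence equals $1$; therefore $g=2^d+1=2^{(n,k)}+1$. The only delicate bookkeeping is the parity case split for $(n,2k)$ together with the two divisibility facts $x+1\mid x^b+1$ (for $b$ odd) and $x-1\mid x^a-1$; everything else is routine gcd manipulation.
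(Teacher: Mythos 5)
Your proposal is correct. Note, however, that the paper does not prove this lemma at all: it is quoted as a well-known fact with a citation to the number theory text of Erickson and Vazzana, so there is no in-paper argument to compare yours against. Your self-contained derivation is sound and complete: part (a) is the standard Euclidean descent on exponents via $2^n-1\equiv 2^{n\bmod k}-1\pmod{2^k-1}$; part (b) correctly reduces to (a) through $g\mid 2^k+1\mid 2^{2k}-1$ and the identity $(n,2k)=d\,(a,2b)=d$ or $2d$ according to the parity of $a=n/(n,k)$, with the even case finished by the coprime factorization $2^{2d}-1=(2^d-1)(2^d+1)$ and the divisibility $2^d+1\mid 2^k+1$ (valid because $b$ is odd when $a$ is even). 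The two small facts you lean on, namely $x-1\mid x^a-1$ and $x+1\mid x^b+1$ for odd $b$, together with $g\mid mn$, $(m,n)=1\implies g=(g,m)(g,n)$, are all standard, so nothing is missing.
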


 \begin{theorem}\label{number-power-bent}
 	Let $F(x)=x^{2^k+1}$ be a function in $\mathbb{F}_{2^n}[x]$, with even $n$ and some integer $k\geq 1$. Let $m=(n,k)$, $s=(n,2k)$ and $e=1$ if $n/m$ is odd and $e=2^m+1$ if $n/m$ is even. Then
 	\begin{itemize}
 		\item[(a)] $F$ is an $e$-to-$1$ function, 
 		\item[(b)] $F_\alpha$ is bent if and only if $\alpha\notin \mathrm{Im}(F)$.
 		\item[(c)] the number of bent components for $F$ is \(\frac{(e-1)(2^n-1)}{e}\), 
 		\item[(d)] the Walsh spectrum of $F$ is \(\{0,\pm 2^{(n+s)/2}\}\) if $e=1$, and \(\{0,\pm 2^{(n+s)/2},\pm 2^{n/2}\}\) if $e=2^m+1$.
 	\end{itemize}
 \end{theorem}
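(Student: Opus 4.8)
The plan is to reduce every part to a single linearized condition governing when a derivative of the quadratic component $F_\alpha(x)=Tr(\alpha x^{2^k+1})$ is constant, and then to translate that condition into membership in a cyclic subgroup of $\mathbb{F}_{2^n}^*$. First I would fix $\alpha\neq 0$ and compute the derivative. Expanding $(x+a)^{2^k+1}=x^{2^k+1}+ax^{2^k}+a^{2^k}x+a^{2^k+1}$, applying $Tr(y)=Tr(y^{2^{n-k}})$ and $x^{2^n}=x$ to the term $Tr(\alpha a x^{2^k})$, one obtains
\[
D_aF_\alpha(x)=Tr\!\left(L_\alpha(a)\,x\right)+Tr(\alpha a^{2^k+1}),\qquad L_\alpha(a)=(\alpha a)^{2^{n-k}}+\alpha a^{2^k}.
\]
Since this is affine in $x$, the derivative is balanced exactly when $L_\alpha(a)\neq 0$, so by Theorem~\ref{bent-thm} the component $F_\alpha$ is bent if and only if $L_\alpha(a)\neq 0$ for all nonzero $a$; equivalently $V(F_\alpha)=\ker L_\alpha$.

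The key algebraic step is to simplify the condition $L_\alpha(a)=0$. Raising $(\alpha a)^{2^{n-k}}=\alpha a^{2^k}$ to the power $2^k$ gives $\alpha a=\alpha^{2^k}a^{2^{2k}}$, and dividing by the nonzero $\alpha a$ yields $(\alpha\,a^{2^k+1})^{2^k-1}=1$, that is, $\alpha F(a)$ lies in the group $U$ of $(2^k-1)$-th roots of unity, which by Lemma~\ref{gcd}(a) has order $2^m-1$. Part (a) then falls out directly: the map $x\mapsto x^{2^k+1}$ on $\mathbb{F}_{2^n}^*$ is a homomorphism whose kernel has order $(2^n-1,2^k+1)=e$ by Lemma~\ref{gcd}(b), so it is $e$-to-$1$ and its image subgroup $H=\mathrm{Im}(F)\setminus\{0\}$ has order $(2^n-1)/e$. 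To obtain (b) I would show $U\subseteq H$: as subgroups of the cyclic group $\mathbb{F}_{2^n}^*$ this reduces to the divisibility $(2^m-1)\mid (2^n-1)/(2^m+1)$, which holds because $n/m$ even forces $2m\mid n$ and hence $(2^{2m}-1)\mid(2^n-1)$; when $e=1$ we have $H=\mathbb{F}_{2^n}^*$ and the inclusion is trivial. With $U\subseteq H$ secured, the existence of a nonzero $a$ with $L_\alpha(a)=0$ is equivalent to $\alpha H\cap U\neq\emptyset$, i.e. $\alpha\in UH=H$; thus $F_\alpha$ is bent precisely when $\alpha\notin H=\mathrm{Im}(F)\setminus\{0\}$, which is (b) since $\alpha\neq 0$.

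Part (c) is then pure counting: by (b) the non-bent components are indexed by $\alpha\in H$, of which there are $(2^n-1)/e$, so the number of bent components is $(2^n-1)-(2^n-1)/e=(e-1)(2^n-1)/e$. For (d) I would compute $\dim V(F_\alpha)=\dim\ker L_\alpha$ and feed it into Theorem~\ref{quadratic-nonlinearity}. Since $F$ is $e$-to-$1$ and $\alpha^{-1}U$ is a coset of $U$, the fibre count gives $|V(F_\alpha)|=e(2^m-1)+1$ when $\alpha\in H$ and $|V(F_\alpha)|=1$ when $\alpha\notin H$. When $e=2^m+1$ the first value simplifies to $2^{2m}$, so non-bent components have $\dim V(F_\alpha)=2m$ and bent ones $\dim V(F_\alpha)=0$; when $e=1$ every component has $\dim V(F_\alpha)=m$. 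A short gcd computation (writing $n=mn_1$, $k=mk_1$ with $(n_1,k_1)=1$, so $(n,2k)=m\,(n_1,2)$) shows $s=(n,2k)$ equals $m$ if $n/m$ is odd and $2m$ if $n/m$ is even, matching the two cases. Substituting into $W_{F_\alpha}(b)\in\{0,\pm 2^{(n+\dim V(F_\alpha))/2}\}$ then yields $\{0,\pm 2^{(n+s)/2}\}$ when $e=1$ and $\{0,\pm 2^{(n+s)/2},\pm 2^{n/2}\}$ when $e=2^m+1$, as claimed.

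The main obstacle I anticipate is the chain of exponent manipulations turning $L_\alpha(a)=0$ into the clean root-of-unity condition $(\alpha F(a))^{2^k-1}=1$, together with the tightly linked group-theoretic inclusion $U\subseteq H$, whose proof rests on the observation that $n/m$ even gives $2m\mid n$. Once these two facts are in place, parts (a), (c) and (d) are bookkeeping consequences of Lemma~\ref{gcd}, Remark~\ref{dimension-semi-bent} and Theorem~\ref{quadratic-nonlinearity}.
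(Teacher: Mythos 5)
Your proposal is correct, and it follows the same overall skeleton as the paper's proof: compute $D_aF_\alpha$, reduce constancy of the derivative to the root-of-unity condition $(\alpha a^{2^k+1})^{2^k-1}=1$, use Lemma~\ref{gcd} to get the $e$-to-$1$ structure, and feed $\dim V(F_\alpha)$ into Theorem~\ref{quadratic-nonlinearity}. Where you genuinely diverge is in how parts (b) and (d) are finished. For (b), the paper writes $\alpha=\zeta^r$, $\beta=\zeta^t$ for a primitive $\zeta$ and argues via exponent arithmetic that $e\mid r$, hence $\alpha^{(2^n-1)/e}=1$; this step is left rather terse (one has to observe that $e\mid r(2^k-1)$ together with $\gcd(e,2^k-1)=1$ forces $e\mid r$). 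You instead introduce the group $U$ of $(2^k-1)$-th roots of unity, prove the inclusion $U\subseteq H=\mathrm{Im}(F)\setminus\{0\}$ from $2m\mid n$, and conclude that solvability of $L_\alpha(a)=0$ is exactly $\alpha\in UH=H$; this is cleaner and makes the ``only those $\alpha$'' direction airtight. For (d), the paper exhibits $V_\alpha^*$ explicitly as a coset $\mu\mathbb{F}_{2^s}^*$ of the subfield $\mathbb{F}_{2^s}^*$, whereas you count $|V(F_\alpha)|=e\,|U|+1=e(2^m-1)+1$ by fibre-counting over the coset $\alpha^{-1}U$ and then check this equals $2^s$ using $s=m$ or $2m$ according to the parity of $n/m$; both computations agree. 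In short, your argument buys a more transparent group-theoretic justification of the characterization of non-bent components, at the small cost of having to verify the auxiliary inclusion $U\subseteq H$ and the identity $e(2^m-1)+1=2^s$, which the paper sidesteps by working with explicit subfields.
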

 
 \begin{proof} Let $S=\mathrm{Im}(F)\setminus\{0\}=\{\xi^{2^k+1}\mid \xi\in\mathbb{F}_{2^n}^*\}$. It can be easily shown that $S$ is a multiplicative subgroup of $\mathbb{F}_{2^n}^*$.
 	\begin{itemize}
 		\item[(a)] Clearly, $F$ maps $\mathbb{F}_{2^n}^*$ onto $S$. So we only need to show that $S$ has the order $(2^n-1)/e$. Now we need to find the order of $S$.  First observe that every element $\zeta$ in $S$ satisfies $\zeta^{(2^n-1)/e}=1$, where $e=(2^n-1,2^k+1)$. By Lemma~\ref{gcd}, $e=1$ if $n/m$ is odd and $e=2^m+1$ if $n/m$ is even. If $\nu$ is a primitive element in $\mathbb{F}_{2^n}$, then the order of $\nu^{2^k+1}$ is $\mathrm{ord}(\nu^{2^k+1})=\mathrm{ord}(\nu^e)=(2^n-1)/e$. Clearly, $\nu^{2^k+1}$ has the highest order in $S$. It is well-known that $\mathbb{F}_{2^n}^*$ is cyclic group, so $S$ being a subgroup must be cyclic with $\nu^{2^k+1}$ as a generator.  Thus, it follows that the order of $S$ is $(2^n-1)/e$, implying that $F$ is an $e$-to-$1$ function. 
 		
 		\item[(b)] It is equivalent to show that $F_\alpha$ is non-bent if and only if $\alpha\in \mathrm{Im}(F)$. $F_\alpha$ is bent if its linear space is trivial, so we need to prove that the dimension of the linear space of $F_\alpha$ is non-trivial, that is,  $\dim V_\alpha \geq 1$ if and only if $\alpha\in \mathrm{Im}(F)$. 
 		
 		A component $F_\alpha$, with $\alpha\in \mathbb{F}_{2^n}$, is non-bent if there exists $\beta$ in $\mathbb{F}_{2^n}^*$ such that $D_\beta F_\alpha$ is constant. Suppose that $F_\alpha$, with $\alpha\in\mathbb{F}_{2^n}^*$, is non-bent and $D_\beta F_\alpha$ is constant, with $\beta\in\mathbb{F}_{2^n}$. So we have
 		
 		\begin{align}\label{derivative-compoment-eq}
 		D_\beta F_\alpha(x)&=F_\alpha(x)+F_\alpha(x+\beta)=Tr\left(\alpha x^{2^k+1}\right)+Tr\left(\alpha(x+\beta)^{2^k+1}\right)\nonumber\\&=Tr\left(\alpha x^{2^k+1}\right)+Tr\left(\alpha(x^{2^k}+\beta^{2^k})(x+\beta)\right)\nonumber\\&=Tr\left(\alpha x^{2^k+1}\right)+Tr\left(\alpha(x^{2^k+1}+\beta x^{2^k}+\beta^{2^k}x+\beta^{2^k+1})\right)\nonumber\\&=Tr\left(\alpha\beta x^{2^k}\right)+Tr\left(\alpha\beta^{2^k}x\right)+Tr\left(\alpha \beta^{2^k+1}\right)\nonumber\\&=Tr\left((\alpha\beta+\alpha^{2^k}\beta^{2^{2k}})x^{2^k}\right)+Tr\left(\alpha \beta^{2^k+1}\right).
 		\end{align}
 		Observe that $D_\beta F_\alpha$ is constant if and only if, in Equation \eqref{derivative-compoment-eq}, we have \[Tr\left((\alpha\beta+\alpha^{2^k}\beta^{2^{2k}})x^{2^k}\right)=0.\] This happens if and only if \[\alpha\beta+\alpha^{2^k}\beta^{2^{2k}}=\alpha\beta\left(1+\alpha^{2^k-1}\beta^{2^{2k}-1}\right)=0.\]  So either $\beta=0$ or \begin{align}\label{eqn}\alpha^{2^k-1}\beta^{2^{2k}-1}=(\alpha\beta^\ell)^{2^k-1}=1,\end{align} with $\ell=\frac{2^{2k}-1}{2^k-1}=2^k+1$. Suppose that $\zeta$ is a primitive element in $\mathbb{F}_{2^n}$. Then we can write $\alpha=\zeta^r$ and $\beta=\zeta^t$, for some integers $r$ and $t$. So it follows that Equation \eqref{eqn} becomes $\zeta^{(r+t\ell)(2^k-1)}=1$ which implies that \[(r+t\ell)(2^k-1)=r(2^k-1)+t(2^{2k}-1)=c(2^n-1),\] for some integer $c$. Thus, we have\\
 		
 		\(r=\frac{c(2^n-1)}{2^k-1}-\frac{t(2^{2k}-1)}{2^k-1}=\frac{c(2^n-1)}{2^k-1}-t(2^k+1)=e\left(\frac{c(2^n-1)}{e(2^k-1)}-\frac{t(2^k+1)}{e}\right).\)\\
 		
 		Recall that $e=(2^n-1,2^k+1)$.   So all $\alpha$'s which satisfy $(\alpha\beta^\ell)^{2^k-1}=1$ must be those which satisfy $\alpha^{(2^n-1)/e}=1$. These are elements whose orders are divisors of $(2^n-1)/e$. It implies that $\alpha\in S$. Including $\alpha=0$, it follows that $F_\alpha$ has a non-trivial linear space if and only if $\alpha\in\mathrm{Im}(F)$.
 		
 		\item[(c)] By part~(b), we deduce that the number of bent components is $2^n-|\mathrm{Im}(F)|$. Since $|\mathrm{Im}(F)|=1+|S|=1+(2^n-1)/e$, then the number of bent components is \[2^n-|\mathrm{Im}(F)|=\frac{(e-1)(2^n-1)}{e}.\] 
 		
 		\item[(d)] We first determine $V_\alpha$, for any $\alpha\in\mathbb{F}_{2^n}^*$, and then use Theorem \ref{quadratic-nonlinearity} to deduce the Walsh spectrum of $F$. In part~(b), we showed that $V_\alpha=\{0\}$ if $\alpha\notin \mathrm{Im}(F)$ (i.e., $F_\alpha$ is bent) and $|V_\alpha|>1$ if $\alpha\in \mathrm{Im}(F)$. For any  $\alpha\in S=\mathrm{Im}(F)\setminus\{0\}$, we also showed, in part~(b), that $D_\beta F_\alpha$ is constant if either $\beta$ is equal to $0$ or satisfies $(\alpha\beta^{2^k+1})^{2^k-1}=1$. Thus, we have $\beta^{2^{2k}-1}=(\alpha^{-1})^{2^k-1}$. If $\alpha =1$, then $\beta\in \mathbb{F}_{2^s}^*$, with $s=(2k,n)$ and if $\alpha\neq 1$, then $\beta \in \mu\mathbb{F}_{2^s}^*$, where $\mu$ is $\ell$-th root of $\alpha^{-1}$.  So it follows that $|V_\alpha|=2^s$.
 		
 		Given that $m=(n,k)$, by Lemma \ref{gcd}, we have $e=1$ if $n/m$ is odd and $e=2^m+1$ if $n/m$ is even. If $e=1$ then, by part (a), $F$ is a permutation which implies that it has no bent components and so we have $|V_\alpha|=2^s$, for all $\alpha\in\mathbb{F}_{2^n}^*$. This implies that its Walsh spectrum of $F$ is $\{0,\pm 2^{(n+s)/2}\}$ (see Theorem~\ref{quadratic-nonlinearity}). If $e=2^m+1$, then $F$ contains bent components and as shown above, all the linear spaces of non-bent components have the same order $2^s$, implying that the Walsh spectrum of $F$ is $\{0,\pm 2^{(n+s)/2},\pm 2^{n/2}\}$.
 	\end{itemize}  
 \end{proof}

 \begin{corollary}\label{power-quadratic-APN}
 	Let $F(x)=x^{2^k+1}$ be a power polynomial in $\mathbb{F}_{2^n}[x]$, with positive integers $n$ and $k\geq 1$ and let $e=(2^n-1,2^k+1)$ and $s=(n,2k)$. Then $F$ is APN if and only if $e=3$ and $s=2$. 
 	Equivalently, $F$ is APN if and only if there are exactly $2(2^n-1)/3$ bent components and the rest semi-bent.
 \end{corollary}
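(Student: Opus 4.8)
The plan is to read off all the needed structure from Theorem~\ref{number-power-bent} and feed it into the APN criterion of Proposition~\ref{quadratic-APN-subspace}, so that the whole statement reduces to a single arithmetic identity. Throughout $n$ is even, as inherited from Theorem~\ref{number-power-bent}. The key observation I would isolate first is that, regardless of whether $e=1$ or $e=2^m+1$ (with $m=(n,k)$), Theorem~\ref{number-power-bent} tells us that the non-bent components of $F$ are exactly the $(2^n-1)/e$ components $F_\alpha$ with $\alpha\in\mathrm{Im}(F)$, and each of these has $\dim V_\alpha=s$, while the bent components have $\dim V_\alpha=0$. Hence the left-hand side of the APN criterion collapses to a single product.

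Concretely, substituting this into Proposition~\ref{quadratic-APN-subspace} I get that $F$ is APN if and only if
\[
\frac{2^n-1}{e}\,\bigl(2^{s}-1\bigr)=2^n-1,
\]
that is, if and only if $2^{s}-1=e$. Now I would combine this with Lemma~\ref{gcd}, which says $e=1$ or $e=2^m+1$, together with the two divisibility facts $m=(n,k)\mid (n,2k)=s$ and (since $n$ is even) $2\mid s$. If $e=1$ then $2^{s}=2$ forces $s=1$, contradicting $2\mid s$; so $e=2^m+1$ and the identity becomes $2^{s}-2^{m}=2$, i.e. $2^{m}\,(2^{s-m}-1)=2$. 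As $2^{s-m}-1$ is odd, the power of two on the left is exactly $2^{m}$, forcing $m=1$, and then $2^{s-1}-1=1$ gives $s=2$, whence $e=3$. The converse is immediate: if $e=3$ and $s=2$ then the displayed identity holds, so $F$ is APN.

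Finally, for the reformulation I would invoke Theorem~\ref{number-power-bent}(c)--(d) and Corollary~\ref{quadratic-bent-semi-bent}: when $e=3$ and $s=2$ the number of bent components is $(e-1)(2^n-1)/e=2(2^n-1)/3$, and every non-bent component has $\dim V_\alpha=2$, hence Walsh amplitude $2^{(n+2)/2}$ and so is semi-bent; conversely, a quadratic with only bent and semi-bent components is APN exactly when it has $2(2^n-1)/3$ bent components by Proposition~\ref{quadratic-APN-bent-semi-bent}. Chaining these gives the equivalence of APN-ness, the condition $e=3,\,s=2$, and the stated bent/semi-bent count.

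The main obstacle is essentially the bookkeeping that produces the clean identity $2^{s}-1=e$: one must be sure that every non-bent component really carries the same linear-space dimension $s$ and that their number is $(2^n-1)/e$, so that the sum in Proposition~\ref{quadratic-APN-subspace} telescopes to one term. Once that identity is in place, the number-theoretic step of solving $2^{s}-2^{m}=2$ under $m\mid s$ is elementary and pins down $(m,s,e)=(1,2,3)$ uniquely.
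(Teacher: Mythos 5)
Your proposal is correct and follows essentially the same route as the paper: both reduce, via Theorem~\ref{number-power-bent} and Proposition~\ref{quadratic-APN-subspace}, to the single identity $\frac{2^n-1}{e}\,(2^s-1)=2^n-1$, i.e.\ $e=2^s-1$, and then read off the bent/semi-bent reformulation from Theorem~\ref{number-power-bent}(c). The only difference is the arithmetic endgame: the paper deduces $s=2$ by introducing $t=(k,n/2)$ and using the coprimality of $2^k-1$ and $2^k+1$, whereas you invoke Lemma~\ref{gcd} to write $e=2^m+1$ and solve $2^s-2^m=2$ under $m\mid s$ and $2\mid s$, which is an equally valid and arguably more direct finish.
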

 
 \begin{proof}
 	By Theorem \ref{number-power-bent}, there are $(2^n-1)/e$ (non-trivial) non-bent components for $F$ and their linear spaces have the same order $2^s$. Since $n$ is even then $s=2t$, where $t=(k,n/2)$.  Thus, by Proposition \ref{quadratic-APN-subspace}, $F$ is APN if and only if \begin{align}\label{power-quadratic-APN-eqn}\left(\frac{2^n-1}{e}\right)(2^s-1)=2^n-1.\end{align}
 	Since Equation \eqref{power-quadratic-APN-eqn} holds if and only if $e=2^s-1$, then we conclude that $(2^s-1)|(2^k+1)$. Since $t|s$ then $(2^t-1)|(2^s-1)$, implying that $(2^t-1)|(2^k+1)$. But also $(2^t-1)|(2^k-1)$ (recall that $t|k$), so it implies that we must have $t=1$ as clearly $2^k-1$ and $2^k+1$ are  relatively prime. Observe that $t=1$ implies $s=2$, so it follows that $F$ is APN if and only if $s=2$ and $e=2^s-1=3$. In other words, $F$ is APN if and only if the number of bent components is exactly \(2(2^n-1)/3\) and the other components are semi-bent. 
 \end{proof}
 
 From Theorem \ref{number-power-bent}, we observe that a quadratic power function has some bent components if $e\geq 3$ and equality gives the lowest number of bent components we can get and also when $F$ is APN. So we state this in the following.
 
 \begin{corollary}
 	If a quadratic power function, in even dimension, has some bent components, then they are at least $2(2^n-1)/3$.
 \end{corollary}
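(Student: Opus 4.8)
The plan is to read the count of bent components directly off Theorem~\ref{number-power-bent}(c) and then exploit its monotonicity in the parameter $e$. Writing $F(x)=x^{2^k+1}$ and $m=(n,k)$, part~(c) of that theorem expresses the number of bent components as $\frac{(e-1)(2^n-1)}{e}$, where, by Lemma~\ref{gcd}, $e=1$ when $n/m$ is odd and $e=2^m+1$ when $n/m$ is even.

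First I would note that $F$ has some bent components precisely when $\frac{(e-1)(2^n-1)}{e}>0$, that is, when $e>1$; by the dichotomy above this forces $n/m$ to be even and $e=2^m+1$ for some divisor $m\geq 1$ of $n$. Next I would rewrite the count as $\frac{(e-1)(2^n-1)}{e}=(2^n-1)\left(1-\tfrac{1}{e}\right)$, which is strictly increasing in $e$. Hence the smallest possible number of bent components, taken over all admissible $e>1$, is attained at the least admissible value of $e$. Since $e=2^m+1$ with $m\geq 1$, this least value is $e=3$, reached at $m=1$; substituting $e=3$ gives exactly $\frac{2(2^n-1)}{3}$.

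There is essentially no obstacle here beyond bookkeeping. The only points that need care are verifying that every admissible $e$ has the form $2^m+1$ with $m\geq 1$, and that $e=3$ is genuinely realizable in even dimension so that the bound is sharp; both follow from Lemma~\ref{gcd} together with the hypothesis that $n$ is even (for instance, any $k$ with $(n,k)=1$ yields $m=1$ and hence $e=3$). Consequently, whenever a quadratic power function in even dimension has bent components, their number is at least $\frac{2(2^n-1)}{3}$, with equality realized by the Gold-type (indeed APN) functions of Corollary~\ref{power-quadratic-APN}.
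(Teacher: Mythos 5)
Your proof is correct and follows essentially the same route as the paper: both read the count $\frac{(e-1)(2^n-1)}{e}$ from Theorem~\ref{number-power-bent}(c), observe that having bent components forces $e=2^m+1\geq 3$, and minimize the (increasing-in-$e$) expression at $e=3$. Your write-up is in fact more explicit than the paper's one-line justification, but the underlying argument is identical.
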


\section*{Acknowledgement} The results in this paper appear in the first author's PhD thesis supervised by the second author.


\begin{thebibliography}{8}
\bibitem{Ber} Berger, T.-P., Canteaut, A., Charpin, P., Laigle-Chapuy. Y.: On almost perfect nonlinear functions over $\mathbb{F}_2^n$. IEEE Trans. Inf. Theory \textbf{52}(9),(2006), 4160-4170.

\bibitem{Bra} Braeken, A., Borissov, Y., Nikova, S., Preneel B.: Classification of cubic $(n-4)$-resilient Boolean functions. IEEE Transactions on Information Theory \textbf{52}(4), (2006), 1670-1676.

\bibitem{Bud} Budaghyan, L., Helleseth, T., Li, N., Sun B.: Some Results on the Known Classes of Quadratic APN Functions. In: El Hajji, S., Nitaj, A., Souidi, E. (eds) Codes, Cryptology and Information Security, C2SI 2017, vol 10194, pp 3-16. Springer, Cham (2017).

\bibitem{Cal} Calderini, M., Sala, M., Villa I.: A note on APN permutations in even dimension, Finite Fields and Their Applications, \textbf{46}, (2017), 1-6.

\bibitem{Car1} Carlet, C.: Vectorial Boolean Functions for Cryptography. In Crama, Y., Peter L. Hammer, P.L. (eds.): {\em Boolean models and methods in mathematics, computer science and engineering}, vol 2, pp 398-470 Cambridge Univ. Press, Cambridge (2010).

\bibitem{Car4} Carlet, C.: Boolean functions for cryptography and error correcting codes. In Crama, Y. and Hammer, P.L. (Eds.): {\em Boolean Models and Methods in Mathematics, Computer Science, and Engineering}, Cambridge,U.K.: Cambridge Univ. Press, (2010), pp. 257–397. Available at:
http://www.math.univ-paris13.fr/~carlet/pubs.html

\bibitem{Chak} Chakrabarty K. and  Hayes J.P.: Balanced Boolean functions, {\em IEE  Proc-Comput. Digit.  Tech.},  145, 1 (1998), 52-62. 

\bibitem{Chee} Chee, S., Lee, S., Kim K.: Semi-bent Functions. In: Pieprzyk, J., Safavi-Naini, R. (eds.) Advances in Cryptology-ASIACRYPT'94. Proc. 4th Int. Conf. on the Theory and Applications of Cryptology, vol 917, pp 107-118. Springer,  Wollongong.(1994).

\bibitem{Cus} Cusick TW.: Affine equivalence of cubic homogeneous rotation symmetric functions. Inform Sci 181, (2011), 5067–83.

\bibitem{Cus1} Cusick, T. W., Cheon, Y.: Counting balanced Boolean functions in $n$ variables with bounded degree, {\em Exp. Math.}, 16, 1, (2007), 101-105.

\bibitem{Eric} Erickson, M., Vazzana, A.: Introduction to number theory. Chapman \& Hall/CRC, 1st edition, 2007.

\bibitem{Kho} Khoo, K., Gong, G.:  New Construction for Balanced Boolean Functions with Very High Nonlinearity. {\em IEICE Trans. Fundamentals,}  Vol.E90-A   No.1 (2007), 29-35

\bibitem{Mac} MacWilliams, F.-J., Sloane, N.-J.-A.: The Theory of Error-Correcting Codes. Elsevier, New York (1977).

\bibitem{Nyb} Nyberg, K.: Perfect non-linear S-boxes. In: {\em Proceedings of EUROCRYPT'91,} Lecture Notes in Computer Science 547, pp. 378-386, 1992. 

\bibitem{Pott} Pott, A., Pasalic, E., Muratovi\'c-Ribi\'c, A., Bajri\'c S.: On the Maximum Number of Bent Components of Vectorial Functions. IEEE Transactions on Information Theory \textbf{64}(1), (2018), 403-411.  

\bibitem{Mes} Mesnager, S., Zhang, F., Tang, C., Zhou,Y.: Further study on the maximum number of bent components of vectorial functions. CoRR, abs/1801.06542, (2018). 

\bibitem{Seb} Seberry J., Zhang XM., Zheng Y. Nonlinearly Balanced Boolean Functions and Their Propagation Characteristics. In: Stinson D.R. (Eds) {\em Advances in Cryptology-CRYPTO 93. CRYPTO'1993}. Lecture Notes in Computer Science, 773, (1994). Springer, Berlin, Heidelberg.

\bibitem{Wu} Wu, C., Feng, D.: Boolean Functions and Their Applications in Cryptography. Springer, New York (2016).

\bibitem{Yu} Yu, Y., Wang, M., Li, Y.: A matrix approach for constructing quadratic APN functions. Des. Codes Crypt. 73(27), (2014), 587-600.
\end{thebibliography}
\end{document}